\newcommand{\sertit}{Slit-strip Ising boundary conformal field theory}
\newcommand{\shorttit}{Slit-strip Ising BCFT}
\newcommand{\partnum}{1}
\newcommand{\parttit}{Discrete and continuous function spaces}
\numberwithin{equation}{section}
\numberwithin{figure}{section}
\theoremstyle{plain}
\newtheorem{thm}{\protect\theoremname}
\newtheorem*{thm*}{\protect\theoremname}
\newtheorem{prop}[thm]{\protect\propositionname}
\newtheorem*{prop*}{\protect\propositionname}
\newtheorem{lem}[thm]{\protect\lemmaname}
\newtheorem{cor}[thm]{\protect\corollaryname}
\newtheorem{rmk}[thm]{\protect\remarkname}
\numberwithin{thm}{section}
\definecolor{kallecol}{rgb}{.99,.1,.5}
\definecolor{davidcol}{rgb}{.5,.1,.99}
\definecolor{sketchcol}{rgb}{.4,.4,.8}
\definecolor{outlinecol}{rgb}{.8,.4,.3}
\newcommand{\sketch}[1]{{\color{sketchcol}{#1}}}
\newcommand{\term}[1]{{\bf #1}} 
\newcommand{\lftsym}{\mathrm{L}}
\newcommand{\rgtsym}{\mathrm{R}}
\newcommand{\topsym}{\mathrm{T}}
\newcommand{\wildsym}{\star}
\newcommand{\wildsymbis}{\star'}
\newcommand{\ii}{\mathbbm{i}}
\newcommand{\re}{\Re\mathfrak{e}}
\newcommand{\im}{\Im\mathfrak{m}}
\newcommand{\eps}{\varepsilon}
\newcommand{\bdry}{\partial}
\newcommand{\domain}{\Omega}
\newcommand{\tccw}{\tau}
\newcommand{\innprod}[2]{\langle #1 , #2 \rangle }
\newcommand{\projtoline}[1]{\mathrm{pr}_{#1}}
\newcommand{\half}{\frac{1}{2}}
\newcommand{\mhalf}{\frac{-1}{2}}
\newcommand{\phalf}{\frac{+1}{2}}
\newcommand{\dist}{\mathrm{dist}}
\newcommand{\const}{\mathrm{const.}}
\newcommand{\aaa}{-\frac{1}{2}}
\newcommand{\bbb}{\frac{1}{2}}
\newcommand{\OO}{\mathcal{O}}
\newcommand{\Est}{\mathrm{E}}
\newcommand{\Wst}{\mathrm{W}}
\newcommand{\Nth}{\mathrm{N}}
\newcommand{\Sth}{\mathrm{S}}
\newcommand{\eighthroot}{\lambda}
\newcommand{\eighthrootbar}{\lambda^{-1}}
\newcommand{\eighthrootthree}{\lambda^3}
\newcommand{\eighthrootfive}{\lambda^{-3}}
\newcommand{\dmn}{\mathrm{dim}}
\newcommand{\spn}{\mathrm{span}}
\newcommand{\C}{\mathbb{C}} 
\newcommand{\R}{\mathbb{R}} 
\newcommand{\Z}{\mathbb{Z}} 
\newcommand{\N}{\mathbb{N}} 
\newcommand{\UHP}{\mathbb{H}} 
\newcommand{\bC}{\C} 
\newcommand{\bR}{\R} 
\newcommand{\bZ}{\Z} 
\newcommand{\bN}{\N} 
\newcommand{\bH}{\UHP} 
\newcommand{\ud}{\mathrm{d}} 
\newcommand{\width}{\ell}
\newcommand{\widthL}{\ell^\lftsym}
\newcommand{\widthR}{\ell^\rgtsym}
\newcommand{\cstrip}{{\mathbb{S}}}
\newcommand{\dstrip}{{\cstrip^{(\width)}}}
\newcommand{\slit}{\mathrm{slit}}
\newcommand{\cslitstrip}{\cstrip_{\slit}}
\newcommand{\dslitstrip}{{\cstrip_{\slit}^{(\width)}}}
\newcommand{\dslitstripW}[1]{{\cstrip_{\slit}^{(#1)}}}
\newcommand{\dslitstripT}{{\cstrip_{\slit}^{\topsym;(\width)}}}
\newcommand{\dslitstripL}{{\cstrip_{\slit}^{\lftsym;(\width)}}}
\newcommand{\dslitstripR}{{\cstrip_{\slit}^{\rgtsym;(\width)}}}
\newcommand{\dE}{\mathrm{E}}
\newcommand{\dV}{\mathrm{V}}
\newcommand{\poshalfint}{\mathcal{K}}
\newcommand{\dposhalfint}{\mathcal{K}^{(\width)}}
\newcommand{\dposhalfintW}[1]{\mathcal{K}^{(#1)}}
\newcommand{\dposhalfintL}{\dposhalfintW{\widthL}}
\newcommand{\dposhalfintR}{\dposhalfintW{\widthR}}
\newcommand{\ccffun}{e}
\newcommand{\ccffunL}{\ccffun^{\lftsym}}
\newcommand{\ccffunR}{\ccffun^{\rgtsym}}
\newcommand{\ccffunT}{\ccffun^{\topsym}}
\newcommand{\ccffunX}{\ccffun^{{\wildsym}}}
\newcommand{\ccffunXbis}{\ccffun^{{\wildsymbis}}}
\newcommand{\ccfFun}{E}
\newcommand{\ccfFunL}{\ccfFun^{\lftsym}}
\newcommand{\set}[1]{\left\{ #1 \right\}}
\newcommand{\lft}{a}
\newcommand{\rgt}{b}
\newcommand{\mdpt}{0}
\newcommand{\lftd}{\lft'}
\newcommand{\rgtd}{\rgt'}
\newcommand{\crosssec}{I}
\newcommand{\crosssecdual}{\crosssec^*}
\newcommand{\crosssecLdual}{\crosssec_{\lftsym}^*}
\newcommand{\crosssecRdual}{\crosssec_{\rgtsym}^*}
\newcommand{\dinterval}[2]{\llbracket #1 , #2 \rrbracket}
\newcommand{\dintervaldual}[2]{\llbracket #1 , #2 \rrbracket^*}
\newcommand{\ccrosssec}{\mathcal{I}}
\newcommand{\geneigval}{\boldsymbol{\mu}}
\newcommand{\vaceigvalLR}%
	{\geneigval^{(\widthL,\widthR)}_{\emptyset,\emptyset}}
\newcommand{\parts}{\boldsymbol{\alpha}}
\newcommand{\pDeg}[1]{\boldsymbol{d}(#1)}
\newcommand{\pLen}[1]{\boldsymbol{s}(#1)}
\newcommand{\poleToMixSym}{B}
\newcommand{\mixToPoleSym}{A}
\newcommand{\poleToMixT}[2]{\poleToMixSym^\topsym_{#1 , #2}}
\newcommand{\poleToMixL}[2]{\poleToMixSym^\lftsym_{#1 , #2}}
\newcommand{\poleToMixR}[2]{\poleToMixSym^\rgtsym_{#1 , #2}}
\newcommand{\mixToPoleT}[2]{\mixToPoleSym^\topsym_{#1 , #2}}
\newcommand{\mixToPoleL}[2]{\mixToPoleSym^\lftsym_{#1 , #2}}
\newcommand{\mixToPoleR}[2]{\mixToPoleSym^\rgtsym_{#1 , #2}}
\newcommand{\functionsp}{\mathscr{F}}
\newcommand{\cfunctionsp}{\mathscr{L}^2}
\newcommand{\cfspTzero}{\cfunctionsp_{\topsym;\mathrm{zero}}}
\newcommand{\cfspTpole}{\cfunctionsp_{\topsym;\mathrm{pole}}}
\newcommand{\cfspLzero}{\cfunctionsp_{\lftsym;\mathrm{zero}}}
\newcommand{\cfspLpole}{\cfunctionsp_{\lftsym;\mathrm{pole}}}
\newcommand{\cfspRzero}{\cfunctionsp_{\rgtsym;\mathrm{zero}}}
\newcommand{\cfspRpole}{\cfunctionsp_{\rgtsym;\mathrm{pole}}}
\newcommand{\cprTzero}{\Pi_{\topsym;\mathrm{zero}}}
\newcommand{\cprTpole}{\Pi_{\topsym;\mathrm{pole}}}
\newcommand{\cprLzero}{\Pi_{\lftsym;\mathrm{zero}}}
\newcommand{\cprLpole}{\Pi_{\lftsym;\mathrm{pole}}}
\newcommand{\cprRzero}{\Pi_{\rgtsym;\mathrm{zero}}}
\newcommand{\cprRpole}{\Pi_{\rgtsym;\mathrm{pole}}}
\newcommand{\dprTzero}{\Pi_{\topsym;\mathrm{zero}}^{(\width)}}
\newcommand{\dprTpole}{\Pi_{\topsym;\mathrm{pole}}^{(\width)}}
\newcommand{\dprLzero}{\Pi_{\lftsym;\mathrm{zero}}^{(\width)}}
\newcommand{\dprLpole}{\Pi_{\lftsym;\mathrm{pole}}^{(\width)}}
\newcommand{\dprRzero}{\Pi_{\rgtsym;\mathrm{zero}}^{(\width)}}
\newcommand{\dprRpole}{\Pi_{\rgtsym;\mathrm{pole}}^{(\width)}}
\newcommand{\deigvalsym}{\Lambda}
\newcommand{\ceigvalsym}{\Lambda}
\newcommand{\eigval}[1]{\deigvalsym_{#1}^{(\width)}}
\newcommand{\eigvalW}[2]{\deigvalsym_{#2}^{(#1)}}
\newcommand{\eigf}[1]{\mathfrak{f}_{#1}}
\newcommand{\eigfL}[1]{\mathfrak{f}_{\lftsym;#1}}
\newcommand{\eigfR}[1]{\mathfrak{f}_{\rgtsym;#1}}
\newcommand{\eigfT}[1]{\mathfrak{f}_{\topsym;#1}}
\newcommand{\eigfX}[1]{\mathfrak{f}_{{\wildsym};#1}}
\newcommand{\eigfXbis}[1]{\mathfrak{f}_{{\wildsymbis};#1}}
\newcommand{\eigF}[1]{\mathfrak{F}_{#1}}
\newcommand{\eigFL}[1]{\mathfrak{F}_{\lftsym;#1}}
\newcommand{\eigFR}[1]{\mathfrak{F}_{\rgtsym;#1}}
\newcommand{\poleL}[1]{\mathfrak{p}_{\lftsym;#1}}
\newcommand{\poleR}[1]{\mathfrak{p}_{\rgtsym;#1}}
\newcommand{\poleT}[1]{\mathfrak{p}_{\topsym;#1}}
\newcommand{\poleX}[1]{\mathfrak{p}_{{\wildsym};#1}}
\newcommand{\poleXbis}[1]{\mathfrak{p}_{{\wildsymbis};#1}}
\newcommand{\PoleL}[1]{\mathfrak{P}_{\lftsym;#1}}
\newcommand{\PoleR}[1]{\mathfrak{P}_{\rgtsym;#1}}
\newcommand{\PoleT}[1]{\mathfrak{P}_{\topsym;#1}}
\newcommand{\cmixPoleL}[1]{\tilde{{P}}^{\lftsym}_{#1}}
\newcommand{\cmixPoleR}[1]{\tilde{{P}}^{\rgtsym}_{#1}}
\newcommand{\cmixPoleT}[1]{\tilde{{P}}^{\topsym}_{#1}}
\newcommand{\cpoleL}[1]{{p}^{\lftsym}_{#1}}
\newcommand{\cpoleR}[1]{{p}^{\rgtsym}_{#1}}
\newcommand{\cpoleT}[1]{{p}^{\topsym}_{#1}}
\newcommand{\cpoleX}[1]{{p}^{{\wildsym}}_{#1}}
\newcommand{\cpoleXbis}[1]{{p}^{{\wildsymbis}}_{#1}}
\newcommand{\cPoleL}[1]{{P}^{\lftsym}_{#1}}
\newcommand{\cPoleR}[1]{{P}^{\rgtsym}_{#1}}
\newcommand{\cPoleT}[1]{{P}^{\topsym}_{#1}}
\newcommand{\dfunctionsp}{\functionsp^{(\width)}}
\newcommand{\dfunctionspW}[1]{\functionsp^{(#1)}}
\newcommand{\dfunctionspL}{\functionsp^{(\width)}_{\lftsym}}
\newcommand{\dfunctionspR}{\functionsp^{(\width)}_{\rgtsym}}
\newcommand{\dfspTzero}{\functionsp_{\topsym;\mathrm{zero}}^{(\width)}}
\newcommand{\dfspTpole}{\functionsp_{\topsym;\mathrm{pole}}^{(\width)}}
\newcommand{\dfspLzero}{\functionsp_{\lftsym;\mathrm{zero}}^{(\width)}}
\newcommand{\dfspLpole}{\functionsp_{\lftsym;\mathrm{pole}}^{(\width)}}
\newcommand{\dfspRzero}{\functionsp_{\rgtsym;\mathrm{zero}}^{(\width)}}
\newcommand{\dfspRpole}{\functionsp_{\rgtsym;\mathrm{pole}}^{(\width)}}
\newcommand{\refl}{\mathsf{R}}
\newcommand{\propag}{\mathsf{A}}
\newcommand{\iis}[1]{\im \left(\int_\times {#1}^2 \right)}
\newcommand{\wgt}{\mathbf{w}}
\newcommand{\confmap}{\varphi}
\newcommand{\mapSS}{\confmap}
  \providecommand{\corollaryname}{Corollary}
  \providecommand{\definitionname}{Definition}
  \providecommand{\lemmaname}{Lemma}
  \providecommand{\propositionname}{Proposition}
  \providecommand{\remarkname}{Remark}
\providecommand{\theoremname}{Theorem}
  \providecommand{\corollaryname}{Corollary}
  \providecommand{\definitionname}{Definition}
  \providecommand{\lemmaname}{Lemma}
  \providecommand{\propositionname}{Proposition}
  \providecommand{\remarkname}{Remark}
\providecommand{\theoremname}{Theorem}
\providecommand{\corollaryname}{Corollary}
  \providecommand{\definitionname}{Definition}
  \providecommand{\lemmaname}{Lemma}
  \providecommand{\propositionname}{Proposition}
  \providecommand{\remarkname}{Remark}
\providecommand{\theoremname}{Theorem}
\providecommand{\corollaryname}{Corollary}
\providecommand{\definitionname}{Definition}
\providecommand{\lemmaname}{Lemma}
\providecommand{\propositionname}{Proposition}
\providecommand{\remarkname}{Remark}
\providecommand{\theoremname}{Theorem}
\begin{document}

\title[\shorttit{} \partnum{}]%
{{\large\scshape\bfseries \sertit{} \partnum{}: \\
\parttit{}}}

\author[Ameen \& Kyt\"ol\"a \& Park \& Radnell]%
{Taha Ameen, Kalle Kyt\"ol\"a, S.C. Park, and David Radnell}

\address{Department of Electrical and Computer Engineering, University of Illinois at Urbana-Champaign, Urbana, IL}

\email{tahaa3@illinois.edu}

\address{School of Mathematics, Korea Institute of Advanced Study, 85 Hoegi-ro, 
Dongdaemun-gu, Seoul 02455, Republic of Korea}

\email{scpark@kias.re.kr}

\address{Department of Mathematics and Systems Analysis, Aalto University,
P.O. Box 11100, FI-00076 Aalto, Finland}

\email{kalle.kytola@aalto.fi}

\email{david.radnell@aalto.fi}

\begin{abstract}
	This is the first in a series of 
	articles about
	recovering the full algebraic structure of
	a boundary conformal field theory (CFT) from
	the scaling limit of the critical Ising model in slit-strip 
	geometry. Here, we introduce spaces of holomorphic functions in
	continuum domains as well as corresponding spaces of discrete 
	holomorphic functions in lattice domains.
	We find distinguished sets of functions
	characterized by their singular behavior in the three 
	infinite directions in the slit-strip domains, and note in particular that natural subsets of these functions span analogues
	of Hardy spaces.
	We prove convergence results of the distinguished 
	discrete holomorphic functions to the continuum ones.
	In the subsequent articles,
	the discrete holomorphic functions will be used for the
	calculation of the Ising model fusion coefficients
	(as well as for the 
	diagonalization of the Ising transfer matrix),
	and the convergence of the functions is used to prove the 
	convergence of the fusion coefficients.
	It will also be shown that
	the vertex operator algebra of the boundary conformal field theory
	can be recovered from 
	the limit of the fusion coefficients via geometric 
	transformations involving the distinguished continuum
	functions.

\end{abstract}

\maketitle

\section{Introduction}%
\label{sec: intro}

\subsection*{Conformal invariance results about the Ising model scaling 
limit}

We have witnessed a breakthrough in 
the mathematically precise understanding of the
conformal invariance properties of the critical planar
Ising model, following the discrete complex analysis ideas
pioneered by Smirnov~\cite{Smirnov-towards_conformal_invariance}.
The conformal invariance properties
arise in the scaling limit of the lattice model, upon
zooming out so that the lattice mesh tends to zero.

One facet of the progress has been advances in the random
geometry description of the scaling limit.
It has been proven that interfaces arising with 
Dobrushin boundary conditions
in both the Ising model
and its 
random cluster model 
counterpart
tend to conformally invariant random curves known as 
Schramm-Loewner evolutions~(SLE)
~\cite{Smirnov-conformal_invariance_in_RCM_1,
CS-universality_in_Ising,
CDHKS-convergence_of_ising_interfaces}.
Generalizations of interface convergence results 
for boundary conditions other than Dobrushin type
have been obtained
in~\cite{HK-Ising_interfaces_and_free_bc,
Izyurov-Smirnovs_observable_for_free_boundary_conditions,
Izyurov-Ising_interfaces_in_multiply_connected_domains,
KS-boundary_touching_loops,
KS-configurations_of_FK_Ising_interfaces,
BDH-crossing_probabilities_with_free_bc,
BPW-uniqueness_of_global_multiple_SLEs,
PW-crossing_probabilities_of_multiple_Ising_interfaces,
Karrila-multiple_SLE_scaling_limits}.
The full collection of all interfaces in the Ising model
and its random cluster model counterpart tend to processes
known as conformal 
loop ensembles~(CLE)~\cite{KS-conformal_invariance_of_RCM_2,
BH-scaling_limit_of_Ising_interfaces}.

Instead of the random geometry of interfaces, the physics tradition
as well as the constructive quantum field theory tradition
place focus on correlation functions. 
The existence of scaling limits of renormalized Ising model correlation 
functions, and the conformal covariance of these scaling limits,
have been shown for 
energy~\cite{HS-energy_density, Hongler-thesis}
and spin~\cite{CHI-conformal_invariance_of_spin_correlations}.
Recently a similar conclusion has been obtained for
mixed correlation functions of all primary fields including
the spin and energy~\cite{CHI-primary_field_correlations}.
It has even been shown that the set of all possible lattice
local fields of the Ising model carries a representation of the Virasoro 
algebra~\cite{HKV-CFT_at_the_lattice_level},
a hallmark of conformal field theories (CFT), and that 
with generic renormalization local correlation functions 
of
such fields have conformally covariant limits
~\cite{GHP-Ising_local_spin_correlations}.
Building on the correlation function results, it has furthermore been 
proven that the collection of 
Ising spins viewed as a random field
converges to a conformally covariant scaling 
limit~\cite{CGN-planar_Ising_magnetization_field,
CGN-planar_Ising_magnetization_field_2}.

The 100 year history of the Ising model contains a wealth of 
ingenious mathematical ideas that have enabled 
rigorous results, including 
transfer matrix methods~\cite{KW-statistics_of_the_2D_ferromagnet,
Onsager-crystal_statistics,
Yang-spontaneous_magnetization}
and their fermionic formulations~\cite{Kaufman-crystal_statistics_2,
SML-Ising_model_as_a_problem_of_fermions} and
Toeplitz determinants~\cite{MPW-correlations_and_spontaneous_magnetization,
Wu-theory_of_Toeplitz_determinants-1,
FH-Toeplitz_determinants},
Kac-Ward matrices~\cite{KW-combinatorial_solution},
dimer representations~\cite{Kasteleyn-dimer_statistics,
Fisher-dimer_solution_of_planar_Ising_model},
discrete complex analysis~\cite{KC-determination_of_an_operator_algebra,
Mercat-discrete_Riemann_surfaces},
commuting families of transfer 
matrices~\cite{SM-a_new_representation_of_the_solution_of_Ising},
Yang-Baxter equations~\cite{BE-Nth_solution},
non-linear differential equations (particularly Painlev\'e type)
and difference equations
~\cite{WMTB-spin_spin,
Perk-quadratic_identities_for_Ising_model_correlations,
JM-studies_on_holonomic_quantum_fields_17},
and bosonization~\cite{Dubedat-exact_bosonization};
for more on the various mathematical developments
see, e.g.,~\cite{MW-two_dimensional_Ising_model,
Baxter-exactly_solved_models,
Palmer-planar_Ising_correlations} and
\cite{DIK-Toeplitz,CCK-revisiting}.
The recent breakthrough mathematical progress on the conformal
invariance of both random geometry and correlation functions of
the Ising model, however,
has been enabled mainly by novel notions of discrete complex 
analysis that apply particularly well to the Ising model:
\emph{s-holomorphicity} 
and specific \emph{Riemann boundary value problems}
\cite{Smirnov-conformal_invariance_in_RCM_1,
CS-discrete_complex_analysis_on_isoradial_graphs,
CS-universality_in_Ising}.

\subsection*{The conformal field theory picture}

The prediction of conformal invariance
was made in theoretical physics research in the 1980's,
in a research field titled conformal field theory (CFT).
Physicists predicted that, very generally, models of 
two-dimensional statistical physics
at their critical points of continuous phase transitions
should, in the scaling limit, be described by
field theories with conformal 
symmetry~\cite{BPZ-infinite_conformal_symmetry_of_critical_fluctuations}.
Such conformal field theories turn out to be algebraically 
very stringently 
constrained~\cite{BPZ-infinite_conformal_symmetry_in_QFT}~--- 
in mathematical terms their 
chiral symmetry algebras are \emph{vertex operator
algebras}~(VOA)~\cite{FLM-VOAs_and_Monster,
Kac-vertex_algebras_for_beginners,
LL-introduction_to_VOAs, Huang-CFT_and_VOA}.
This prediction and the associated algebraic structure
leads to absolutely remarkable, specific, 
exact 
predictions about the statistical physics models~--- including
values of critical exponents, formulas for scaling limit
correlation functions,
modular invariance of renormalized scaling limit
partition functions on tori, etc.,
see, e.g.~\cite{DMS-CFT,Mussardo-statistical_field_theory}.

The square lattice Ising model is an archetype of such statistical
physics models, and known results about it lend very strong support to
the predicted general picture. But although there is thus virtually no doubt 
that the conformal field theory picture for the scaling limit of the 
Ising model is correct in an exact sense without approximations,
there is still no mathematical result establishing a complete conformal 
field theory as the scaling limit of the critical Ising model, and
one even struggles to find a precisely stated mathematical
conjecture about it in the literature!

The general goal of this series of articles is to remedy this situation
by showing that the full algebraic structure of the conformal
field theory generally conjectured to describe the scaling limit of the
Ising model is indeed recovered in the scaling limit.
More precisely, the combination of results proven
in this series establishes that the fusion coefficients
of the Ising model with locally monochromatic boundary conditions 
in slit-strip geometry
(defined in~\cite{part-2} as renormalized limits of boundary
correlation functions in lattice slit-strips)
converge in the scaling limit to the structure
constants of the vertex operator algebra of the fermionic Ising
boundary conformal field theory.

\subsection*{Slit-strip geometry and boundary conformal field theory}

In this series of articles we consider the Ising model
in lattice approximations of
the strip and slit-strip geometries illustrated 
in Figures~\ref{fig: cont strip and slit-strip}.
Likewise, we extensively use spaces of
holomorphic functions in the strip and the slit-strip, as well
as discrete holomorphic functions in their lattice approximations.
Let us briefly explain the role that these geometries play
in boundary conformal field theory.

\begin{figure}[tb]
\centering
\subfigure[Cylinder surface.] 
{
    \includegraphics[scale=2.5]{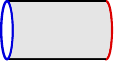}
	\label{sfig: cylinder}
}
\hspace{2.5cm}
\subfigure[Strip surface (truncated).] 
{
    \includegraphics[scale=2.5]{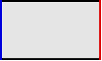}
	\label{sfig: strip surface}
} \\
\subfigure[Pair-of-pants surface.] 
{
    \includegraphics[scale=2.5]{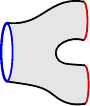}
	\label{sfig: pair-of-pants}
}
\hspace{2.5cm}
\subfigure[Slit-strip surface (truncated).] 
{
    \includegraphics[scale=2.5]{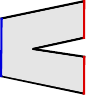}
	\label{sfig: slit-strip surface}
}
\caption{
The (truncated) strip and (truncated) slit-strip surfaces
have the same role in boundary CFT as 
the cylinder and pair-of-pants surfaces have in bulk CFT.
}
\label{fig: the generating surfaces}
\end{figure}
The rough idea is that the strip and the slit-strip are for
boundary conformal field theory what the cylinder (or annulus)
and the pair-of-pants Riemann surfaces are for 
bulk conformal field theory.
Let us begin with
    the more familiar setup of
bulk conformal field theory.
\footnote{%
\emph{Bulk conformal field theory} is in fact commonly referred to
plainly as \emph{conformal field theory} (CFT), and the term
\emph{boundary conformal field theory} (BCFT)
is then used to distinguish the case when the domains of interest
have physical boundaries on which boundary conditions can be
imposed (for the fields of the quantum field theory and for the
statistical mechanics model that is to be described by the quantum
field theory).
If one focuses only on symmetry algebras, the term
\emph{chiral conformal field theory} could be used in 
place of boundary conformal field theory as well, although this
term originally arises from
decomposing the symmetry algebra of a bulk conformal field theory
into two parts: holomorphic and antiholomorphic chiralities.
Our main focus will be the Ising model in domains with boundary, but
we use the term conformal field theory generally to variously refer to
any of the above~--- we then use the epithets boundary, bulk, and
chiral, where attention needs to be drawn to the particularities
of the case in question.
}
The role of geometry is most transparent in Segal's
axiomatization of conformal field 
theories~\cite{Segal-definition_of_CFT,
Segal-definition_of_CFT-new},
in which a CFT is defined as a (projective) functor~---
subject 
to certain axioms~--- from the category whose
morphisms are bordered Riemann surfaces 
with parametrized boundary components
to the category whose morphisms
are trace-class operators between tensor products of a given
Hilbert space. Segal's approach is clearly 
motivated by the
transfer matrix formalism in statistical mechanics:
the operators associated to cylider surfaces (of different moduli)
form a semigroup which is thought of as 
the scaling limit of the semigroup generated by the transfer matrix
itself.
\begin{figure}[tb]
\centering
\subfigure[Sewing together cylinders.] 
{
    \includegraphics[scale=2]{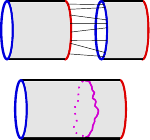}
	\label{sfig: two cylinders sewn}
}
\hspace{2.5cm}
\subfigure[Sewing together rectangles.] 
{
    \includegraphics[scale=2]{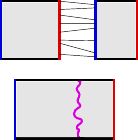}
	\label{sfig: two rectangles sewn}
}
\caption{The sewing together of cylinders in bulk CFT and the sewing 
together of truncated strips (i.e., rectangles) in boundary CFT give 
rise to semigroups
that correspond to the scaling limits of the transfer matrices.
}
\label{fig: sewing of cylinders and rectangles}
\end{figure}
Sewing together bordered Riemann surfaces
along (parametrized) boundary components is the composition in 
the category.
Cylinder surfaces alone can only be sewn with each other to 
form cylinders (sew together two cylinders) 
as in Figure~\ref{sec: intro}.\ref{sfig: two cylinders sewn}
or tori (sew together the two ends of a cylinder).
On the other hand, the pair-of-pants surfaces 
can be sewn together as 
in Figure~\ref{fig: general surface decomposition} 
to form surfaces of
arbitrary genus, and in this sense they are the building block of
all Riemann surfaces. The use of the term vertex operator
(and the symbol~$Y$ used for it)
originates from the picture of the pair-of pants surface
(a vertex diagram in string theory) and 
the operator that is associated with this surface.
\begin{figure}[tb]
\centering
\includegraphics[scale=2]{pants_surface.pdf}
\hspace{1.5cm}
\includegraphics[scale=2]{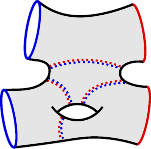}
\hspace{1.5cm}
\includegraphics[scale=2]{pants_surface.pdf}
\caption{Sewing many pair-of-pants can produce arbitrary genus surfaces.
}
\label{fig: general surface decomposition}
\end{figure}

It is a natural change in the point of view~\cite{Huang-CFT_and_VOA,
RSS-quasiconformal_Teichmuller_theory_as_foundation}
to equip the bordered Riemann surfaces with tubes 
(cylinders infinitely extended 
in one direction) attached to the boundary components so that the 
surfaces become punctured surfaces: the punctures correspond to the
infinite extremities
of the tubes (and thus there is one for each boundary component of the
original bordered Riemann surface) and they become 
equipped with a choice of local coordinates.
With this point of view, cylinders correspond to the 
Riemann sphere with two punctures, and the pair-of-pants 
corresponds to the the Riemann sphere with three punctures.

A rectangle, or equivalently a truncated strip,
is the natural counterpart in boundary
conformal field theory to a cylinder of finite modulus
in bulk conformal field theory, whereas the doubly infinite strip
is the counterpart of the cylinder with tubes attached
to each end, i.e., the twice-punctured sphere.
The transfer matrix
is, indeed, simplest to use for calculations in rectangles 
and strips.
Along with the main result of this series,
we will of course also verify the familiar statement 
that the scaling limits of powers of transfer matrices
form the semigroup generated by the energy operator~$L_0$ 
in the vertex operator algebra, 
in agreement with the interpretation in Segal's formulation.

\begin{figure}[tb]
\centering
\includegraphics[scale=2]{pants_plane.pdf}
\hspace{1.5cm}
\includegraphics[scale=2]{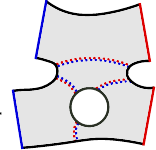}
\hspace{1.5cm}
\includegraphics[scale=2]{pants_plane.pdf}
\caption{Sewing many truncated slit-strips can produce
arbitrary multiply connected domains.
}
\label{fig: general domain decomposition}
\end{figure}
A truncated slit-strip is the natural counterpart in boundary
conformal field theory for a pair-of-pants surface
in bulk conformal field theory, whereas the infinite slit-strip
(with three infinite extremities) is the counterpart of the
pair-of-pants with tubes attached to each end, i.e., the 
thrice-punctured sphere.
By sewing together boundary 
intervals of (surfaces conformally equivalent to)
truncated slit-strips,
it is possible to build general multiply connected 
planar domains with boundaries as in 
Figure~\ref{fig: general domain decomposition}.
In this sense the 
slit-strip geometry is the building block of general
domains in boundary 
conformal field theory in exactly the same way that the 
pair-of-pants surface is in bulk conformal field theory.

Our main result that the fusion coefficients of the Ising model
with locally monochromatic boundary conditions in the
lattice slit-strip tend, in the scaling limit, to the structure
constants of the vertex operator algebra, therefore recover
the clear boundary conformal field theory analogue of the role 
that vertex operators have in bulk conformal field theory
as operators associated with the pair-of-pants geometry.

\subsection*{Overview of the series}
\nopagebreak[4]
This series of articles is divided into three parts.
\nopagebreak[4]
\begin{description}
\item[Part 1: Discrete and continuous function spaces]
This first part concerns the function spaces needed in
the analysis of the scaling limit of the Ising model fusion 
coefficients.
We will consider spaces of holomorphic solutions to a
Riemann boundary value problem in the strip and the slit-strip,
as well as their discretized analogues: spaces of s-holomorphic
solutions to a
Riemann boundary value problem in a lattice strip and 
a lattice slit-strip.
We will, in particular, study the restrictions of such functions
to a cross-section of the strip or the slit-strip~---
much like in Segal's CFT one would view cross-sections of 
surfaces as carrying physical states in the Hilbert space,
which are then acted on by the operators associated to
Riemann surfaces lying between different cross-sections.\footnote{%
The Hilbert space of functions we consider here is
not directly the analogue of the Hilbert space of states in the 
quantum field theory, however. Instead, a good analogue of the
quantum field theory state space is the alternating tensor algebra
of the subspace of functions which admit a regular extension 
in one infinite direction of the strip, i.e., of a suitable analogue
of a Hardy space.
For details, see the subsequent parts~\cite{part-2,part-3}.}
The Riemann boundary values (and s-holomorphicity) are real-linear
conditions, and the function spaces here will all be real vector 
spaces. They will have the natural Hilbert space structure coming 
from square integrability of the functions on cross-sections.
We will find and concretely describe distinguished bases of
functions, which have prescribed singularities in one of the infinite
extremities and which are regular in the other two.
We prove the convergence of the distinguished functions on 
the lattices to the distinguished continuum functions, and
the convergence of all corresponding inner products.
The distinguished discrete functions in both the lattice
strip and the lattice slit-strip will be crucially used
in the calculations with the
Ising model in the second part~\cite{part-2}.
The distinguished continuum functions in the slit-strip will 
be used as a more natural basis for calculations with the 
vertex operator algebra in the third part~\cite{part-3}.
\item[Part 2: Scaling limits of fusion coefficients]
The second part of this series~\cite{part-2}
will focus on the Ising model itself,
in the lattice strip and the lattice slit-strip and with locally 
monochromatic boundary conditions. It will be shown that there is
a way to diagonalize the transfer matrices associated with the 
strip and the slit-strip using Clifford algebra valued discrete
one-forms built from one set of distinguished discrete functions
in the present article, 
and that s-holomorphicity and Riemann boundary values
underlie the possibility to perform contour deformations
in the integrals of these one-forms. 
The contour deformations are clearly analogous
to boundary conformal field theory, and using them with the
other set of distinguished discrete functions of the present
article, we derive a
recursive characterization of the fusion coefficients of the 
Ising model. The recursion involves inner products of the
distinguished discrete functions, and by the present results 
on their convergence, we will be able to derive the convergence
of the fusion coefficients in the scaling limit.
\item[Part 3: The vertex operator algebra in the scaling limit]
In the third part of this series~\cite{part-3}
we arrive at the main statement: from the scaling limits of the 
fusion coefficients one can recover the structure constants
of the vertex operator algebra of the fermionic conformal field theory
that has been claimed to describe the Ising model, and 
conversely from the structure constants one can
recover the scaling limits of the fusion coefficients.
The recovery involves only changes of bases related to the
choice of natural local coordinates 
at the three infinite extremities of the slit-strip; which again
naturally involve the distinguished continuum functions 
of the present article.
\end{description}

Together, the series provides a fully rigorously
worked out model case of a mathematically precise statement
about the emergence of the full algebraic structure of a
boundary conformal field theory in the scaling limit of
a lattice model of statistical mechanics.
Given the broad conjectured validity of the conformal field
theory picture,
this should be viewed as the prototype of a precise conjecture to
be formulated about many other models.
Some of our steps are inevitably specific to the Ising model
(particularly the role of s-holomorphicity and
Riemann boundary values), but certain steps could
even offer technical insights into the cases of other models.

\subsection*{Organization of this article}

This first article of the series is organized as follows.

\begin{description} 
\item[Section~\ref{sec: function spaces}]
We define the continuum function spaces
and find the distinguished holomorphic solutions to the
Riemann boundary value problems in the strip and the slit-strip.
In the case of the strip, the distinguished functions are
eigenfunctions of vertical translations, and their restrictions
to horizontal cross-sections are simply certain Fourier modes
with a quarter-of-a-full-turn 
phase difference between the two boundaries.
In the slit-strip, the distinguished functions are 
constructed as linear combinations of pull-backs of certain 
monomial functions from the upper half-plane.
\item[Section~\ref{sec: discrete complex analysis}]
We define the discrete function spaces
and find the distinguished s-holomorphic solutions to the
Riemann boundary value problems in the lattice strip and the 
lattice slit-strip.
The distinguished functions in the lattice slit-strip are again
vertical translation eigenfunctions, whose restrictions to
a cross-section turn out to be combinations of two discrete
Fourier modes with opposite frequencies. 
The qualitative properties of the spectrum of the
vertical translations of s-holomorphic solutions to the
Riemann boundary value 
problem in the strip were observed
in~\cite{HKZ-discrete_holomorphicity_and_operator_formalism},
but for our purposes we need both the spectrum
and the eigenfuctions explicitly.
This explicit calculation is essentially 
equivalent to the diagonalization of the induced 
rotation of the transfer matrix on the space of
Clifford generators~\cite{Palmer-planar_Ising_correlations},
but we find the vertical translation eigenfunction problem for 
discrete holomorphic functions conceptually simpler, 
and we provide the details of the derivation
also so that the series constitutes a self-contained proof of
the main result.
The distinguished functions in the lattice slit-strip can be
constructed based on the invertibility of a
finite-dimensional linear system of subspace projection 
equations.
\item[Section~\ref{sec: convergence of functions}]
The final section addresses the convergence of the
distinguished discrete functions to the continuum ones.
Having calculated the vertical translation 
eigenfunctions in the lattice strip explicitly, their convergence
becomes a matter of straightforward inspection.
The remaining results rely on the 
``imaginary part of the integral of the square'' 
technique~\cite{Smirnov-towards_conformal_invariance,
CS-discrete_complex_analysis_on_isoradial_graphs,
CS-universality_in_Ising}.
We first of all use it to prove the invertibility of
the linear system of subspace projections from
the previous section.
Moreover, the resulting regularity theory for s-holomorphic 
functions is used to prove
the convergence of the distinguished discrete 
functions in the slit-strip.
\end{description}

\subsection*{Novelty}

We do not claim essential novelty in any of the results
concerning the strip geometry~---
this case is included mainly 
for coherent formulation of the whole:
the definitions are needed in any case, and proofs are
provided for self-containedness.
All of our calculations in the lattice strip are fully explicit
and in essence equivalent to the calculations needed to diagonalize 
the transfer matrix of the Ising model with locally monochromatic 
boundary conditions. The well known diagonalization of this
transfer matrix~\cite{AM-transfer_matrix_for_a_pure_phase,
Palmer-planar_Ising_correlations}
in particular allows one to conclude without difficulty that 
the suitable powers of the transfer matrix converge
to the exponentials of the energy operator~$L_0$ in the 
vertex operator algebra of the fermionic Ising CFT,
for example by realizing the VOA as an inductive limit of transfer 
state spaces.

Instead, the novelty of our work pertains almost exclusively
to the slit-strip geometry. 
Key objects for us are the distinguished
functions in the lattice slit-strip, whose asymptotics
in one of the extremities matches the behavior of the explicit
strip functions.
Such globally defined discrete holomorphic functions
are analogous to objects needed in Segal's CFT
for vertex operators; not merely the semigroup generated
by the energy operator~$L_0$.
The fact that such globally defined s-holomorphic 
functions exist at all is crucial to our later contour 
deformation arguments, and their convergence is at the heart of the
convergence of the Ising model fusion coefficients.
For both of these, recently developed 
specific techniques of discrete complex analysis
\cite{Smirnov-towards_conformal_invariance,
Smirnov-conformal_invariance_in_RCM_1,
CS-discrete_complex_analysis_on_isoradial_graphs,
CS-universality_in_Ising}
are indispensable.
And it is precisely thus established convergence and
recursion properties of the Ising model fusion coefficients
which allow us to recover
the vertex operator algebra
in the scaling limit.


\section{Continuum function spaces and decompositions}%
\label{sec: function spaces}

In this section, we introduce the function spaces which play a crucial
role in our analysis of the continuum limit of the Ising model fusion 
coefficients.
A key notion are certain Riemann boundary values
for holomorphic functions~\cite{Smirnov-towards_conformal_invariance}. 
The notion has found some use in
functional analysis~\cite{HP-Hardy_spaces_and_Ising_model},
but it is the analogous notion in the lattice
setup 
that has turned out particularly fruitful for the study of the Ising
model~\cite{Smirnov-discrete_complex_analysis_and_probability,
Chelkak-state_of_the_art_and_perspectives}.
The straightforward continuum problem considered in the current
section provides an instructive blueprint for what to expect of the
lattice discretizations of Section~\ref{sec: discrete complex analysis}.

For our purposes, holomorphic functions with 
Riemann boundary values will be studied in two different geometries: the 
infinite strip~$\cstrip$ and the infinite slit-strip~$\cslitstrip$
of Figure~\ref{fig: cont strip and slit-strip}.
In the spirit of Segal's geometric formulation of conformal field
theories~\cite{Segal-definition_of_CFT, Segal-definition_of_CFT-new}, we focus 
in particular on the restrictions of 
such functions to a 
crosscut of the strip or the slit-strip. In both cases, the crosscut is 
basically an interval, and the appropriate function space is a space of 
square-integrable complex valued functions on the crosscut interval.
This space of complex valued functions is made into a real Hilbert space,
because the Riemann boundary values are a real-linear condition.
An obvious difference to Segal's formulation is that we consider geometries
with boundaries, analogous to open-string string theory rather than the more 
common closed-string version for which Segal's formulation is
directly suitable.
Correspondingly the cross sections are not (disjoint unions of) circles as in 
Segal's formulation, but rather (disjoint unions of) intervals.
\begin{figure}[tb]
\centering
\subfigure[The infinite vertical strip~$\cstrip$ and its 
horizontal cross-section~$\ccrosssec$.] 
{
  \includegraphics[width=.35\textwidth]{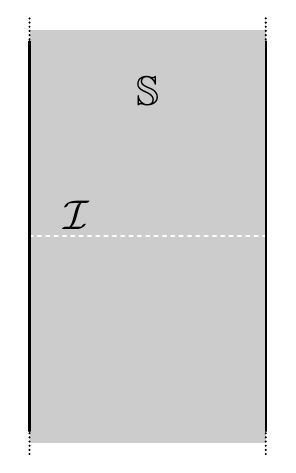}
  \label{sfig: cont strip}
}
\hspace{2.5cm}
\subfigure[The infinite slit-strip~$\cslitstrip$ and its 
horizontal cross-section~$\ccrosssec$.] 
{
  \includegraphics[width=.35\textwidth]{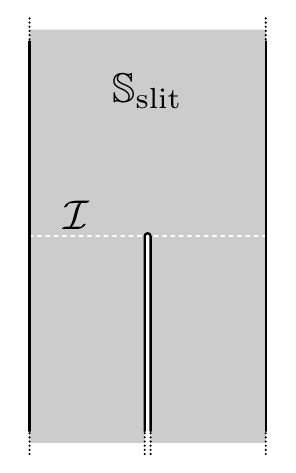}
  \label{sfig: cont slit-strip}
}
\caption{The strip and slit-strip domains.
}
\label{fig: cont strip and slit-strip}
\end{figure}

In Section~\ref{ssec: functions in strip and slit-strip} we 
define the Riemann boundary values, and introduce the appropriate function 
spaces for the strip and the slit-strip geometries.
In Section~\ref{ssec: vertical translation eigenfunctions} we introduce
the basis of the function space corresponding to vertical translation 
eigenfunctions in the strip. 
These continuum functions are just Fourier modes with a 
quarter-integer phase difference between boundaries, but their
discrete analogues will be a key to
the diagonalization of the Ising transfer matrix.
In Section~\ref{ssec: decompositions of functions in slit-strip}
we introduce analogous functions in the slit-strip, defined 
locally near each of the three extremities of the slit-strip, as well as 
globally defined functions which have prescribed singularities 
at the three extremities. 
The latter will feature naturally in expressions for VOA 
matrix elements.

\subsection{Functions in the strip and the slit-strip}
\label{ssec: functions in strip and slit-strip}

\subsubsection*{Riemann boundary values for holomorphic functions}
\label{sssec: RBV}

Let $\domain \subset \bC$ be a domain (open, connected subset).
Suppose that $x_0 \in \bdry \domain$ is a boundary point of the domain
(more precisely a prime end) such that locally near~$x_0$ the 
boundary~$\bdry \domain$ is a smooth curve, and let
$\tccw(x_0)$ be a unit complex number representing the direction of the 
counterclockwise oriented tangent to the boundary at~$x_0$.
A holomorphic function~$F:\domain\to\bC$ which continuously extends 
to~$x_0$ has \term{Riemann boundary value} at~$x_0$ if
\begin{align} \label{eq:rbv_cf}
F(x_0) \in \ii \, \tccw(x_0)^{-1/2} \, \bR .
\end{align}

\subsubsection*{The strip and the slit-strip}
\label{sssec: strip and slit-strip}
The two domains of interest to us will be
the unit width vertical
\term{strip} 
\begin{align}\label{eq: strip def}
\cstrip := \set{ z \in \bC \; \bigg| \; \aaa < \re(z) < \bbb}
\end{align}
and the \term{slit-strip} 
\begin{align}\label{eq: slit-strip def}
\cslitstrip := \cstrip \setminus \set{\ii y \; \big| \; y \leq 0 } .
\end{align}
These domains are illustrated in 
Figure~\ref{fig: cont strip and slit-strip}.

According to definition~\eqref{eq:rbv_cf}, a holomorphic 
function~$F \colon \cstrip \to \bC$ in the strip
has Riemann boundary values if for all~$y \in \bR$ we have
\begin{align}\label{eq: RBV in strip}
F \Big( \frac{-1}{2} + \ii y \Big) \in e^{-\ii \pi / 4} \, \bR
\qquad \text{ and } \qquad
F \Big( \frac{+1}{2} + \ii y \Big) \in e^{+\ii \pi / 4} \, \bR .
\end{align}
For a holomorphic function $F \colon \cslitstrip \to \bC$
to have Riemann boundary values in the slit-strip, 
in addition to the above it is required that 
for any~$y < 0$, the left and right limits on the slit 
part of the boundary satisfy
\begin{align}\label{eq: RBV on the slit part}
F \big( 0^- + \ii y \big) \in e^{+\ii \pi / 4} \, \bR
\qquad \text{ and } \qquad
F \big( 0^+ + \ii y \big) \in e^{-\ii \pi / 4} \, \bR .
\end{align}

\subsubsection*{The horizontal cross-section}
We study functions on the strip~$\cstrip$ and the 
slit-strip~$\cslitstrip$ domains through their 
restrictions to the horizontal cross-section at zero imaginary part
\begin{align}\label{eq: cross section}
\ccrosssec := \left[ \aaa , \bbb\right] ,
\end{align}
and therefore consider appropriate spaces of complex valued functions on 
this interval. For this purpose, we use the real Hilbert space
\begin{align}\label{eq: L2 space}
\cfunctionsp := L^2_{\bR} ( \ccrosssec , \bC) ,
\end{align}
of square-integrable complex valued functions on the cross section.
The square-integrability requirement can be seen as imposing the
Riemann bounday value also at the tip of the slit, in an
appropriate (conformal) sense.
The norm~$\|f\|$ of~$f \in \cfunctionsp$ is obtained from
\begin{align*}
\| f \|^2 = \int_{\aaa}^{\bbb} |f(x)|^2 \; \ud x ,
\end{align*}
as usual, but we emphasize that the inner product takes the form
\begin{align}\label{eq: inner product on L2}
\innprod{f}{g} 
= \; & \int_{\aaa}^{\bbb} \Big(
    \re \big( f(x) \big) \,  \re \big( g(x) \big)
    + \im \big( f(x) \big) \,  \im \big( g(x) \big) \Big) \; \ud x \\
\nonumber
= \; & \int_{\aaa}^{\bbb} \re \Big(
    f(x) \,  \overline{g(x)} \Big) \; \ud x ,
\end{align}
since we view~$\cfunctionsp$ as a Hilbert space over~$\bR$, not~$\bC$.

\subsection{Decomposition into vertical translation eigenfunctions}
\label{ssec: vertical translation eigenfunctions}

First, consider functions in the vertical 
strip~$\cstrip$. 
We look for
holomorphic functions~$F \colon \cstrip \to \bC$ 
with Riemann boundary values~\eqref{eq: RBV in strip},
which are furthermore
eigenfunctions for vertical translations, i.e.,
\begin{align}
\label{eq: vertical translation eigenfunction property}
F(z + \ii h) = \ceigvalsym(h) F(z) 
\qquad
\text{ for all $z \in \cstrip$ and $h \in \bR$.}
\end{align}
The vertical translation eigenfunction 
property~\eqref{eq: vertical translation eigenfunction property}
is clearly only possible if~$\ceigvalsym(h) = e^{p h}$ 
for some~$p$, 
and it also implies that the function must 
factorize as ${F(x+\ii y) = f(x) \, e^{p y}}$. Cauchy-Riemann 
equations then amount to~$f'(x) + \ii p f(x) = 0$, which yields
$f(x) \propto e^{-\ii p x}$.
The Riemann-boundary values~\eqref{eq: RBV in strip} in turn can be satisfied
only if 
$e^{\ii p} = f(\frac{-1}{2}) / f(\frac{+1}{2}) \in \ii \bR$,
i.e., $p \in \pi \bZ + \frac{\pi}{2}$. 
The functions of interest to us are therefore basically the analytic 
continuations of quarter-integer Fourier modes on the 
cross-section~$\ccrosssec$ (the argument makes one 
quarter-turn plus any number of half-turns from one end of the 
interval~$\ccrosssec$ to the other).

For indexing the Fourier modes (as well as the fermion modes in 
the vertex operator algebra later on),
we use the sets of positive half-integers
and of all half-integers denoted in what follows by
\begin{align}
\poshalfint 
:= \; & \big[ 0,+\infty \big) \cap \Big( \bZ + \half \Big) 
    = \set{\frac{1}{2}, \frac{3}{2}, \frac{5}{2}, \ldots} , \\
\nonumber
\pm \poshalfint 
:= \; & \poshalfint \cup(-\poshalfint) = \bZ + \half 
    = \set{\pm \frac{1}{2}, \pm \frac{3}{2}, \ldots} .
\end{align}

Then, for $k \in \pm \poshalfint$, we define the function
\begin{align}\label{eq: half integer Fourier mode}
\ccfFun_k (x+ \ii y) 
:= C_k \; \exp \big( -\ii \pi k x + \pi k y \big) ,
\qquad \text{ for } x + \ii y \in \cstrip ,
\end{align}
and its restriction to the cross section
\begin{align}\label{eq: half integer Fourier mode}
\ccffun_k (x) 
:= C_k \; e^{ -\ii \pi k x}  
\qquad \text{ for } x \in \ccrosssec ,
\end{align}
where the normalization constant is chosen as
\begin{align}\label{eq: normalization of half integer Fourier modes}
C_k := \; & e^{\ii \pi (-k/2-1/4)} 
\end{align}
to ensure
$\ccfFun_k \big( \frac{-1}{2} + \ii y \big) \in e^{-\ii \pi /4} \, \bR_+$ 
and $\|\ccffun_k\| = 1$.
These quarter-integer Fourier modes~\eqref{eq: half integer Fourier mode}
are illustrated in 
Figure~\ref{fig: continuous eigenfunctions}.
Let us start by checking that
they form an orthonormal basis of our function space~$\cfunctionsp$.
\begin{figure}
\centering
\subfigure[The Fourier mode~$\ccffun_{k}$ for~$k=1/2$.] 
{
    \includegraphics[width=.4\textwidth]{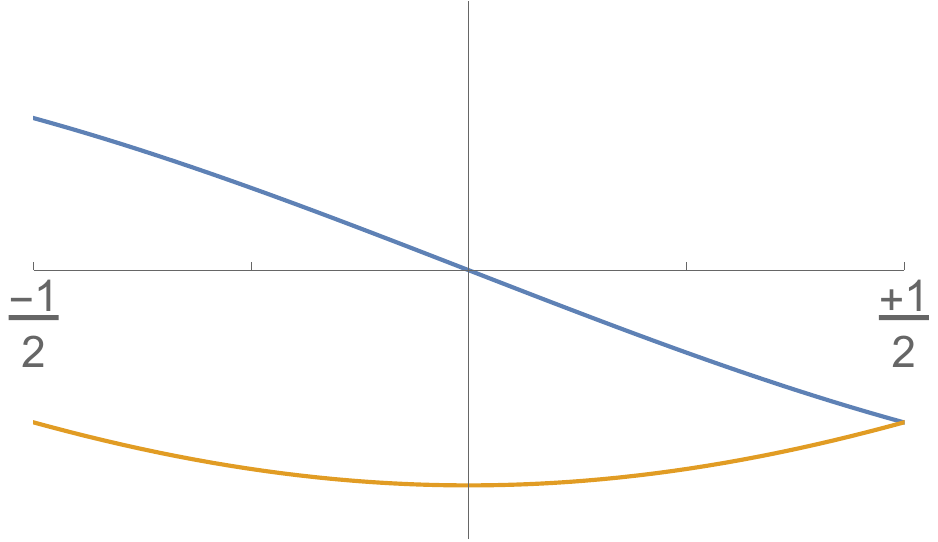}
	\label{sfig: ceigf P 1}
}
\hspace{2.0cm}
\subfigure[The Fourier mode~$\ccffun_{k}$ for~$k=-1/2$.] 
{
    \includegraphics[width=.4\textwidth]{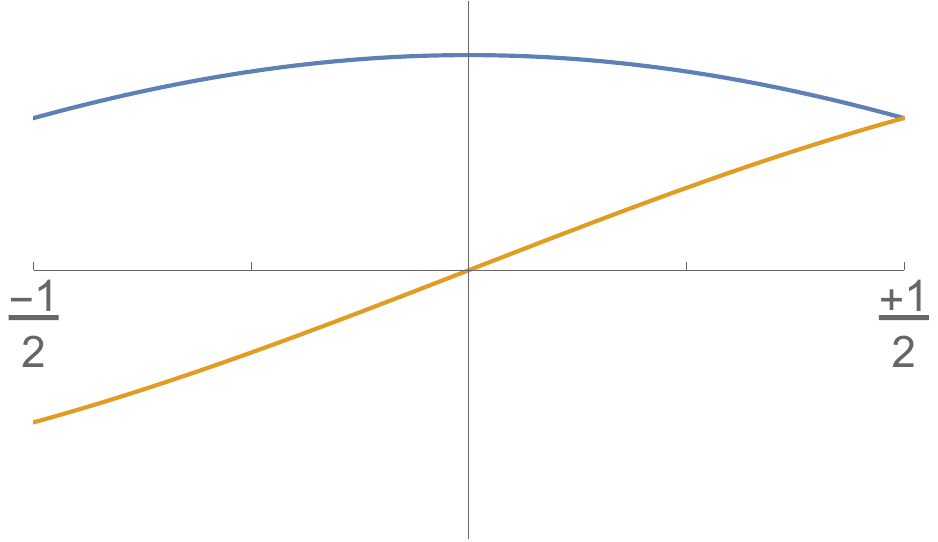}
	\label{sfig: ceigf M 1}
} \\
\subfigure[The Fourier mode~$\ccffun_{k}$ for~$k=3/2$.] 
{
    \includegraphics[width=.4\textwidth]{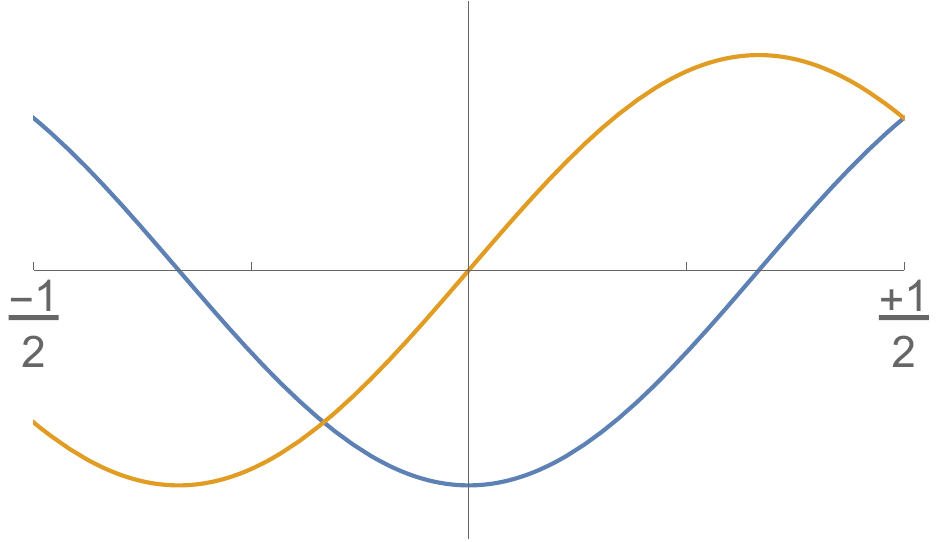}
	\label{sfig: ceigf P 3}
}
\hspace{2.0cm}
\subfigure[The Fourier mode~$\ccffun_{k}$ for~$k=-3/2$.] 
{
    \includegraphics[width=.4\textwidth]{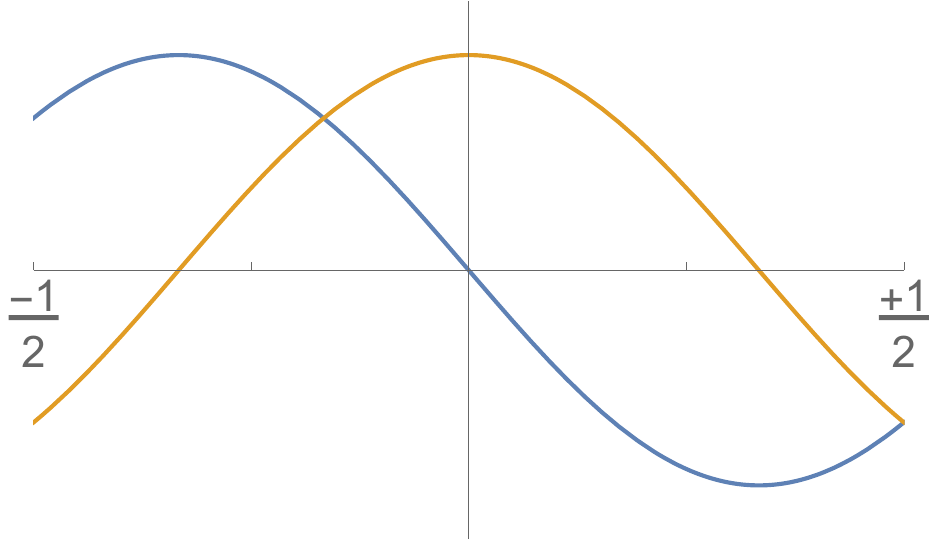}
	\label{sfig: ceigf M 3}
} \\
\subfigure[The Fourier mode~$\ccffun_{k}$ for~$k=5/2$.] 
{
    \includegraphics[width=.4\textwidth]{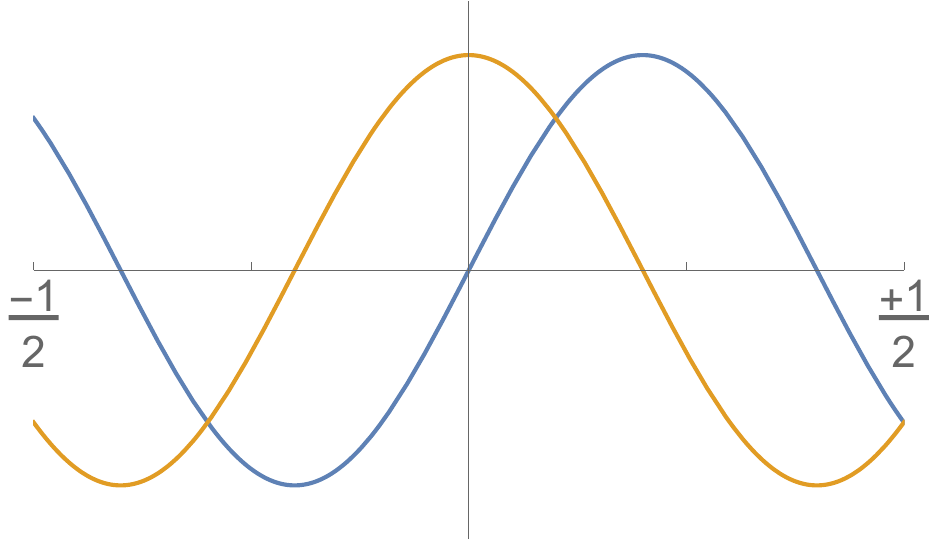}
	\label{sfig: ceigf P 5}
}
\hspace{2.0cm}
\subfigure[The Fourier mode~$\ccffun_{k}$ for~$k=-5/2$.] 
{
    \includegraphics[width=.4\textwidth]{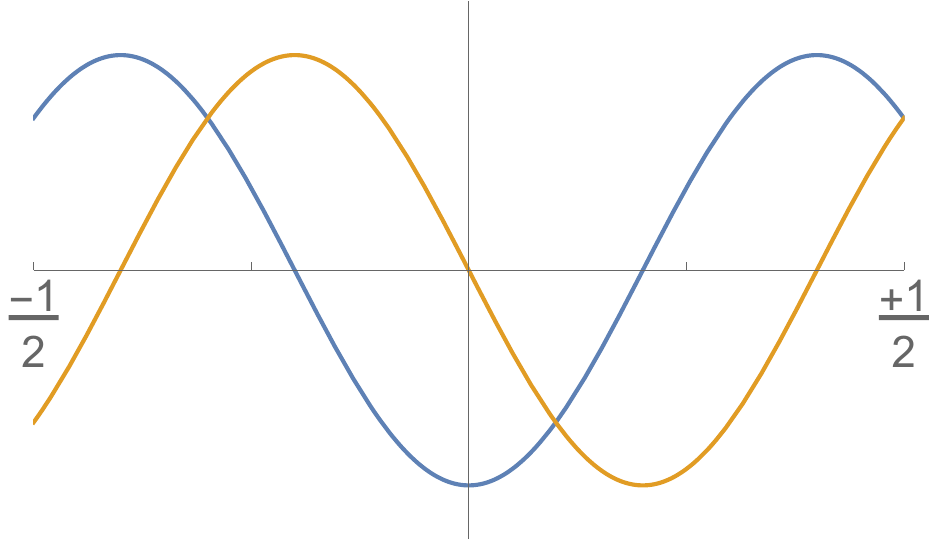}
	\label{sfig: ceigf M 5}
} \\
\subfigure[The Fourier mode~$\ccffun_{k}$ for~$k=7/2$.] 
{
    \includegraphics[width=.4\textwidth]{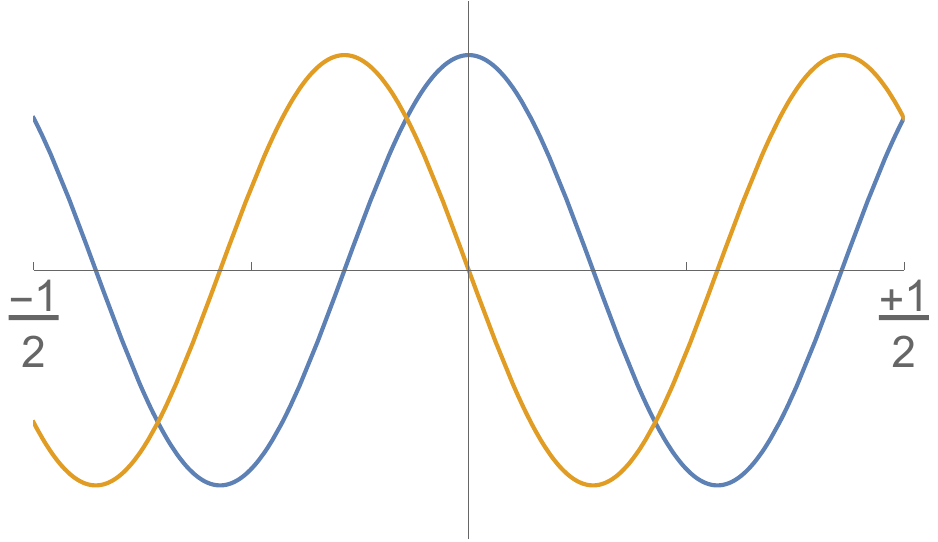}
	\label{sfig: ceigf P 7}
}
\hspace{2.0cm}
\subfigure[The Fourier mode~$\ccffun_{k}$ for~$k=-7/2$.] 
{
    \includegraphics[width=.4\textwidth]{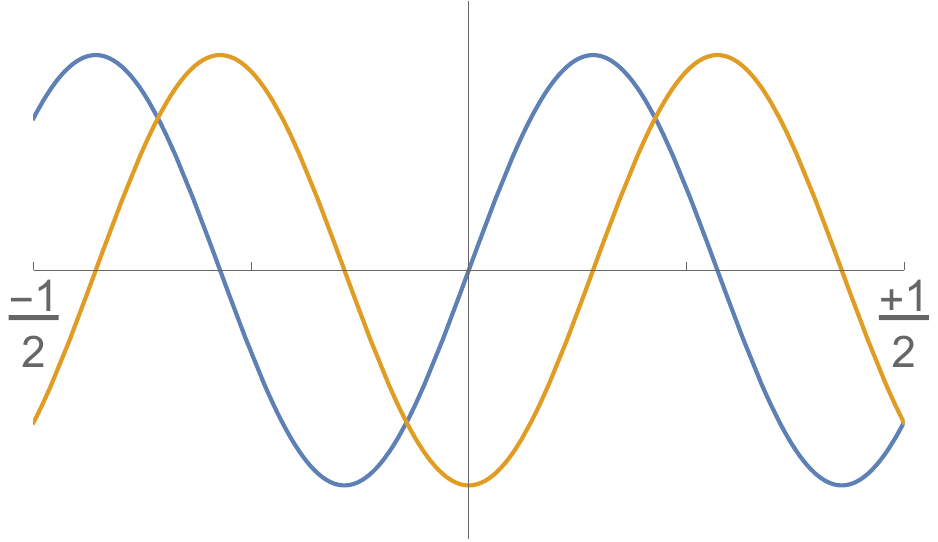}
	\label{sfig: ceigf M 7}
} \\
\label{fig: continuous eigenfunctions}
\caption{Real and imaginary parts (blue and orange)
of the restrictions~$\ccffun_k$ of the vertical translation 
eigenfunctions~$\ccfFun_k$ to the cross 
section~$\ccrosssec=[\mhalf,\half]$.
Riemann boundary conditions~\eqref{eq: RBV in strip}
fix the ratio of the real and imaginary parts
at the two endpoints, and our 
normalization~\eqref{eq: normalization of half integer Fourier modes}
fixes the phase at the left endpoint.}
\end{figure}
\begin{prop}\label{prop: half integer Fourier modes form a basis}
The collection~$(\ccffun_k)_{k \in \pm \poshalfint}$ is an orthonormal basis
for the real Hilbert space~$\cfunctionsp$.
In particular, for $k,k' \in \pm \poshalfint$ we have
\begin{align}\label{eq: orthonormality of half integer Fourier modes}
\innprod{\ccffun_k}{\ccffun_{k'}}  = \delta_{k,k'} .
\end{align}

\end{prop}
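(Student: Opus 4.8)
The plan is to verify directly that the family $(\ccffun_k)_{k \in \pm\poshalfint}$ is orthonormal and then that it spans a dense subspace, exploiting the fact that $\cfunctionsp = L^2_\bR([\aaa,\bbb],\bC)$ is the real Hilbert space underlying the usual \emph{complex} Hilbert space $L^2_\bC([\aaa,\bbb],\bC)$, with real inner product given by the real part of the complex one. First I would compute the complex Hermitian pairing $\int_{\aaa}^{\bbb} \ccffun_k(x) \overline{\ccffun_{k'}(x)} \, \ud x$. Since $\ccffun_k(x) = C_k e^{-\ii\pi k x}$ with $|C_k|=1$, this pairing equals $C_k \overline{C_{k'}} \int_{-1/2}^{1/2} e^{-\ii\pi(k-k')x}\,\ud x$. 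For $k,k' \in \bZ + \half$ the difference $k-k'$ is an integer; the integral is $1$ when $k=k'$ (giving $|C_k|^2 = 1$) and, for $k \neq k'$, it is $\frac{1}{-\ii\pi(k-k')}\big(e^{-\ii\pi(k-k')/2} - e^{\ii\pi(k-k')/2}\big)$, which vanishes because $e^{\ii\pi n} = e^{-\ii\pi n}$ for integer $n = k - k'$. Hence the complex pairing is already $\delta_{k,k'}$, and taking real parts (which is what \eqref{eq: inner product on L2} prescribes) gives $\innprod{\ccffun_k}{\ccffun_{k'}} = \delta_{k,k'}$, establishing \eqref{eq: orthonormality of half integer Fourier modes}.

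Next I would address completeness. The functions $x \mapsto e^{-\ii\pi k x}$, $k \in \bZ + \half$, are, up to the phase $e^{-\ii\pi k/2}$ absorbed into $C_k$, exactly the standard orthonormal Fourier basis of the \emph{complex} Hilbert space $L^2_\bC([-\half,\half],\bC)$ adapted to antiperiodic boundary conditions on the interval of length $1$ (equivalently, $\{e^{2\pi \ii n x}\}_{n\in\bZ}$ after the substitution shifting by a half-mode); this is classical. Therefore every $g \in L^2_\bC([\aaa,\bbb],\bC)$ has a norm-convergent expansion $g = \sum_k c_k \ccffun_k$ with complex coefficients $c_k$. To conclude completeness over $\bR$, I note that the real span of $(\ccffun_k)$ together with $(\ii\,\ccffun_k)$ equals the complex span; but $\ii\,\ccffun_k$ need \emph{not} lie in the real span of the $\ccffun_k$'s, so a short additional argument is needed. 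The clean way: for $f \in \cfunctionsp$ orthogonal (in the real inner product) to every $\ccffun_k$, one has $\re\!\big(\int f \overline{\ccffun_k}\big) = 0$ for all $k$; applying this also with $\ccffun_k$ replaced by $\ii\,\ccffun_k = \ccffun_{?}$... — more carefully, since $\ii\,\ccffun_k(x) = \ii C_k e^{-\ii\pi k x}$ is again of the form (constant of modulus one)$\cdot e^{-\ii\pi k x}$, it is a unit scalar multiple of $\ccffun_k$, so real-orthogonality to $\ccffun_k$ does \emph{not} automatically give orthogonality to $\ii\,\ccffun_k$. Thus I would instead argue: $f$ real-orthogonal to all $\ccffun_k$ means $\re(\hat f_k) = 0$ where $\hat f_k := \int f\,\overline{\ccffun_k}$ is the complex Fourier coefficient; this alone does not force $f = 0$. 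The resolution is that the correct spanning claim uses that $\{\ccffun_k : k \in \pm\poshalfint\}$ has the right \emph{cardinality and orthonormality} to be a basis of the real Hilbert space whose complexification-dimension matches: concretely, I would invoke that an orthonormal system in a separable real Hilbert space is a basis iff its closed real span is everything, and prove the latter by showing $\re(\ccffun_k)$ and $\im(\ccffun_k)$ together span the real $L^2$ of real-valued functions — i.e., reduce to the standard real Fourier basis (sines and cosines of half-integer frequencies) on $[-\half,\half]$.

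The main obstacle — and the only genuinely non-routine point — is precisely this last bookkeeping: over $\bR$, a single complex orthonormal basis indexed by $\bZ + \half$ does \emph{not} obviously give a real orthonormal basis, since the real Hilbert space has "twice" the coefficients. The key realization that makes it work is that $\ii$ times a mode $\ccffun_k$ is \emph{again a unit multiple of the same mode $\ccffun_k$} (not of a different mode), so multiplication by $\ii$ does not enlarge the real span mode-by-mode; hence one must check that the real and imaginary parts $\re\ccffun_k, \im\ccffun_k$ — which are $\cos$ and $\sin$ of half-integer-frequency arguments with appropriate phase shifts — jointly exhaust $L^2_\bR([-\half,\half],\bR)$, and then note $\cfunctionsp = L^2_\bR([-\half,\half],\bR) \oplus \ii\, L^2_\bR([-\half,\half],\bR)$ as real Hilbert spaces with the decomposition respected by the $\ccffun_k$. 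I would therefore structure the completeness proof as: (i) classical antiperiodic Fourier completeness for real-valued $L^2$ functions via the half-integer cosine/sine system; (ii) observe this system is the real/imaginary decomposition of $(\ccffun_k)_{k\in\pm\poshalfint}$ after accounting for the phases $C_k$; (iii) conclude. Steps (i)–(iii) are all standard once set up correctly, so the proof is short modulo getting this real-vs-complex subtlety straight.
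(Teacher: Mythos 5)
Your orthonormality computation contains a genuine algebraic error. You assert that the complex Hermitian pairing $\int_{-1/2}^{1/2}\ccffun_k(x)\,\overline{\ccffun_{k'}(x)}\,\ud x$ already equals $\delta_{k,k'}$, because $e^{-\ii\pi(k-k')/2}-e^{+\ii\pi(k-k')/2}$ ``vanishes since $e^{\ii\pi n}=e^{-\ii\pi n}$ for integer $n=k-k'$''. The exponent is $n/2$, not $n$: the integral equals $\tfrac{2\sin(\pi(k-k')/2)}{\pi(k-k')}$, which is nonzero whenever $k-k'$ is an odd integer (e.g.\ $k=\half$, $k'=-\half$ gives complex pairing $-\ii\cdot\tfrac{2}{\pi}$). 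The real-inner-product orthogonality in \eqref{eq: orthonormality of half integer Fourier modes} is still true, but only because the phase factor $C_k\overline{C_{k'}}=e^{-\ii\pi(k-k')/2}$ is $\pm\ii$ for odd $k-k'$, so the product is purely imaginary and its real part vanishes; a correct proof must use the normalization \eqref{eq: normalization of half integer Fourier modes}, which your computation never touches. This error also destroys the premise of your completeness argument: $(\ccffun_k)_{k\in\pm\poshalfint}$ is \emph{not} a complex-orthonormal basis of $L^2_{\bC}$ (consistently, the paper's remark after the proposition notes that $\ii\,\ccffun_k=\overline{\ccffun_{-k}}$ has an infinite expansion in the family).

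The completeness part therefore has a real gap. The modes $e^{-\ii\pi kx}$, $k\in\bZ+\half$, are not the antiperiodic Fourier basis of the length-one interval (that system has frequencies equal to odd multiples of $\pi$); they are modes of period-$4$ functions, so restricted to $[\aaa,\bbb]$ they form a complex-linearly overcomplete, non-orthogonal family, and ``half-integer sine/cosine completeness on $[-\half,\half]$'' is not a classical fact you can cite. Your closing claim that the splitting $\cfunctionsp=L^2_{\bR}(\ccrosssec,\bR)\oplus\ii\,L^2_{\bR}(\ccrosssec,\bR)$ is ``respected by the $\ccffun_k$'' is also false: each $\ccffun_k$ has nontrivial real and imaginary parts, so no single mode lies in either summand, and the reduction in your steps (i)--(iii) does not go through. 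What is missing is a device that encodes the real-linear structure as a symmetry; the paper's proof does this by extending $f$ from $[-\half,\half]$ to $[-\half,\tfrac32]$ via the reflection $f(1-x)=\ii\,\overline{f(x)}$ and then to $\bR$ by the antiperiodicity $f(x+2)=-f(x)$, expanding the resulting period-$4$ function in an ordinary Fourier series: antiperiodicity kills the even modes, and the reflection constraint forces each remaining coefficient $c_{2k}$ onto the real line $\bR\,C_k$, exhibiting $f$ as a norm-convergent real-linear combination of the $\ccffun_k$. Some argument of this type (or an equivalent pairing of $f$ with $\overline{f}$) is required; cardinality counting plus standard complex Fourier completeness cannot give real completeness here.
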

\begin{proof}
Orthonormality~\eqref{eq: orthonormality of half integer Fourier modes} 
is shown
by a routine trigonometric integral. It thus remains to show 
completeness of 
the collection~$(\ccffun_k)_{k \in \pm \poshalfint}$. 

Given $f \in \cfunctionsp$, a square-integrable complex valued function 
on the interval~$[-1/2,+1/2]$ of length~$1$, 
define a period~$4$ extension of~$f$ 
to~$\bR$ as follows.
First extend~$f$ to~$[-1/2,+3/2]$ 
by~$f(1-x) = \ii \, \overline{f(x)}$,
and then extend to~$\bR$ by the antiperiodicity 
condition~$f(x+2)=-f(x)$.
By ordinary Fourier series, we can write this extended square-integrable 
function on~$[-2,2]$ as
$f(x) = \sum_{n \in \bZ} c_n \, e^{-\ii n x \pi / 2} $,
with some complex coefficients~$c_n \in \bC$. 
It follows from antiperiodicity
that the even coefficients vanish, $c_{2m} = 0$ for all~$m\in\bZ$.
From the reflection property~$f(1-x) = \ii \, \overline{f(x)}$
it follows that the odd coefficients
satisfy~$c_{2k} + \ii^{2k-1} \, \overline{c_{2k}} = 0$
for all~$k \in \pm \poshalfint$,
i.e., that
$c_{2k}$ lies on the line~$\bR C_k$ in the complex plane. 
The terms in this 
Fourier series are therefore real multiples 
of the basis functions~$\ccffun_k$, 
$k \in \pm \poshalfint$. Completeness follows.
\end{proof}

\begin{rmk}

The functions obtained by imaginary multiplication
$x \mapsto \ii \, \ccffun_k(x)$,
and complex conjugation $x \mapsto \overline{\ccffun_k(x)}$ are 
obviously also square-integrable,
and related by~$\overline{\ccffun_k (x)} = \ii \, \ccffun_{-k} (x)$. 
Note that the expansions of both of these
in the basis~$(\ccffun_k)_{k \in \pm \poshalfint}$ 
of~
$\cfunctionsp$
(with real coefficients!) contain infinitely many terms.
The combination of the two operations, however, amounts
to simply changing the sign of the mode, 
$\overline{\ii \, \ccffun_k (x)} = \ccffun_{-k} (x)$,
as is evident also in Figure~\ref{fig: continuous eigenfunctions}.
\end{rmk}

The positive and negative modes 
form a splitting of the 
Hilbert space into two orthogonally complementary subspaces
\begin{align*}
\cfunctionsp
= \; & \cfspTpole \oplus \cfspTzero ,
\end{align*}
where
\begin{align}\label{eq: top poles and zeros def}
\cfspTpole 
	:= \; & \overline{\spn_\bR \set{ \ccffun_k \; \big| \; k > 0}} &
\cfspTzero
	:= \; & \overline{\spn_\bR \set{ \ccffun_k \; \big| \; k < 0}} .
\end{align}
We denote the orthogonal projections to these two subspaces by
\begin{align*}
\cprTpole \colon \; & \cfunctionsp \to \cfspTpole 
& 
\cprTpole (\ccffun_k) = \begin{cases}
                        0 & \text{ if $k<0$} \\
                        \ccffun_k & \text{ if $k>0$}
                        \end{cases} \\
\cprTzero \colon \; & \cfunctionsp \to \cfspTzero  
& 
\cprTzero (\ccffun_k) = \begin{cases}
                        \ccffun_k & \text{ if $k<0$} \\
                        0 & \text{ if $k>0$} .
                        \end{cases} 
\end{align*}

\subsection{Decomposition of holomorphic functions in the slit-strip}
\label{ssec: decompositions of functions in slit-strip}
With particularly the slit-strip~$\cslitstrip$ in mind
we also use functions
defined in the left and right halves of the strip,
\begin{align}\label{eq: half strip def}
\cstrip^{\lftsym} 
:= & \set{ z \in \bC \; \bigg| \; \frac{-1}{2} < \re(z) < 0} , &
\cstrip^{\rgtsym} 
:= & \set{ z \in \bC \; \bigg| \; 0 < \re(z) < \frac{+1}{2}} ,
\end{align}
and their restrictions to left and right halves of the cross sections
\begin{align}\label{eq: cross section}
\ccrosssec^{\lftsym} := \; & \Big[ \frac{-1}{2} , 0 \Big] , &
\ccrosssec^{\rgtsym} := \; & \Big[ 0 , \frac{+1}{2} \Big] .
\end{align}
The space of square-integrable functions has an orthogonal 
decomposition into 
functions with support in the left and right halves,
\begin{align*}
\cfunctionsp = \cfunctionsp_{\lftsym} \oplus \cfunctionsp_{\rgtsym} ,
\end{align*}
where we can interpret (modulo extension by zero to the other half)
\begin{align*}
\cfunctionsp_{\lftsym} = \; & L^2_{\bR}(\ccrosssec^{\lftsym}, \bC) , &
\cfunctionsp_{\rgtsym} = \; & L^2_{\bR}(\ccrosssec^{\rgtsym}, \bC) .
\end{align*}

The quarter-integer Fourier modes for the left 
and right halves are defined by
\begin{align}
\label{eq: half integer Fourier mode left}
\ccfFun^{\lftsym}_k (x+ \ii y) 
= \; & C^{\lftsym}_k \; \exp \big(- \ii 2 \pi k x + 2 \pi k y \big) ,
& 
\ccffun^{\lftsym}_k (x) 
= \; & C^{\lftsym}_k \, e^{-\ii 2 \pi k x}
\\
\label{eq: half integer Fourier mode right}
\ccfFun^{\rgtsym}_k (x+ \ii y) 
= \; & C^{\rgtsym}_k \; \exp \big( -\ii 2 \pi k x + 2 \pi k y \big) ,
&
\ccffun^{\rgtsym}_k (x) 
= \; & C^{\rgtsym}_k \, e^{-\ii 2 \pi k x} ,
\end{align}
for $k \in \pm \poshalfint$, where
we choose $C^{\lftsym}_k = \sqrt{2} \, e^{\ii \pi (-k-1/4)}$
and $C^{\rgtsym}_k = \sqrt{2} \, e^{-\ii \pi / 4}$
to ensure
${\ccfFun^{\lftsym}_k \big( \frac{-1}{2} + \ii y \big)} \in e^{-\ii \pi /4} \, 
\bR_+$ and
$\|\ccffun^\lftsym_k\| = 1$, and
${\ccfFun^{\rgtsym}_k \big( 0 + \ii y \big)} \in e^{-\ii \pi /4} \, \bR_+$ and
$\|\ccffun^\rgtsym_k\| = 1$, respectively.

The following orthonormal basis properties follow easily from
Proposition~\ref{prop: half integer Fourier modes form a basis} again.
\begin{prop}
The collections~$(\ccffun^{\lftsym}_k)_{k \in \pm \poshalfint}$ 
and~$(\ccffun^{\rgtsym}_k)_{k \in \pm \poshalfint}$ are orthonormal bases
for the real Hilbert spaces~$\cfunctionsp_\lftsym$ and~$\cfunctionsp_\rgtsym$,
respectively. These two collections combined form a basis for the real Hilbert 
space~$\cfunctionsp = \cfunctionsp_{\lftsym} \oplus \cfunctionsp_{\rgtsym}$.
\end{prop}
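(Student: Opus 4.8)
The plan is to observe that the half-strip Fourier modes~$\ccffun^{\lftsym}_k$, $\ccffun^{\rgtsym}_k$ are nothing but affine rescalings of the full-strip modes~$\ccffun_k$ of the previous subsection, and then to transport the orthonormal-basis conclusion of Proposition~\ref{prop: half integer Fourier modes form a basis} through the corresponding isometries. Concretely, I would introduce the real-linear rescaling maps
\begin{align*}
T_{\lftsym} \colon \cfunctionsp \to \cfunctionsp_{\lftsym}, \quad (T_{\lftsym} f)(x) = \sqrt{2}\, f\big(2x + \half\big) \text{ for } x \in \ccrosssec^{\lftsym},
\end{align*}
and, analogously,
\begin{align*}
T_{\rgtsym} \colon \cfunctionsp \to \cfunctionsp_{\rgtsym}, \quad (T_{\rgtsym} f)(x) = \sqrt{2}\, f\big(2x - \half\big) \text{ for } x \in \ccrosssec^{\rgtsym}.
\end{align*}
The affine maps $x \mapsto 2x \pm \half$ send $\ccrosssec^{\lftsym}$, $\ccrosssec^{\rgtsym}$ bijectively onto $\ccrosssec$, so these are well defined, and a one-line change of variables in the defining integral of~$\|\cdot\|$ shows that $T_{\lftsym}$ and $T_{\rgtsym}$ preserve the $L^2_{\bR}$-norm (the~$\sqrt{2}$ compensating the Jacobian~$\tfrac12$). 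Being real-linear, they then preserve the real inner product~\eqref{eq: inner product on L2} by polarization, and each is evidently invertible; hence $T_{\lftsym}$ and $T_{\rgtsym}$ are isomorphisms of real Hilbert spaces.

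Next I would check that these isometries carry the full-strip modes to the half-strip modes, i.e.
\begin{align*}
T_{\lftsym}(\ccffun_k) = \ccffun^{\lftsym}_k \qquad \text{and} \qquad T_{\rgtsym}(\ccffun_k) = \ccffun^{\rgtsym}_k \qquad \text{for all } k \in \pm \poshalfint .
\end{align*}
This is a direct computation: the dilation $x \mapsto 2x$ turns the exponent $-\ii \pi k x$ of~$\ccffun_k$ into the exponent $-\ii 2\pi k x$ appearing in~\eqref{eq: half integer Fourier mode left}, \eqref{eq: half integer Fourier mode right}; the translation by $\pm\half$ produces a pure phase, and the constants $C^{\lftsym}_k$, $C^{\rgtsym}_k$ (together with the $\sqrt{2}$) are exactly calibrated so that this phase is absorbed and the two sides coincide on the nose — indeed the normalizations in the text were chosen precisely to make $\|\ccffun^{\lftsym}_k\| = \|\ccffun^{\rgtsym}_k\| = 1$ and to pin down the boundary phase, which is the same data as demanding $T_{\lftsym}(\ccffun_k) = \ccffun^{\lftsym}_k$ and $T_{\rgtsym}(\ccffun_k) = \ccffun^{\rgtsym}_k$. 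Granting this identity, Proposition~\ref{prop: half integer Fourier modes form a basis}---that $(\ccffun_k)_{k \in \pm \poshalfint}$ is an orthonormal basis of~$\cfunctionsp$---immediately gives, via the Hilbert space isomorphisms $T_{\lftsym}$ and $T_{\rgtsym}$, that $(\ccffun^{\lftsym}_k)_{k \in \pm \poshalfint}$ and $(\ccffun^{\rgtsym}_k)_{k \in \pm \poshalfint}$ are orthonormal bases of $\cfunctionsp_{\lftsym}$ and $\cfunctionsp_{\rgtsym}$ respectively.

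For the last assertion I would invoke the orthogonal decomposition $\cfunctionsp = \cfunctionsp_{\lftsym} \oplus \cfunctionsp_{\rgtsym}$ already recorded above, in which each summand is realized (via extension by zero) as the functions supported in the corresponding half of~$\ccrosssec$. A function supported in one half is orthogonal to every function supported in the other, so the union of the two orthonormal bases is an orthonormal family whose closed span contains both summands, hence all of~$\cfunctionsp$; it is therefore an orthonormal basis (in particular a basis) of~$\cfunctionsp$. I expect no real obstacle here: the content is entirely bookkeeping, and the only step that demands care is the phase verification in $T_{\lftsym/\rgtsym}(\ccffun_k) = \ccffun^{\lftsym/\rgtsym}_k$. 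A slightly longer alternative, avoiding any reliance on rescaling, would be to rerun verbatim the reflection-and-antiperiodic-extension argument of Proposition~\ref{prop: half integer Fourier modes form a basis} directly on each half-interval.
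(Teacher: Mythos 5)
Your proof is correct and is essentially the paper's (unstated) argument made explicit: the paper offers no proof beyond asserting that the orthonormal-basis properties ``follow easily'' from Proposition~\ref{prop: half integer Fourier modes form a basis}, and your affine rescaling isometries $T_{\lftsym}, T_{\rgtsym}$ are the natural way to make that precise. The one step you flagged as needing care does check out: $\sqrt{2}\, C_k \, e^{-\ii \pi k/2} = \sqrt{2}\, e^{\ii \pi(-k-1/4)} = C^{\lftsym}_k$ and $\sqrt{2}\, C_k \, e^{+\ii \pi k/2} = \sqrt{2}\, e^{-\ii \pi/4} = C^{\rgtsym}_k$, so $T_{\lftsym}(\ccffun_k)=\ccffun^{\lftsym}_k$ and $T_{\rgtsym}(\ccffun_k)=\ccffun^{\rgtsym}_k$ exactly, and the remaining steps (isometry via change of variables, orthogonality of the two supports, union of bases) are routine as you say.
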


We can further split the functions with support on one of the two halves to 
those with negative or positive modes in the corresponding half, i.e., write
\begin{align*}
\cfunctionsp_{\lftsym}
= \; & \cfspLpole \oplus \cfspLzero , 
&
\cfunctionsp_{\rgtsym}
= \; & \cfspRpole \oplus \cfspRzero , 
\end{align*}
where
\begin{align}\label{eq: left and right poles and zeros def}
\cfspLpole 
	:= \; & \overline{\spn_\bR \set{ \ccffun^{\lftsym}_k \; \big| \; k < 0}} &
\cfspRpole 
	:= \; & \overline{\spn_\bR \set{ \ccffun^{\rgtsym}_k \; \big| \; k < 0}} 	
\\
\nonumber
\cfspLzero
	:= \; & \overline{\spn_\bR \set{ \ccffun^{\lftsym}_k \; \big| \; k > 0}}
&
\cfspRzero
	:= \; & \overline{\spn_\bR \set{ \ccffun^{\rgtsym}_k \; \big| \; k > 0}} .
\end{align}
\begin{rmk}
In~\eqref{eq: top poles and zeros def}, the poles corresponded to positive 
indices~$k>0$ and zeros to negative indices~$k<0$; 
the former Fourier modes are 
tending to infinity in the top extremity of the strip, 
and the latter to zero.
Here in~\eqref{eq: left and right poles and zeros def} 
we instead care about 
the asymptotics in the left and right downwards extremities of the 
slit-strip, where it is the modes with 
negative indices that tend to infinity 
and modes with positive indices that tend to zero~--- hence the opposite 
convention for the correspondence between labels and indices.
\end{rmk}

We have thus introduced two decompositions of
the function space~$\cfunctionsp$:
\begin{align}\label{eq: split to top poles and zeros}
\cfunctionsp
= \; & \cfspTpole \oplus \cfspTzero ,
\end{align}
and
\begin{align}\label{eq: split to left and right poles and zeros}
\cfunctionsp
= \; & \cfspLpole \oplus \cfspLzero \oplus \cfspRpole \oplus \cfspRzero .
\end{align}
The orthogonal projections to the two subspaces in 
decomposition~\eqref{eq: split to top poles and zeros} are 
denoted by~$\cprTpole \colon \cfunctionsp \to \cfspTpole$
and~$\cprTzero \colon \cfunctionsp \to \cfspTzero$. We denote the 
orthogonal projections onto the four subspaces in 
decomposition~\eqref{eq: split to left and right poles and zeros} by
\begin{align*}
\cprLpole \colon \; & \cfunctionsp \to \cfspLpole , &
\cprRpole \colon \; & \cfunctionsp \to \cfspRpole , \\
\cprLzero \colon \; & \cfunctionsp \to \cfspLzero , &
\cprRzero \colon \; & \cfunctionsp \to \cfspRzero .
\end{align*}

\subsubsection*{Singular parts of a function in the three extremities
of the slit-strip}
For a function~$f \in \cfunctionsp$, we call
\begin{align}
\nonumber
\cprTpole(f) \in \; & \cfspTpole 
& \text{ its \term{singular part at the top},} & \\
\label{eq: definition of singular parts}
\cprLpole(f) \in \; & \cfspLpole 
& \text{ its \term{singular part in the left leg},} & \\
\nonumber
\cprRpole(f) \in \; & \cfspRpole 
& \text{ its \term{singular part in the right leg}.} &
\end{align}
If $\cprTpole(f) = 0$ (resp. $\cprLpole(f)=0$ or
$\cprRpole(f)=0$),  we say that the function~$f$
admits a \term{regular extension} to the top
(resp. regular extension to the left 
leg or regular extension to the right leg).

The following result shows that a function is uniquely
characterized by its singular parts.
It is the analogue of the result that in bounded domains
holomorphic functions with Riemann boundary values
must vanish identically,
see~\cite{Hongler-thesis}. In our unbounded domains the
additional requirement is just regular extension to the
three infinite extremities. The proof technique is
a simple continuum version of the main tool we will
use in the discrete setup with s-holomorphic functions:
the (harmonic conjugate of the imaginary part of the)
integral of the square of the holomorphic function with 
Riemann boundary values.
\begin{lem}\label{lem:continuous-uniqueness}
If a function~$f \in \cfunctionsp$ admits regular extensions 
to the top, to the left leg, and to the right leg, then~$f \equiv 0$.
\end{lem}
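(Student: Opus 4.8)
The plan is to run the continuum version of the ``imaginary part of the integral of the square'' argument that is advertised just above the statement. First I would realize $f$ as the restriction to~$\ccrosssec$ of a holomorphic function $F\colon\cslitstrip\to\bC$ with Riemann boundary values. Near the top, the hypothesis $\cprTpole(f)=0$ means the Fourier expansion of $f$ involves only the modes $\ccffun_k$ with $k<0$, which extend to the functions $\ccfFun_k$ that decay exponentially as one goes up the strip; similarly $\cprLpole(f)=0$ and $\cprRpole(f)=0$ mean that on the left and right halves $f$ is built only from the modes $\ccfFunL_k,\ccfFunR_k$ with $k>0$, which decay exponentially down the left and right legs respectively. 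These three locally convergent expansions, together with the value $f$ on the cross-section, patch to a single holomorphic $F$ on all of $\cslitstrip$ (the boundary values from the two sides of each half of the cross-section agree, so a Morera-type/distributional gluing applies), and $F$ satisfies \eqref{eq:rbv_cf} on $\bdry\cslitstrip$ and tends to $0$ exponentially fast at each of the three infinite extremities.

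Next I would set $G(z):=\int_{z_0}^{z}F(w)^2\,\ud w$, which is single-valued since $\cslitstrip$ is simply connected and $F^2$ is holomorphic, and put $H:=\im G$, a harmonic function on $\cslitstrip$. The key elementary computation: if $\tccw$ is the counterclockwise unit tangent along a smooth boundary arc, then \eqref{eq:rbv_cf} gives $F^2\in\tccw^{-1}\bR$, hence $F^2\,\tccw\in\bR$, i.e.\ $F^2\,\ud z$ is real along $\bdry\cslitstrip$; therefore $H$ is constant on each of the four smooth boundary arcs (the two vertical edges and the two sides of the slit). The exponential decay of $F$ makes the integral defining $G$ converge at each of the three infinite prime ends, and $F$ is continuous at the slit tip, so $G$ and hence $H$ extend continuously to the prime-end compactification $\widehat{\cslitstrip}$, which is a closed topological disc. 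Since the four boundary arcs are joined cyclically at the slit tip and at the three infinite ends, the four constants coincide, so $H$ equals a single constant $c$ on all of $\bdry\cslitstrip$.

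To finish: $H$ is then harmonic on the disc $\widehat{\cslitstrip}$, continuous up to its boundary, with boundary value $c$; the maximum principle forces $H\equiv c$. A holomorphic function with constant imaginary part is constant, so $G$ is constant, $F^2=G'\equiv 0$, hence $F\equiv 0$ and in particular $f\equiv 0$.

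I expect the only genuinely delicate point to be the boundary regularity required to apply the maximum principle on this unbounded, non-smooth domain: one must check that $G$ really does extend continuously to every prime end of $\cslitstrip$. At the slit tip this concerns the $2\pi$ reentrant corner and uses only that $F$ extends continuously to the boundary (built into having Riemann boundary values); at the three infinite ends it is precisely the exponential decay furnished by the regular-extension hypotheses that makes $\int F^2$ converge to a limit independent of the direction of approach, so that $\widehat{\cslitstrip}$ is effectively compact and $H$ is continuous on it. Once this is secured the argument is routine, and it is exactly the blueprint for the harder discrete analogue, where $\im\int F^2$ is only sub-/super-harmonic on the two sublattices rather than harmonic.
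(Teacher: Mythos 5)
Your route (glue a global holomorphic $F$ on $\cslitstrip$, set $H=\im\int F^2$, argue $H$ is constant on the boundary, apply the maximum principle) is not the paper's argument, and as written it has a genuine gap at the boundary junction points. The hypotheses give only $L^2$ information ($f\in\cfunctionsp$ with vanishing singular parts), so $F$ is a Hardy-type sum of modes and the best a priori pointwise bound is $|F(z)|=O\big(\dist(z,\ccrosssec)^{-1/2}\big)$ near the cross-section; your claim that continuity of $F$ at the slit tip is ``built into having Riemann boundary values'' is false in this class --- e.g.\ $\cmixPoleT{1/2}$ has Riemann boundary values, is square-integrable on $\ccrosssec$, and blows up like $z^{-1/4}$ at the tip. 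Consequently neither the continuity of $G=\int F^2$ at the tip nor at the two cross-section endpoints $\mhalf+0\ii$, $\half+0\ii$ is available: even the constancy of $H$ along each full vertical boundary line is only immediate on the portions above and below these endpoints, since there $F$ is given by two different expansions matching only in $L^2$. Exponential decay at the three infinite extremities identifies the constants within three groups (upper halves of the two lines; lower left half-line with the left slit side; lower right half-line with the right slit side), but merging the groups requires crossing $\mhalf$, $\half$ or $0$, and with only the crude $\dist^{-1/2}$ bound the integral of $F^2$ along small arcs around those points is not even finite; moreover, without boundedness of $H$ at these finitely many exceptional prime ends the maximum principle itself can fail (Poisson-kernel--type counterexamples live exactly there). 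The gap is repairable --- local Schwarz reflection plus the $L^2$ constraint forces analyticity across $\pm\half$ and rules out anything worse than $z^{-1/4}$ at the tip, after which $H$ extends continuously --- but that local analysis is the actual content of your step and is missing; the $L^2$-boundary-value Morera gluing across the cross-section is likewise only asserted.

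For comparison, the paper never constructs a global $F$ and so never meets these issues: it takes smooth partial sums $F_N$ of the mode expansions, applies Cauchy's theorem to $F_N^2$ separately in the top half-strip and in each leg (the Riemann boundary values make the side-boundary contributions nonnegative, resp.\ nonpositive), concludes $\int_{\ccrosssec} f^2\,\ud x$ is both $\geq 0$ and $\leq 0$, hence zero, hence the side-boundary integrals of $|F_N|^2$ tend to zero, so the limiting extension vanishes on a boundary line and therefore identically, forcing all coefficients of $f$ to vanish. If you wish to keep your approach, supply the junction-point analysis; otherwise the partial-sum argument is both shorter and free of these regularity questions.
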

\begin{proof}
First we will show that~$\re \int_{\aaa}^{\bbb} f(x)^2 \, \ud x = 0$.
By the assumption~$\cprTpole(f) = 0$, we can write
$f = \sum_{k'<0} c_{k'}\ccffun_{k'}$ with real 
coefficients~$c_{k'}$ which are square summable,
${\sum_{k'<0} c_{k'}^2 < \infty}$.
To obtain smooth approximations, for $N \in \bN$ define
the partial sum
\begin{align*}
f_N := & \sum_{-N < k' < 0} c_{k'} \ccffun_{k'} , &
F_N := & \sum_{-N < k' < 0} c_{k'} \ccfFun_{k'} ,
\end{align*}
so that $f_N \to f$ in~$\cfunctionsp$
and $F_N$ is a holomorphic function in the top half
\[\cstrip^{\topsym} := \set{z\in\cstrip \; \big| \; \im(w)>0} \]
which extends smoothly to the boundary~$\bdry \cstrip^{\topsym}$, 
coincides with~$f_N$ on the
cross-section~$\ccrosssec \subset \bdry \cstrip^{\topsym}$,
and has Riemann boundary values~\eqref{eq: RBV in strip} on the
left and right boundaries. 
Moreover, $F_N(x+\ii y)$ decays exponentially as~$y\to+\infty$.
By Cauchy's integral theorem for~$F_N^2$ 
along~$\bdry \cstrip^{\topsym}$ with Riemann boundary values
$F_N(\pm\half + \ii y)^2 = \pm \ii \, \big| F_N(\pm\half + \ii y) \big|^2$, 
we get
\begin{align*}
\int_{\aaa}^{\bbb} f_N(x)^2 \, \ud x \;
= \; & + \ii \int_{0}^{+\infty} F_N \big(-\half+\ii y \big)^2 \; \ud y 
    - \ii \int_{0}^{+\infty} F_N \big(+\half+\ii y \big)^2 \; \ud y \\
= \; & \int_{0}^{+\infty} \big| F_N(-\half+\ii y) \big|^2 \; \ud y 
    + \int_{0}^{+\infty} \big| F_N(+\half+\ii y) \big|^2 \; \ud y 
\; \geq \; 0 .
\end{align*}
Since~$f_N \to f$ in~$\cfunctionsp$ as $N \to \infty$, we conclude
\[ \int_{\aaa}^{\bbb} f(x)^2 \, \ud x \geq 0 . \]
Entirely similar arguments in the left and the right legs yield
\[ \int_{\aaa}^{0} f(x)^2 \, \ud x \leq 0 , \qquad
\int_{0}^{\bbb} f(x)^2 \, \ud x \leq 0 . \]
Together these observations imply that $\int_{\aaa}^{\bbb} f(x)^2 \, \ud x = 0$, 
and in particular also 
\[ \int_{0}^{+\infty} \Big| F_N \big(\half+\ii y \big) \Big|^2 \; \ud y
\; \longrightarrow \; 0 
\qquad \text{ as } N \to \infty . \]

Then consider $F \colon \cstrip^{\topsym} \to \bC$
defined as~$F=\sum_{k'<0} c_{k'} \ccfFun_{k'}$. 
Since~$\left| \ccfFun_{k'}(x+iy) \right| \leq e^{-k'\pi y}$,
we have $F_N \to F$ 
uniformly on $\set{ z \in \cstrip \; \big| \; \im(z) > \eps}$ 
for any~$\eps>0$, and~$F$ is holomorphic in~$\cstrip^{\topsym}$
and smooth 
in~$\overline{\cstrip^{\topsym}}\setminus\ccrosssec$. 
But we now have
\begin{align*}
\int_{\eps}^{+\infty} \Big| F \big(\half+\ii y \big) \Big|^2 \; \ud y
= \lim_{N \to \infty}
    \int_{\eps}^{+\infty} \Big| F_N \big(\half+\ii y \big) \Big|^2 \; \ud y
= 0
\end{align*}
for any~$\eps>0$, so $F$ vanishes identically on
the right boundary vertical line (similarly for left).
Vanishing on a line segment implies~$F \equiv 0$, and therefore 
we get that~$c_{k'} = 0$ for all~$k'$, and also~$f \equiv 0$.
\end{proof}

\subsubsection*{Pulled-back monomials}

By Lemma~\ref{lem:continuous-uniqueness} above,
the singular parts~\eqref{eq: definition of singular parts}
uniquely characterize a function~$f \in \cfunctionsp$.
It is therefore natural to introduce basis
functions, which have exactly one singular Fourier mode of 
a given order in one of the three extremities 
of the slit-strip, and which are regular in the 
other two extremities. It is easier to first 
construct functions which are a mixture with 
finitely many singular Fourier modes, and to then recursively 
extract the ones with a single singular Fourier mode.

In the upper half plane
\begin{align*}
\bH = \set{ w \in \bC \; \Big| \; \im(w) > 0} ,
\end{align*}
the Riemann boundary values~\eqref{eq:rbv_cf} 
amount to the requirement that the functions are purely imaginary on the real 
axis. Therefore imaginary constant multiples of Laurent monomial functions
centered on the real axis, $w 
\mapsto \ii \, (w-c)^n$, $n \in \bZ$, $c \in \bR$, are appropriate singular 
modes in the half-plane.
Conformal transformation as $\half$-forms preserves the Riemann boundary 
values~\eqref{eq:rbv_cf}.
This guides the construction below.

Consider the conformal map
\begin{align}\label{eq: uniformizing map of the slit-strip}
\mapSS
\colon \; & \cslitstrip \to \bH &
\mapSS(z) 
= \; & \half \, \sqrt{1-e^{-2\ii\pi z}} 
\end{align}
from the slit-strip to the upper half-plane, where the branch of the square root 
is such that it always has a positive imaginary part.
It maps the top extremity of the slit-strip to~$+\ii \cdot \infty$,
and the left and right downwards extremities to~$-\half$ and~$+\half$, 
respectively.
\begin{figure}[tb]
\centering
\subfigure[The mapping~$z \mapsto \ii \, e^{-\ii \pi z}$ is conformal from the 
the strip~$\cstrip$ to the half plane~$\bH$: the level lines of its real and 
imaginary parts are shown here.]
{
    \includegraphics[width=.33\textwidth]{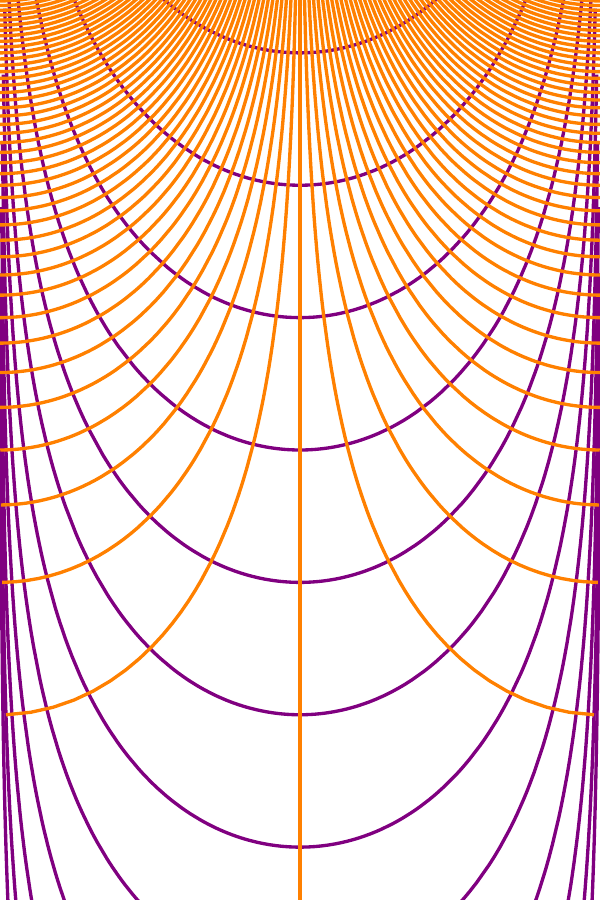}
	\label{sfig: confmap to strip}
}
\hspace{2.5cm}
\subfigure[
The mapping~$z \mapsto \mapSS(z)$ is conformal from the 
the slit-strip~$\cslitstrip$ to the half plane~$\bH$: the level lines of its 
real and imaginary parts are shown here.]
{
    \includegraphics[width=.33\textwidth]{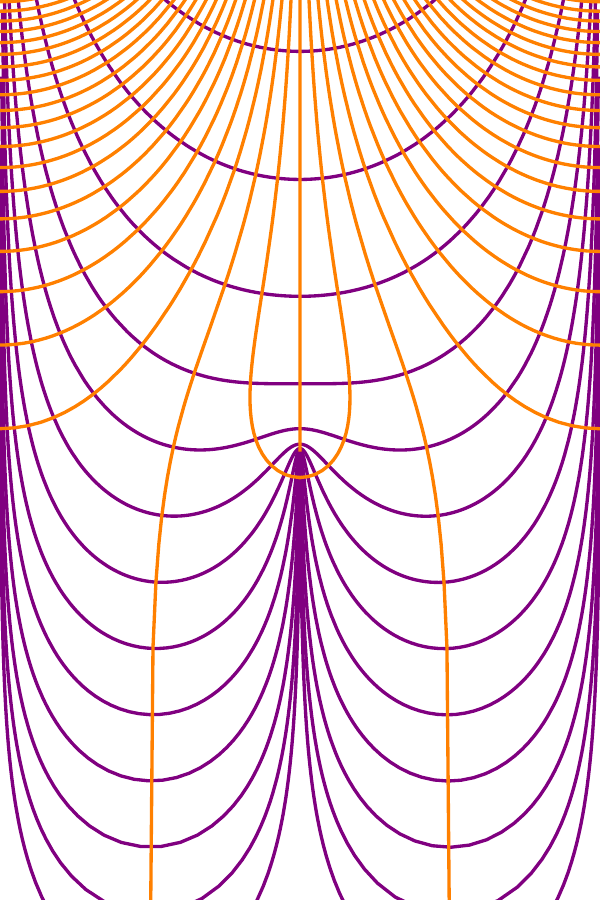}
	\label{sfig: confmap to slit-strip}
}
\caption{Illustrations of conformal maps to the strip and the slit-strip.
}
\label{fig: conformal maps to strip and slit-strip}
\end{figure}
\begin{figure}[tb]
\centering
\includegraphics[width=.7\textwidth]{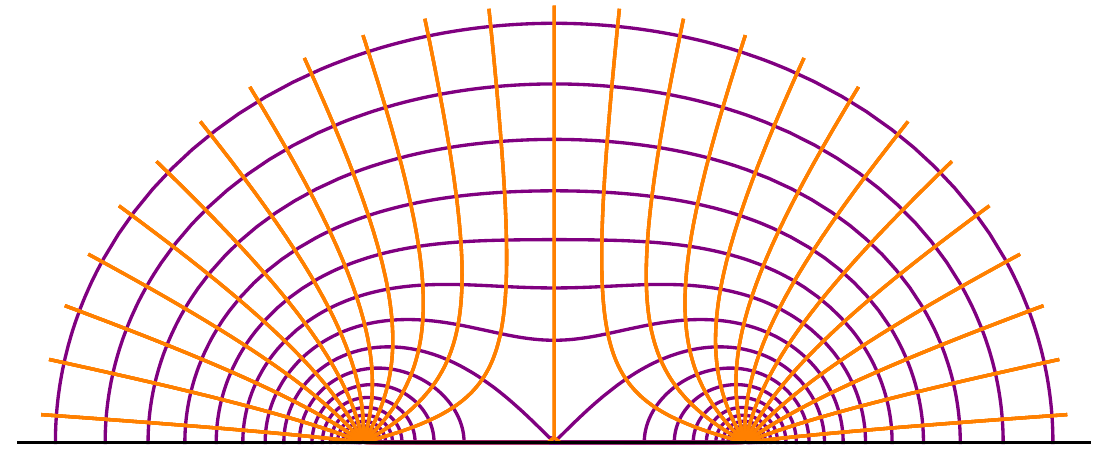}
\caption{Illustration of the conformal map $\mapSS \colon 
\cslitstrip \to \bH$ from the slit-strip to the half-plane:
the images in the half-plane of horizontal and vertical lines in the slit-strip 
are shown here.
}
\label{fig: conformal map from slit-strip}
\end{figure}
Illustrations of this conformal 
map~\eqref{eq: uniformizing map of the slit-strip} are given in
Figures~\ref{fig: conformal maps to strip and slit-strip}
and~\ref{fig: conformal map from slit-strip}.
Note the asymptotics in the three extremities of the slit-strip
\begin{align*}
\mapSS(z) = \; & \frac{\ii}{2} \, e^{-\pi\ii z} + \OO(e^{\pi \ii z}) , &
\mapSS'(z) = \; & \frac{\pi}{2} \, e^{-\pi\ii z} + \OO(e^{\pi \ii z})&
\text{ in the top,} \\
\mapSS(z) = \; & 
    \frac{-1}{2}+\frac{1}{4} \, e^{-2\pi\ii z} + \OO(e^{-4\pi\ii z}) , &
\mapSS'(z) = \; &
    \frac{-\pi\ii}{2}e^{-2\pi\ii z} + \OO(e^{-4\pi\ii z}) &
\text{ in the left leg,} \\
\mapSS(z) = \; & 
    \frac{+1}{2}-\frac{1}{4} \, e^{-2\pi\ii z} + \OO(e^{-4\pi\ii z}) , &
\mapSS'(z) = \; &
    \frac{+\pi\ii}{2}e^{-2\pi\ii z} + \OO(e^{-4\pi\ii z}) &
\text{ in the right leg.}
\end{align*}

To use the conformal map~$\mapSS$ for unique pull-backs of $\half$-forms,
let us fix a branch of the square root of the derivative\footnote{%
Note that since~$\mapSS$ is conformal, the derivative~$\mapSS'$ is non-vanishing 
in the whole domain~$\cslitstrip$. As this domain is simply connected,
it is possible to choose a single-valued branch of~$\sqrt{\mapSS'}$ 
on~$\cslitstrip$. There are two possible branch choices, which differ by a 
sign, and for definiteness we fix one of them here.
In our final results of the series, an even number of these square
roots will appear as factors, so the results will actually be 
independent of the branch choice made here.}
\begin{align*}
\sqrt{\mapSS'} \, \colon \cslitstrip \to \bC \setminus \set{0}
\end{align*}
so that $\sqrt{\mapSS'(x^\lftsym_0)} \in e^{\ii \pi / 4} \, \bR_+$ for 
boundary points~$x^\lftsym_0$ on the left boundaries
and $\sqrt{\mapSS'(x^\rgtsym_0)} \in e^{-\ii \pi / 4} \, \bR_+$ for 
boundary points~$x^\rgtsym_0$ on the right boundaries.\footnote{%
The left boundary is taken to include both the 
case~$\re(x^\lftsym_0)=-\half$, and
the case that $x^\lftsym_0$ is a prime end on the right side of the slit.
The right boundary is taken to include the
case~$\re(x^\rgtsym_0)=+\half$, and 
the case that $x^\rgtsym_0$ is a prime end on the left side of the slit.}
With this branch choice the asymptotics in the 
three extremities of the slit-strip are
\begin{align*}
\sqrt{\mapSS'(z)} 
= \; & \sqrt{\frac{\pi}{2}} \, e^{-\ii \pi z / 2} 
    + \OO (e^{\ii 3 \pi z / 2}) &
\text{ in the top,} & \\
\sqrt{\mapSS'(z)} 
= \; & \sqrt{\frac{\pi}{2}} \, e^{-\ii \pi (z+\frac{1}{4})} 
    + \OO (e^{-\ii 3 \pi z}) &
\text{ in the left leg,} & \\
\sqrt{\mapSS'(z)} 
= \; & \sqrt{\frac{\pi}{2}} \, e^{-\ii \pi (z-\frac{1}{4})} 
    + \OO (e^{-\ii 3 \pi z}) &
\text{ in the right leg.} &
\end{align*}
Define, for $k \in \pm \poshalfint$
(although we will primarily use the case of positive 
half-integer $k \in \poshalfint$), functions
$\cmixPoleT{k} , \cmixPoleL{k} , \cmixPoleR{k} \colon \cslitstrip \to \bC$ by 
the formulas
\begin{align}
\nonumber
\cmixPoleT{k}(z)
    := \; & \ii \; \mapSS(z)^{k-\half} \; \sqrt{\mapSS'(z)} \\
\label{eq: pulled-back monomials}
\cmixPoleL{k}(z)
    := \; & \ii \; \big( \mapSS(z) + \half \big)^{-k-\half} 
        \; \sqrt{\mapSS'(z)} \\
\nonumber
\cmixPoleR{k}(z)
    := \; & \ii \; \big( \mapSS(z) - \half \big)^{-k-\half} 
        \; \sqrt{\mapSS'(z)} .
\end{align}
The functions~\eqref{eq: pulled-back monomials} are holomorphic and have 
Riemann boundary values~\eqref{eq: RBV on the slit part} in the 
slit-strip~$\cslitstrip$. Their asymptotics in the corresponding 
extremities are given by
\begin{align*}
\cmixPoleT{k}(z) 
= \; & \ii \, \sqrt{\frac{\pi}{2}} \, \Big(\frac{\ii}{2} \Big)^{k-\half}
        \, e^{-\ii \pi k z} + \OO(e^{-\ii \pi (k-2) z}) &
\text{ in the top,} & \\
\cmixPoleL{k}(z) 
= \; &  e^{\ii \pi / 4} \, \sqrt{\frac{\pi}{2}} \; 4^{k+\half}
        \; e^{\ii 2 \pi k z} + \OO(e^{\ii 2 \pi (k-1) z}) &
\text{ in the left leg,} & \\
\cmixPoleR{k}(z) 
= \; & - e^{-\ii \pi / 4} \sqrt{\frac{\pi}{2}} \, (-4)^{k+\half} 
        \, e^{\ii 2 \pi k z} + \OO(e^{\ii 2 \pi (k-1) z}) &
\text{ in the right leg.} &
\end{align*}
In particular, for any positive half-integer~$k \in \poshalfint$,
the singular parts of $\cmixPoleT{k}, \cmixPoleL{k}, \cmixPoleR{k}$ in the 
corresponding extremities contain finitely many singular Fourier modes.
Moreover, it is easy to see that these functions are regular in the other two 
extremities.

\subsubsection*{Pure pole functions in the slit-strip}
From the pulled-back monomials above, 
through a simple upper triangular transformation, we can construct
functions characterized by a single Fourier mode as their singular parts.
The functions are illustrated in 
Figure~\ref{fig: continuous pole functions}.
\begin{prop}
\label{prop: pure pole functions}
For all positive half-integers~$k \in \poshalfint$,
there exist functions
\[ \cpoleT{k}, \; \cpoleL{k}, \; \cpoleR{k} \; \in \; \cfunctionsp \]
characterized by the following singular parts:
\begin{align}
\nonumber
\cprTpole(\cpoleT{k}) = \; & \ccffun_k , &
\cprLpole(\cpoleT{k}) = \; & 0 , &
\cprRpole(\cpoleT{k}) = \; & 0 , \\ 
\label{eq:cpole-asymptotic}
\cprTpole(\cpoleL{k}) = \; & 0 , &
\cprLpole(\cpoleL{k}) = \; & \ccffun^{\lftsym}_{-k} , &
\cprRpole(\cpoleL{k}) = \; & 0 , \\ 
\nonumber
\cprTpole(\cpoleR{k}) = \; & 0 , &
\cprLpole(\cpoleR{k}) = \; & 0 , &
\cprRpole(\cpoleR{k}) = \; & \ccffun^{\rgtsym}_{-k} .
\end{align}
The functions~$\cpoleT{k}, \cpoleL{k}, \cpoleR{k}$ are the restrictions to the 
cross-section~$\ccrosssec$ of globally defined holomorphic 
functions $\cPoleT{k}, \cPoleL{k}, \cPoleR{k} \colon \cslitstrip \to \bC$ with 
Riemann boundary values on the slit-strip, which we call the \term{pure pole 
functions}.
We can express these pure pole functions as finite linear combinations of the 
pulled-back monomials~\eqref{eq: pulled-back monomials} and vice versa,
\begin{align*}
\cPoleT{k} 
= \; & \sum_{0 < k'\leq k} \mixToPoleT{k}{k'} \, \cmixPoleT{k'} , &
\cPoleL{k} 
= \; & \sum_{0 < k'\leq k} \mixToPoleL{k}{k'} \, \cmixPoleL{k'} , &
\cPoleR{k} 
= \; & \sum_{0 < k'\leq k} \mixToPoleR{k}{k'} \, \cmixPoleR{k'} , & \\
\cmixPoleT{k} 
= \; & \sum_{0 < k'\leq k} \poleToMixT{k}{k'} \, \cPoleT{k'} , &
\cmixPoleL{k} 
= \; & \sum_{0 < k'\leq k} \poleToMixL{k}{k'} \, \cPoleL{k'} , &
\cmixPoleR{k} 
= \; & \sum_{0 < k'\leq k} \poleToMixR{k}{k'} \, \cPoleR{k'} , & 
\end{align*}
with certain real coefficients
$\mixToPoleT{k}{k'}, \mixToPoleL{k}{k'}, \mixToPoleR{k}{k'},
\poleToMixT{k}{k'}, \poleToMixL{k}{k'}, \poleToMixR{k}{k'}$.
\end{prop}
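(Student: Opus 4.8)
The plan is to construct the three families $\cPoleT{\bullet},\cPoleL{\bullet},\cPoleR{\bullet}$ in parallel by an upper-triangular change of basis from the pulled-back monomials~\eqref{eq: pulled-back monomials}. The three cases are completely symmetric --- replace $\mapSS(z)$ by $\mapSS(z)+\half$ or $\mapSS(z)-\half$, replace $\ccffun_{\bullet}$ by $\ccffun^{\lftsym}_{\bullet}$ or $\ccffun^{\rgtsym}_{\bullet}$, and reverse the positive/negative index convention for a leg --- so it suffices to treat the top functions $\cPoleT{k}$. The engine is the following claim about the pulled-back monomials, for $k,k'\in\poshalfint$: \textbf{(1)} $\cmixPoleT{k'}$ restricts to an element of~$\cfunctionsp$; \textbf{(2)} $\cprLpole(\cmixPoleT{k'})=0=\cprRpole(\cmixPoleT{k'})$, i.e.\ $\cmixPoleT{k'}$ admits regular extensions to both legs; \textbf{(3)} $\cprTpole(\cmixPoleT{k})=c_k\,\ccffun_k+\sum_{0<k'<k}a_{k,k'}\,\ccffun_{k'}$ for some real numbers $a_{k,k'}$ and a \emph{nonzero} real number $c_k$. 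Granting these, Proposition~\ref{prop: pure pole functions} follows by strong induction on $k\in\poshalfint$: having built $\cPoleT{k'}$ for $k'<k$ as finite real combinations of $\set{\cmixPoleT{k''} : 0<k''\le k'}$ with $\cprTpole(\cpoleT{k'})=\ccffun_{k'}$, set
\[ \cPoleT{k}\ :=\ \frac{1}{c_k}\Big(\cmixPoleT{k}-\sum_{0<k'<k}a_{k,k'}\,\cPoleT{k'}\Big), \]
a finite real combination of $\set{\cmixPoleT{k''} : 0<k''\le k}$, holomorphic on $\cslitstrip$ with Riemann boundary values, which by construction has $\cprTpole(\cpoleT{k})=\ccffun_k$; this yields the triangular matrix $\mixToPoleT{k}{k'}$ with diagonal $1/c_k$, and rearranging the display yields $\cmixPoleT{k}=c_k\,\cPoleT{k}+\sum_{0<k'<k}a_{k,k'}\,\cPoleT{k'}$, the inverse triangular matrix $\poleToMixT{k}{k'}$ with diagonal $c_k$. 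Since by~(2) each $\cmixPoleT{k''}$, hence $\cpoleT{k}$, has vanishing singular part in both legs, we get $(\cprTpole,\cprLpole,\cprRpole)(\cpoleT{k})=(\ccffun_k,0,0)$, and Lemma~\ref{lem:continuous-uniqueness} identifies $\cpoleT{k}$ as the unique function with these singular parts.

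For~(1): away from the slit tip $\mapSS$ is conformal up to the real-analytic boundary arcs, so $\cmixPoleT{k'}$ is bounded on $\ccrosssec$ outside any neighborhood of $z=0$; near $z=0$ one has $\mapSS(z)=\OO(z^{1/2})$ and $\mapSS'(z)=\OO(z^{-1/2})$, hence $\cmixPoleT{k'}(z)=\OO(z^{(k'-1)/2})$, square-integrable along $\ccrosssec$ since $k'>0$. For~(3): on the top half-strip $\cstrip^{\topsym}$, write $1-e^{-2\ii\pi z}=-e^{-2\ii\pi z}(1-e^{2\ii\pi z})$ and expand $\mapSS(z)^{k-\half}$ and $\sqrt{\mapSS'(z)}$ as binomial series in $e^{2\ii\pi z}$ (absolutely convergent there, as $|e^{2\ii\pi z}|<1$; no branch subtleties, since $k-\half\in\Znn$ when $k\in\poshalfint$). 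Multiplying, $\cmixPoleT{k}(z)=\sum_{j\ge 0}b_j\,e^{-\ii\pi(k-2j)z}$ locally uniformly on $\cstrip^{\topsym}$, with $b_0=\ii\sqrt{\tfrac{\pi}{2}}\,(\tfrac{\ii}{2})^{k-\half}$ as in the asymptotics recorded above. Setting $H(z):=\sum_{k-2j>0}b_j\,e^{-\ii\pi(k-2j)z}$, a finite real combination of the entire functions $\ccfFun_{k-2j}$, the remainder $G_0:=\cmixPoleT{k}-H=\sum_{k-2j<0}b_j\,e^{-\ii\pi(k-2j)z}$ is holomorphic on $\cstrip^{\topsym}$ with Riemann boundary values and decays exponentially as $\im(z)\to+\infty$; hence (see below) its restriction to $\ccrosssec$ lies in $\cfspTzero$, so $\cprTpole(\cmixPoleT{k})=H|_{\ccrosssec}=\sum_{k-2j>0}(b_j/C_{k-2j})\,\ccffun_{k-2j}$, with $\ccffun_k$-coefficient $c_k=b_0/C_k$. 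The $a_{k,k'}$ and $c_k$ are automatically real because $(\ccffun_{k'})_{k'\in\pm\poshalfint}$ is an orthonormal basis of the \emph{real} Hilbert space~$\cfunctionsp$, and $c_k=(-1)^{k+1/2}\sqrt{\pi/2}\,2^{\,1/2-k}\ne 0$ by the chosen normalization of $C_k$. Item~(2) is analogous: in the left (resp.\ right) leg $\mapSS(z)\to-\half$ (resp.\ $+\half$), so $\mapSS(z)^{k'-\half}$ stays bounded and bounded away from $0$ while $\sqrt{\mapSS'(z)}$ decays exponentially as $\im(z)\to-\infty$; thus $\cmixPoleT{k'}$ decays exponentially in each leg, which puts $\cmixPoleT{k'}|_{\ccrosssec^{\lftsym}}\in\cfspLzero$ and $\cmixPoleT{k'}|_{\ccrosssec^{\rgtsym}}\in\cfspRzero$.

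The step I expect to be the main obstacle is the one invoked in~(2) and~(3): a function holomorphic in a half-strip (or half-leg) with Riemann boundary values on its two vertical sides that decays exponentially towards the infinite end has vanishing singular part there. One cannot simply quote Lemma~\ref{lem:continuous-uniqueness} for this (that would be circular), but it follows by rerunning the technique of its proof. For the top case: with $g:=G_0|_{\ccrosssec}$ and $g_{\mathrm{pole}}:=\cprTpole(g)$, the function $g_{\mathrm{pole}}\in\cfspTpole$ extends holomorphically into the lower half-strip with Riemann boundary values and exponential decay downward, while also $g_{\mathrm{pole}}=g-\cprTzero(g)$ extends into the \emph{upper} half-strip with Riemann boundary values and exponential decay upward (because both $G_0$ and $\cprTzero(g)$ do). Applying Cauchy's theorem to the square of each of these two extensions, and using that the Riemann boundary values turn each vertical-side integrand into $\pm|\cdot|^2$, forces $\int_{\ccrosssec}g_{\mathrm{pole}}(x)^2\,\ud x$ to be simultaneously $\ge 0$ and $\le 0$, hence $=0$; then $g_{\mathrm{pole}}$ vanishes on the vertical sides and therefore identically. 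Once this lemma-in-disguise is isolated, the remaining pieces --- the tip estimate for~(1), the decay estimates for~(2), and the binomial expansion and triangular bookkeeping for~(3) --- are routine.
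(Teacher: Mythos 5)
Your construction is essentially the paper's own proof: the same upper-triangular recursion extracting $\cPoleT{k}$ (and its leg analogues) from the pulled-back monomials $\cmixPoleT{k'}$, with uniqueness supplied by Lemma~\ref{lem:continuous-uniqueness}, and your leading coefficient $1/c_k=(-1)^{k+1/2}2^k/\sqrt{\pi}$ agrees with the paper's normalization (e.g.\ $\cPoleT{1/2}=-\sqrt{2/\pi}\,\cmixPoleT{1/2}$). The only differences are minor: you get the inverse coefficients $\poleToMixT{k}{k'}$ by abstractly inverting the real triangular system rather than by the paper's explicit binomial expansion in the slit half-plane variable (which the paper wants anyway for the explicit formulas in the subsequent remark), and you isolate and prove, via a rerun of the square-of-the-integral Cauchy argument, the step ``exponential decay in an extremity forces the corresponding singular part to vanish,'' which the paper treats as evident from the recorded asymptotics.
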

\begin{figure}
\centering
\subfigure[The pole function~$\cpoleT{k}$ for~$k=1/2$.] 
{
    \includegraphics[width=.4\textwidth]{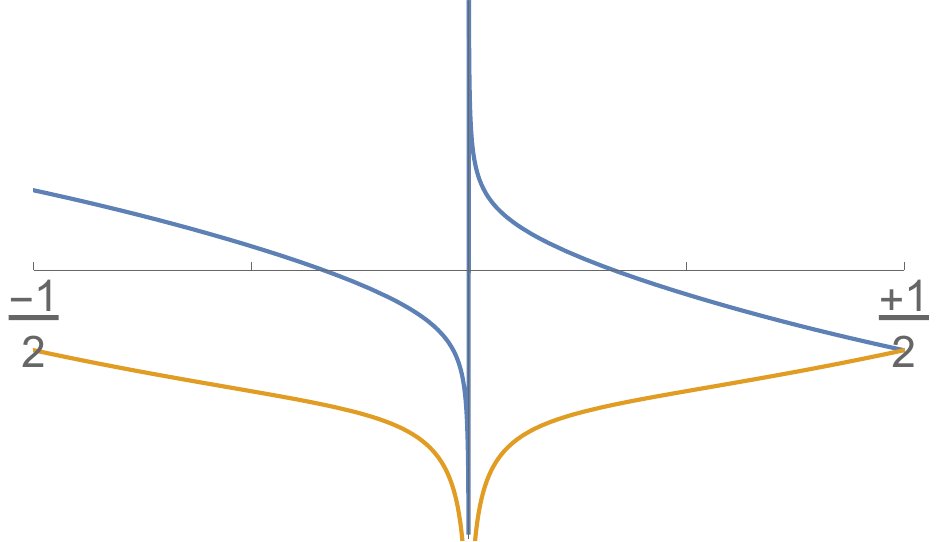}
	\label{sfig: ceigf P 1}
}
\hspace{2.0cm}
\subfigure[The pole function~$\cpoleR{k}$ for~$k=1/2$.] 
{
    \includegraphics[width=.4\textwidth]{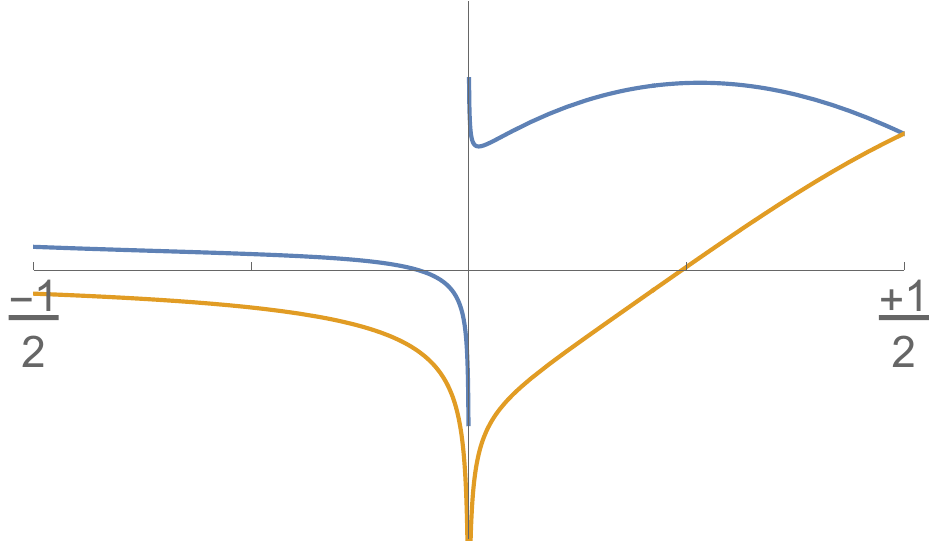}
	\label{sfig: ceigf M 1}
} \\
\subfigure[The pole function~$\cpoleT{k}$ for~$k=3/2$.] 
{
    \includegraphics[width=.4\textwidth]{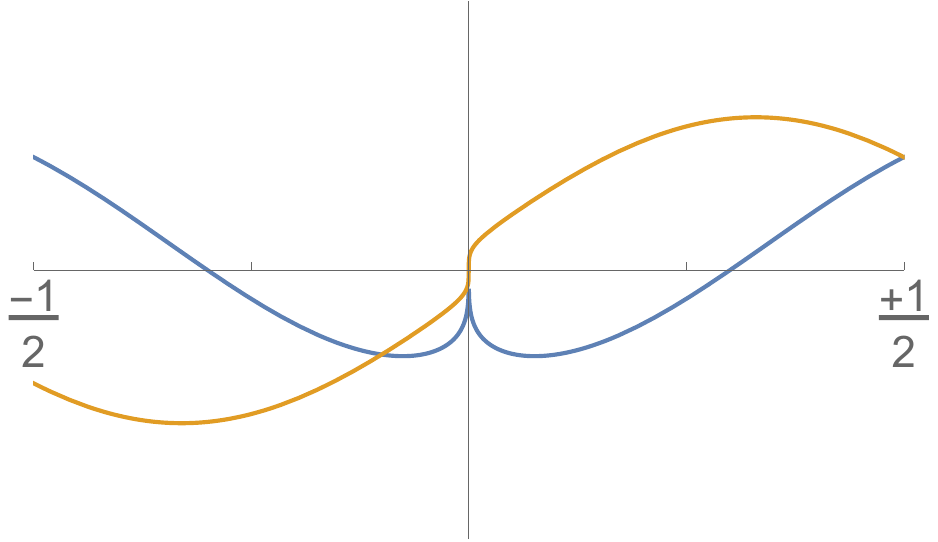}
	\label{sfig: ceigf P 3}
}
\hspace{2.0cm}
\subfigure[The pole function~$\cpoleR{k}$ for~$k=3/2$.] 
{
    \includegraphics[width=.4\textwidth]{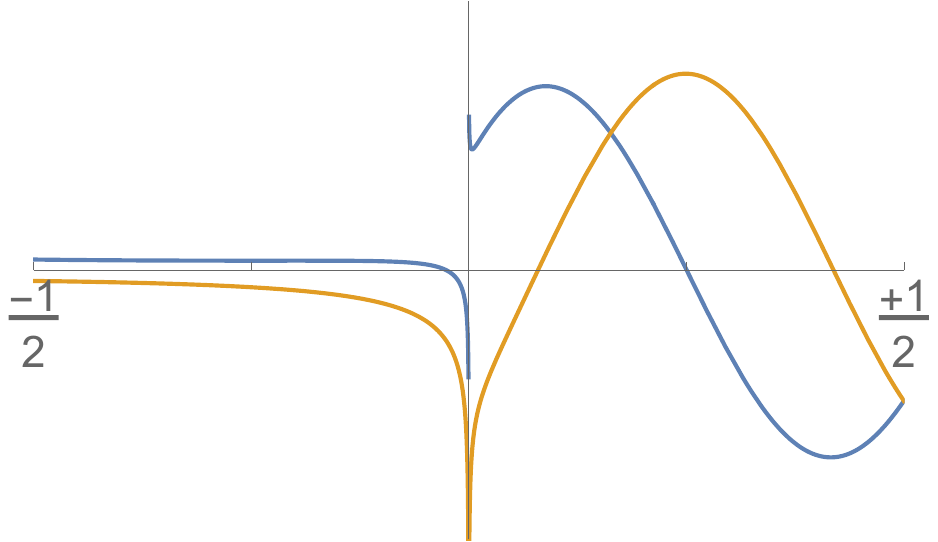}
	\label{sfig: ceigf M 3}
} \\
\subfigure[The pole function~$\cpoleT{k}$ for~$k=5/2$.] 
{
    \includegraphics[width=.4\textwidth]{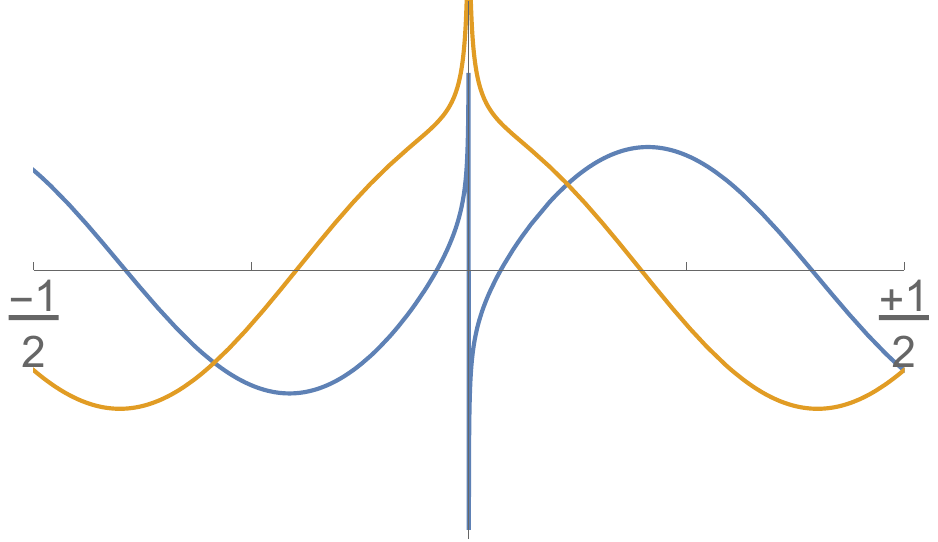}
	\label{sfig: ceigf P 5}
}
\hspace{2.0cm}
\subfigure[The pole function~$\cpoleR{k}$ for~$k=5/2$.] 
{
    \includegraphics[width=.4\textwidth]{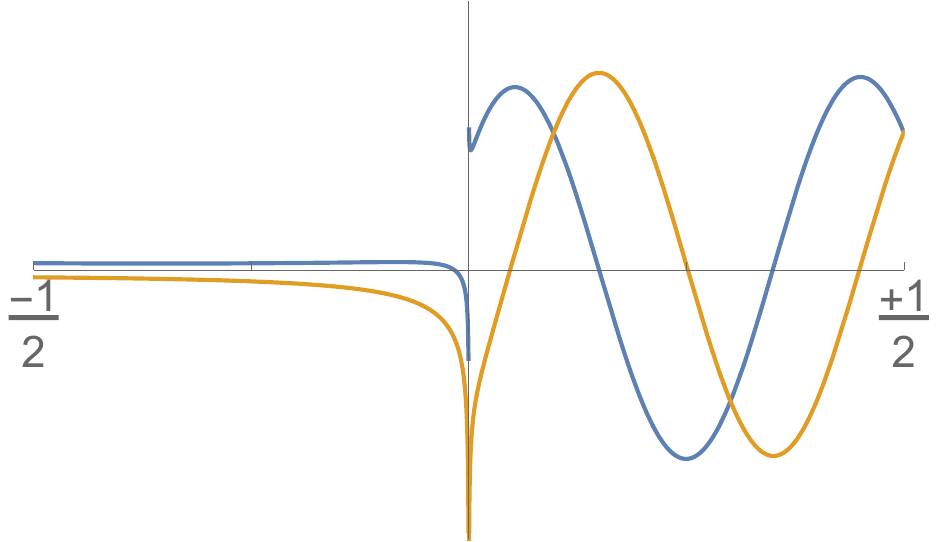}
	\label{sfig: ceigf M 5}
} \\
\subfigure[The pole function~$\cpoleT{k}$ for~$k=7/2$.] 
{
    \includegraphics[width=.4\textwidth]{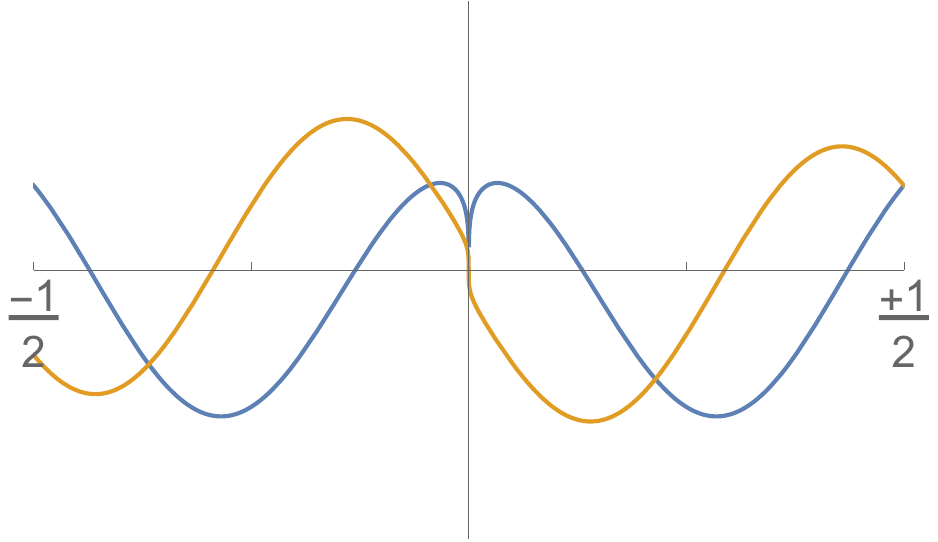}
	\label{sfig: ceigf P 7}
}
\hspace{2.0cm}
\subfigure[The pole function~$\cpoleR{k}$ for~$k=7/2$.] 
{
    \includegraphics[width=.4\textwidth]{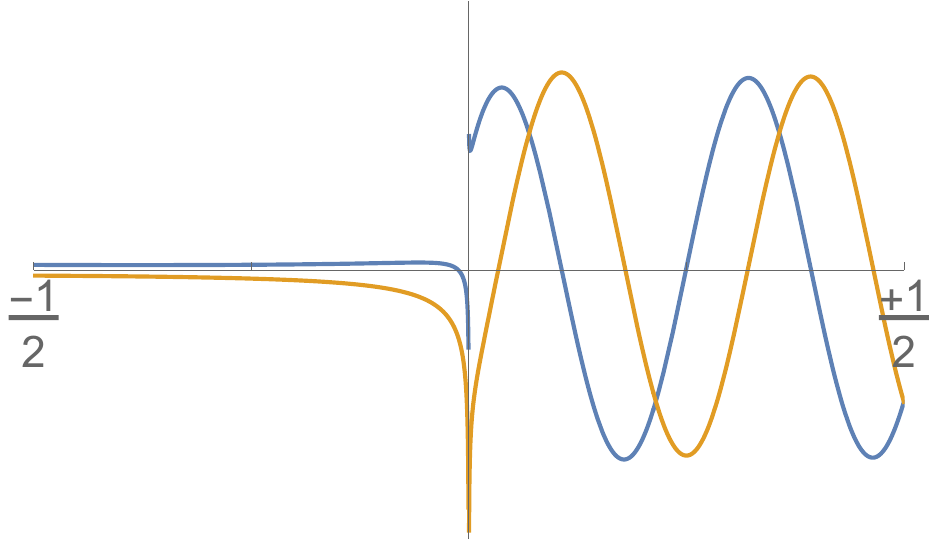}
	\label{sfig: ceigf M 7}
} \\
\label{fig: continuous pole functions}
\caption{Real and imaginary parts (blue and orange)
of the restrictions~$\cpoleT{k}, \cpoleR{k}$ to~$\ccrosssec$
of the pure pole functions~$\cPoleT{k}, \cPoleR{k}$ for the 
top and right extremities. (The pure pole functions~$\cPoleL{k}$
for the left leg, and their restrictions~$\cpoleL{k}$, 
are similar to the ones for the right leg.)}
\end{figure}
\begin{proof}

We sketch the recursive construction of $\cPoleT{k}$ in terms 
of pulled-back monomials; the other cases are analogous, and the resulting 
functions are uniquely characterized by \eqref{eq:cpole-asymptotic} from 
Lemma~\ref{lem:continuous-uniqueness}.

First, note that we may 
define~$\cPoleT{1/2}=-\sqrt{\frac{2}{\pi}}\cmixPoleT{1/2}$. 
If~$\cPoleT{k}$ for~$k=1/2,\ldots,k'$ are already constructed,
we may define
\begin{align*}
\cPoleT{k'+1}
= \; & \frac{(-1)^{k+\frac12}2^k}{\sqrt{\pi}} \, \cmixPoleT{k'+1}
    - \sum_{0<k\leq k'} 
      \innprod{\ccffun_k}{\tilde{\mathfrak{q}}_{\topsym;k'+1}} \, \cPoleT{k},
\end{align*}
where~$\tilde{\mathfrak{q}}_{\topsym;k}$ is the restriction of~$\cmixPoleT{k}$ 
to the real line.

Given the existence of pure pole functions, we may express the pulled-back 
monomials in terms of them as follows. By definition, there exist real 
coefficients~$a_{-k'}$ such that
\begin{align*}
\cPoleT{k}(z) 
= \; & \ccfFun_{k}(z) 
    + \sum_{k'>0} a_{-k'} \, \ccfFun_{-k'}(z)
\qquad \text{ for } z \in \cslitstrip \cap \bH . 
\end{align*}

Consider the conformal map~$\tilde{w} = \ii \, e^{-\pi \ii z} $ 
from the slit-strip to the slit upper half-plane~${\mathbb H \setminus [0,\ii]}$. 
Under a change of variable analogous to~\eqref{eq: pulled-back monomials},
\begin{align*}
\ccfFun_{k}(z(\tilde{w}))\cdot \left[ \frac{dz}{d\tilde{w}}\right]^{\frac12}
& =\frac{e^{-\ii\pi k}}{\sqrt{\pi}}\tilde{w}^{k-\frac12};
\end{align*}
and thus
\begin{align*}
\cPoleT{k}(z(\tilde{w}))\cdot \left[ \frac{dz}{d\tilde{w}}\right]^{\frac12}
& = \frac{e^{-\ii\pi k}}{\sqrt{\pi}}\tilde{w}^{k-\frac12}+\sum_{k'>0} 
\frac{e^{\ii\pi k'}a_{-k'}}{\sqrt{\pi}}\tilde{w}^{-k'-\frac12} .
\end{align*}
Since $\mapSS(z)=\frac12\sqrt{1+\tilde{w}(z)^2}$ and $\cmixPoleT{k}$ is 
defined precisely as pullbacks of half-plane monomials,
\begin{align*}
\cmixPoleT{k}(z(\tilde{w}))\cdot 
    \left[ \frac{dz}{d\tilde{w}}\right]^{\frac12}
= \; & \ii \left(\frac{\sqrt{1+\tilde{w}^2}}{2} \right)^{k-\frac12}
    \left[ \frac{\tilde{w}}{2\sqrt{1+\tilde{w}^2}}\right]^{\frac12}
= \frac{\ii \sqrt{\tilde{w}}}{2^k}
    \left( 1+\tilde{w}^2 \right)^{\frac{k-1}{2}} \\
= \; & \frac{\ii \tilde{w}^{k-\frac12} }{2^k}
    \left(1+ {\frac{k-1}{2} \choose{1}}\tilde{w}^{-2}+\cdots \right)
\mbox{ as }\tilde{w}\to\infty.
\end{align*}
The above binomial expansion only has finitely many terms with 
nonnegative powers of~$w$, and their coefficients are precisely the 
coefficients in the expansion of~$\cmixPoleT{k}$ in terms of 
the pure pole functions:
\begin{align*}
\cmixPoleT{k} 
= \; & \sum_{0 < k' \leq k} 
    \frac{\sqrt{\pi} e^{\ii \pi k'}\ii}{2^{k}}{\frac{k-1}{2}
    \choose{\frac{k-k'}{2}}} \cPoleT{k'} ,
\end{align*}
where the binomial coefficient is taken as zero when $\frac{k-k'}{2}$ 
is not an integer. The other cases are similar.
\end{proof}
\begin{rmk}
The coefficients in Proposition~\ref{prop: pure pole functions}
implement a change of basis, which reflects the relationship of 
the geometry of the slit-strip~$\cslitstrip$ 
and the half-plane~$\bH$ via the 
conformal map~\eqref{eq: uniformizing map of the slit-strip} between them.
These constants of geometric origin appear in our main result, which 
reconstructs the structure constants of the vertex operator algebra
(the Ising conformal field theory) from
the scaling limits of the fusion coefficients of the Ising model in 
lattice slit-strips.

Given that these constants thus account for the most nontrivial 
geometric input to our main result, we note that 
explicit expressions for them can be obtained similarly
to the above proof by expanding monomials of the map 
(between the slit half-plane and the half-plane)
$\psi(\tilde{w}) =  
\frac{\sqrt{1+\tilde{w}^2}}{2}$ and its inverse.
In the following, we will write $w$ for the variable in
the half-plane, and $\tilde{w}$ for the slit half-plane;
the boundary point $0$ in the slit half-plane corresponds
to two prime ends $0_-,0_+$, approached from left and 
right respectively.

Also note that we are considering holomorphic functions
in a neighborhood of a boundary pole where they have 
Riemann boundary values. 
Using a pull-back to the half-plane, we may define 
their Schwarz reflection, which may be uniquely 
expanded in a Laurent series: this guarantees the 
existence and uniqueness of the expansions.

The constants can then be expressed as follows:
\begin{align}
\mixToPoleT{k}{k'} 
&= \mbox{ $(-\ii)\cdot$(coefficient of $w^{k'-\frac12}$ 
  of $\frac{e^{-\ii\pi k}}{\sqrt{\pi}}{\left(\sqrt{4w^2-1}\right)^{k-\frac12}} 
  \left[ \frac{4w}{\sqrt{4w^2-1}}\right]^{\frac12}$ around ${w}=\infty$)}, 
\\ \nonumber
\mixToPoleL{k}{k'} 
&= \mbox{ $(-\ii)\cdot$(coefficient of $(w+\frac12)^{-k'-\frac12}$ 
  of $\sqrt{2}\ii \left({\sqrt{4w^2-1}}\right)^{-2k-\frac12} 
  \left[ \frac{4w}{\sqrt{4w^2-1}}\right]^{\frac12}$ around ${w}=-\frac12$)},
\\ \nonumber
\mixToPoleR{k}{k'} 
&= \mbox{ $(-\ii)\cdot$(coefficient of $(w-\frac12)^{-k'-\frac12}$ 
  of $\sqrt{2}e^{-\pi \ii k} \left({\sqrt{4w^2-1}}\right)^{-2k-\frac12} 
  \left[ \frac{4w}{\sqrt{4w^2-1}}\right]^{\frac12}$ around ${w}=\frac12$)} , \\ 
\nonumber
\poleToMixT{k}{k'} 
&= \mbox{ $e^{\ii \pi k'}\sqrt{\pi}\cdot$
  (coefficient of $\tilde{w}^{k'-\frac12}$
  of $\ii\left(\frac{\sqrt{1+\tilde{w}^2}}{2} \right)^{k-\frac12}
  \left[ \frac{\tilde{w}}{2\sqrt{1+\tilde{w}^2}}\right]^{\frac12}$ 
  around $\tilde{w}=\infty$)}, 
\\ \nonumber
\poleToMixL{k}{k'} 
&= \mbox{ $\frac{1}{\sqrt{2}\ii}\cdot$(coefficient of $\tilde{w}^{2k'-\frac12}$ 
  of $\ii\left(\frac{\sqrt{1+\tilde{w}^2}}{2}+\frac12 \right)^{-k-\frac12}
  \left[ \frac{\tilde{w}}{2\sqrt{1+\tilde{w}^2}}\right]^{\frac12}$ 
  around $\tilde{w}=0_-$)},
\\ \nonumber
\poleToMixR{k}{k'} 
&=\mbox{ $\frac{e^{\pi \ii k'}}{\sqrt{2}}\cdot$(coefficient 
  of $\tilde{w}^{2k'-\frac12}$ 
  of $\ii\left(\frac{\sqrt{1+\tilde{w}^2}}{2}-\frac12 \right)^{-k-\frac12}
  \left[ \frac{\tilde{w}}{2\sqrt{1+\tilde{w}^2}}\right]^{\frac12}$ 
  around $\tilde{w}=0_+$)} .
\end{align}
In particular, we have
\begin{align*}
\mixToPoleT{k}{k'} 
& = \left(\frac{-1}{4}\right)^{\frac{k-k'}{2}}{{\frac{k-1}2} 
    \choose{\frac{k-k'}{2} }}, &
\poleToMixT{k}{k'} 
& = \frac{\sqrt{\pi} e^{\ii \pi k'}\ii}{2^{k}}{\frac{k-1}{2} 
    \choose{\frac{k-k'}{2}}} .
\end{align*}
\end{rmk}


\section{Discrete function spaces and decompositions}%
\label{sec: discrete complex analysis}

In this section we study functions on discretized domains,
which have the properties analogous to holomorphicity and
Riemann boundary values. We introduce the 
spaces of functions analogous to the continuum case,
and find analogous distinguished functions:
vertical translation eigenfunctions in the lattice strip,
and functions with prescribed singularities in the
extremities of the lattice slit-strip.

As the appropriate notion of discrete holomorphicity 
we use s-holomorphicity. This notion and its powerful uses together
with Riemann boundary values were pioneered 
by Smirnov~\cite{Smirnov-towards_conformal_invariance, 
Smirnov-conformal_invariance_in_RCM_1}, and have been developed into
an extremely powerful tool for the study of the Ising 
model~\cite{CS-discrete_complex_analysis_on_isoradial_graphs, 
CS-universality_in_Ising}.
We have chosen a route to the main result of this
series which avoids entirely the use of the notions of 
s-holomorphic poles~\cite{HS-energy_density, 
Hongler-thesis} and s-holomorphic spinors~\cite{Izyurov-thesis, 
CI-holomorphic_spinor_observables, 
CHI-conformal_invariance_of_spin_correlations}
and square root singularities.
The quintessential trick
for s-holomorphic solutions to Riemann boundary value problem
is the ``imaginary part of the integral of the square'',
and we will be able to employ it largely in its most standard
incarnation in Section~\ref{sec: convergence of functions}.

In Section~\ref{ssec: discrete domains}
we introduce the discrete domains, and in
Section~\ref{ssec: discrete complex analysis definitions}
give the definitions of the
needed notions of discrete complex analysis and of the 
space of functions of interest to us. 
In Section~\ref{sub: functions in lattice strip}
we study the vertical translation eigenfunctions in the
strip, and the associated decomposition of the function space.
In Section~\ref{sub: distinguished s-holomorphic functions in the slit-strip}
we find the distinguished functions in the lattice slit-strip,
which have prescribed singularities in the extremities.

\subsection{The lattice strip and the lattice slit-strip}
\label{ssec: discrete domains}

The lattice analogues of the continuum strip and slit-strip 
domains~$\cstrip$ and~$\cslitstrip$ will be certain 
square grid discretizations of these domains.

Fix two integers
\begin{align*}
\lft,\rgt \in \bZ , \qquad
\lft < 0 < \rgt ,
\end{align*}
which represent the (horizontal) positions of the 
left and right boundaries. The slit will always be placed
at the horizontal position~$\mdpt$. The \term{width} of the strip
(in lattice units) is
\begin{align}\label{eq: strip width}
\width := \rgt - \lft \; \in \, \bN .
\end{align}
For simplicity of notation, we only carry the superscript label
for width~$\width$ in the notation to indicate the discretization,
although
information about~$\lft$ and~$\rgt$ is in fact important as well.
We mostly care about a symmetric situation 
(equal widths for the left and right substrip) in which
$\rgt = -\lft = \half \width$ and the limit of large even integer
widths~$\width \to \infty$,
but more general choices are possible and at times in fact clearer.

\subsubsection*{The lattice strip}

\begin{figure}[tb]
\centering
\subfigure[The square grid strip~$\dstrip$.] 
{
  \includegraphics[width=.35\textwidth]{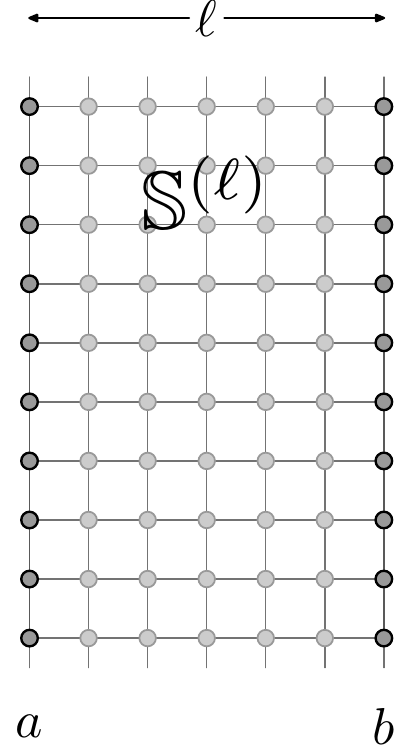}
  \label{sfig: lattice strip}
}
\hspace{2.5cm}
\subfigure[The square grid slit-strip~$\dslitstrip$.] 
{
  \includegraphics[width=.35\textwidth]{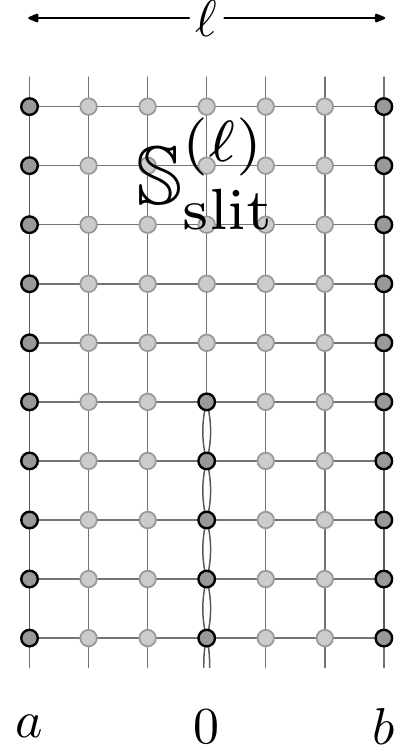}
  \label{sfig: lattice slit-strip}
}
\caption{The discrete strip and slit-strip graphs.
}
\label{fig: square grid strip and slit-strip}
\end{figure}
A discretized version of the cross-section~$\ccrosssec$
is the integer interval
\begin{align}\label{eq: lattice interval}
\crosssec 
:= \; & \set{\lft , \lft+1 , \lft+2 , \ldots , 
    \rgt - 1 , \rgt} 
    =: \dinterval{\lft}{\rgt} .
\end{align}
The \term{lattice strip} is then defined as the graph with vertex set
\begin{align}\label{eq: lattice strip}
\dstrip 
:= \; & \set{ x + \ii y \; \Big| \; 
    x \in \crosssec , \; y \in \bZ }
= \dinterval{\lft}{\rgt} \times \bZ ,
\end{align}
and with nearest neighbor edges as in
Figure~\ref{sec: discrete complex analysis}.\ref{sfig: lattice strip}~---
in other words, the lattice strip is
seen as an induced subgraph~$\dstrip \subset \bZ^2$ of
the ordinary square lattice.

The set of all edges of the lattice strip is denoted by~$\dE(\dstrip)$.
We will identify edges with their midpoints, so that
vertical edges are of the form~$x + \ii y' \in \dE(\dstrip)$
with $x \in \crosssec$ and $y' \in \bZ + \half$,
and horizontal edges are of the form $x' + \ii y \in \dE(\dstrip)$
with $y \in \bZ$ and $x' \in \crosssecdual$
in the half-integer interval defined by
\begin{align}\label{eq: dual lattice interval}
\crosssecdual 
:= \; & \set{\lft + \frac{1}{2} , \lft + \frac{3}{2} , \ldots ,
			\rgt - \frac{3}{2} , \rgt - \frac{1}{2}} 
	=: \dintervaldual{\lft}{\rgt} .
\end{align}
In fact, it is this half-integer interval~$\crosssecdual$
on which our functions will be defined.

\subsubsection*{Lattice slit-strip}

The lattice slit-strip will be the (multi-)graph with the 
same set of vertices
\begin{align}\label{eq: lattice strip}
\dslitstrip 
:= \; & \set{ x + \ii y \; \Big| \; 
    x \in \crosssec , \; y \in \bZ }
= \dinterval{\lft}{\rgt} \times \bZ 
\end{align}
as the lattice strip, and otherwise also 
the same set of edges, except that there are double 
edges between nearest neighbors $\mdpt + \ii y$ and $\mdpt + \ii (y-1)$
for $y \leq 0$, i.e., along the slit. This is illustrated in 
Figure~\ref{sec: discrete complex analysis}.\ref{sfig: lattice slit-strip}.
The two different edges 
between nearest neighbors on the slit part have exactly the
same roles as the two different prime-ends corresponding to the
same boundary point on the slit in the continuum
slit-strip~$\cslitstrip$ of 
Figure~\ref{sec: function spaces}.\ref{sfig: cont slit-strip}:
one is thought to belong to (the boundary of) the left substrip and the other 
to (the boundary of) the right substrip.

The set of all edges of the lattice slit-strip is 
denoted by~$\dE(\dslitstrip)$.
Despite the presence of multi-edges, we continue to abuse 
the notation and usually
label edges of the lattice slit-strip by their 
midpoints. 
For any pair of edges along the slit which have coinciding
midpoints, we trust that it will always be sufficiently clear from 
the context which one of the two edges is meant (it should be clear
whether we are considering the left or the right substrip).

\subsection{S-holomorphicity, Riemann boundary values, and function spaces}
\label{ssec: discrete complex analysis definitions}

In this discrete setting, functions will be defined on the
edges of the graph.
We allow for functions defined on subgraphs as well,
so let $\dV$ denote the relevant (sub)set of vertices 
($\dV \subset \dstrip = \dslitstrip$)
and $\dE$ the relevant (sub)set of edges
($\dE \subset \dE(\dstrip)$ or $\dE \subset \dE(\dslitstrip)$).\footnote{We
furthermore always take the subgraphs to consist of all
vertices and edges adjacent to some connected set of faces of the 
lattice strip or the slit-strip, so the 
subsequent definitions in fact contain nontrivial requirements
regarding the functions.}
We consider functions
\begin{align*}
F \colon \dE \to \bC .
\end{align*}

\subsubsection*{S-holomorphicity}

\begin{figure}[tb]
\includegraphics[width=.4\textwidth]{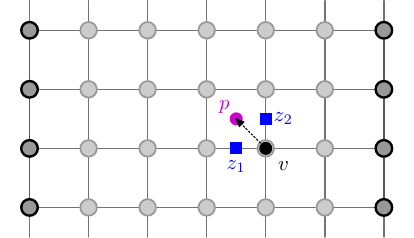}
\hspace{2cm}
\includegraphics[width=.4\textwidth]{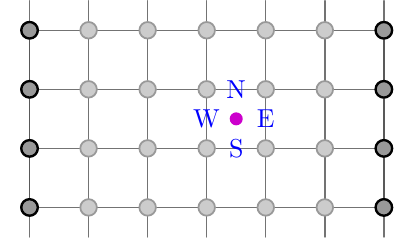}
\caption{S-holomorphicity is a condition for the values of
a function on pairs of edges~$z_1 , z_2$
adjacent to the same face~$p$ and vertex~$v$.}
\label{fig: s-holomorphicity}
\end{figure}
A function~$F \colon \dE \to \bC$ is said to be \term{s-holomorphic}, 
if for all pairs of edges $z_1, z_2 \in \dE$ which are adjacent,
in the sense that both are adjacent to 
the same face~$p$ and the same vertex~$v$, we have
\begin{align}\label{eq: s-holomorphicity}
F(z_1) + \frac{\ii \, |v-p|}{v-p} \, \overline{F(z_1)}
= F(z_2) + \frac{\ii \, |v-p|}{v-p} \, \overline{F(z_2)} .
\end{align}
Equivalently, the values of~$F$ at~$z_1$ and~$z_2$ have
the same projections to the line~$\sqrt{\ii/(v-p)} \, \bR$
in the complex plane. Depending on the position of the
adjacent edges~$z_1, z_2$ with respect to the face, this
line is one among four possibilities. In view of this,
yet another explicit
way of writing the s-holomorphicity condition is the following.
Define the constant
\begin{align*}
\lambda = e^{\ii \pi / 4} = \frac{1+\ii}{\sqrt{2}}
\end{align*}
that we will keep using throughout,
as is common in related literature.
The s-holomorphicity condition is equivalent to requiring
that when $\Nth, \Est, \Sth, \Wst$ are the four edges surrounding
a face as in Figure~\ref{fig: s-holomorphicity}, then
\begin{align*}
F(\Nth) + \eighthrootbar \overline{F(\Nth)} 
    & =  F(\Wst) + \eighthrootbar \overline{F(\Wst)} , \quad &
F(\Nth) + \eighthroot \overline{F(\Nth)} 
    & =  F(\Est) + \eighthroot \overline{F(\Est)} , \\
F(\Sth) + \eighthrootfive \overline{F(\Sth)} 
    & =  F(\Wst) + \eighthrootfive \overline{F(\Wst)} , \quad &
F(\Sth) + \eighthrootthree \overline{F(\Sth)} 
    & =   F(\Est) + \eighthrootthree \overline{F(\Est)} .
\end{align*}

S-holomorphicity implies (but is not implied by) the usual discretized 
Cauchy-Riemann
equations
\begin{align*}
F(z+\frac{1}{2}) - F(z-\frac{1}{2}) 
= -\ii \, \Big( F(z+\frac{\ii}{2}) - F(z-\frac{\ii}{2}) \Big)
\end{align*}
around any face or vertex~$z$ where all the needed values are defined.
This at least gives the interpretation for 
s-holomorphicity as a notion of
discrete holomorphicity, which may not have been apparent
directly from definition~\eqref{eq: s-holomorphicity}.
Note, however, that
s-holomorphicity is an $\bR$-linear condition for the 
complex-valued function~$F$~--- not $\bC$-linear!

\subsubsection*{Riemann boundary values}
The discrete version of Riemann boundary values is defined
very analogously to the continuum version.
If~$z \in \dE$ is a boundary edge, and $\tccw(z)$ 
denotes the unit complex number in the direction of the 
tangent to the boundary oriented counterwlockwise
(i.e., so that the face to the left of the oriented edge is
a part of the discrete domain), then 
$F$ is said to have a \term{Riemann boundary value} at~$z$ if
\begin{align}\label{eq:rbv_df}
F(z) \in \ii \, \tccw(z)^{-1/2} \, \bR.
\end{align}

We will only use Riemann boundary values on the  
boundaries of the lattice strip and lattice slit-strip.
These boundaries are taken to consist
of the vertical edges on the left and on the right 
as well as on the slit.
The vertical edges on the left boundary are of the form
$z=\lft + \ii y'$ with $y' \in \bZ + \half$ and their
counterclockwise tangent points downwards, $\tccw(z)=-\ii$.
The vertical edges on the right boundary are of the form
$z=\rgt + \ii y'$ with $y' \in \bZ + \half$ and their
counterclockwise tangent points upwards, $\tccw(z)=+\ii$.
The requirement of Riemann boundary values for functions
in the lattice strip~$\dstrip$ are thus\footnote{In the article
\cite{HKZ-discrete_holomorphicity_and_operator_formalism}
an unfortunate misprint occurs at the statement of the Riemann
boundary values: the conditions on the left and right
boundaries are reversed. Formulas~\eqref{eq: dRBV in strip}
here are correct with the conventions used in both articles.}
\begin{align}\label{eq: dRBV in strip}
F \big( \lft + \ii y' \big) \in e^{-\ii \pi / 4} \, \bR
\qquad \text{ and } \qquad
F \big( \rgt + \ii y' \big) \in e^{+\ii \pi / 4} \, \bR
\end{align}
for $y' \in \bZ + \half$.

The slit part of the boundary has doubled edges,
one for the left side of the slit (which acts as a right
boundary for the left substrip) and one for the right side 
(which acts as a left boundary to the right substrip).
Denoting these edges respectively by
$z = \mdpt^- + \ii y'$ 
and $z = \mdpt^+ + \ii y'$, 
the Riemann boundary values on the slit part are
\begin{align}\label{eq: dRBV on the slit part}
F \big( \mdpt^- + \ii y' \big) \in e^{+\ii \pi / 4} \, \bR 
\qquad \text{ and } \qquad
F \big( \mdpt^+ + \ii y' \big) \in e^{-\ii \pi / 4} \, \bR
\end{align}
for $y' \in \bZ + \half$, $y' < 0$.

Note the analogy of~\eqref{eq: dRBV in strip}
and~\eqref{eq: dRBV on the slit part}
with~\eqref{eq: RBV in strip} 
and~\eqref{eq: RBV on the slit part},
and note once more that Riemann boundary values 
are~$\bR$-linear conditions for the complex-valued 
function~$F$.

\subsubsection*{Functions on the discrete cross-section}
Analogously to the continuum approach in 
Section~\ref{ssec: functions in strip and slit-strip},
we study s-holomorphic functions with Riemann boundary values
on the lattice strip~$\dstrip$
and lattice slit-strip~$\dslitstrip$ through their restrictions
to the horizontal cross-sections at height zero. 
These cross-sections consist of the~$\width$ horizontal
edges whose midpoints are $x' \in \crosssecdual$
as in~\eqref{eq: dual lattice interval}.

The space 
\begin{align}\label{eq: discrete function space}
\dfunctionsp := \; & \bC^{\crosssecdual} 
\end{align}
is thought of as the space of all complex valued functions
$f \colon \crosssecdual \to \bC$ on the 
discrete cross-section~$\crosssecdual$.
Because the main operations we consider are $\bR$-linear,
we interpret~$\dfunctionsp$ as a real vector space
of dimension
\begin{align*}
\dmn_\bR (\dfunctionsp) = 2 \width 
    .
\end{align*}
We equip it with the inner product defined
by the formula
\begin{align}\label{eq: inner product of discrete functions}
\innprod{f}{g} = \sum_{x' \in \crosssecdual}
    \Big( \re \big( f(x') \big) \, \re \big( g(x') \big) 
        + \im \big( f(x') \big) \, \im \big( g(x') \big) \Big) 
\end{align}
for~$f,g \in \dfunctionsp$.
This inner product induces the familiar norm
\begin{align*}
\| f \| = \Big(\sum_{x' \in \crosssecdual} |f(x')|^2 \Big)^{1/2} .
\end{align*}

\subsection{Vertical translation eigenfunctions in the lattice strip}
\label{sub: functions in lattice strip}

In Section~\ref{ssec: vertical translation eigenfunctions} we
saw that among the holomorphic functions with Riemann boundary
values in the strip, the vertical translation eigenfunctions
were exactly the extensions of quarter integer Fourier modes
on the cross-section. 
Here we address the analogous discrete question.

\subsubsection*{Discrete analytic continuation by one vertical step}
The operation of discrete analytic continuation by one vertical 
step was considered 
in~\cite{HKZ-discrete_holomorphicity_and_operator_formalism}.
We take from there the following result, whose proof is a
straightforward calculation from 
definitions~\eqref{eq: s-holomorphicity} 
and~\eqref{eq: dRBV in strip}.
\begin{prop}[{\cite[Lemmas~4 
 and~6]{HKZ-discrete_holomorphicity_and_operator_formalism}}]
\label{prop: discrete analytic continuation operator}
For any~$f \in \dfunctionsp$, 
there exists a unique 
function~$F \colon \dE(\dstrip) \to \bC$
which is s-holomorphic and has Riemann boundary values in the
lattice strip~$\dstrip$, and whose restriction to the
discrete cross-section 
coincides with~$f$:
\begin{align*}
F(x') = f(x') 
\qquad \text{for all } x' \in \crosssecdual \subset \dE(\dstrip) .
\end{align*}
If we define a new function $\propag f \colon \crosssecdual \to \bC$
in terms of this extension~$F$ by
\begin{align*}
\big(\propag f \big) (x') := F(x'+ \ii)
\qquad \text{for } x' \in \crosssecdual ,
\end{align*}
then $\propag f$ is explicitly given by
\begin{align*}
\big(\propag f \big) (x') = \; &
    2 \, f(x') 
  + \frac{\lambda^3}{\sqrt{2}} \, f(x'+1) 
  +\frac{\lambda^{-3}}{\sqrt{2}} \, f(x'-1) \\
& - \sqrt{2} \; \overline{f(x')} 
  + \frac{1}{\sqrt{2}} \,\overline{f(x'+1)}
  + \frac{1}{\sqrt{2}} \, \overline{f(x'-1)}
\qquad & \text{for $x' \neq \lft+\half , \rgt - \half$,} \\
\big(\propag f \big) (x'_\lftsym) = \; &
    \Big( 1+\frac{1}{\sqrt{2}} \Big) \,  f(x'_\lftsym) 
  + \frac{\lambda^3}{\sqrt{2}} \,  f(x'_\lftsym+1) \\
& + \Big( \lambda^3 
      + \frac{\lambda^{-3}}{\sqrt{2} } \Big) \, \overline{f(x'_\lftsym)}
  + \frac{1}{\sqrt{2}} \, \overline{f(x'_\lftsym+1)} 
\qquad & \text{for $x'_\lftsym = \lft+\half$,} \\
\big(\propag f \big) (x'_\rgtsym) = \; &
    \Big(1+\frac{1}{\sqrt{2}} \Big) \, f(x'_\rgtsym)
  + \frac{\lambda^{-3}}{\sqrt{2}} \, f(x'_\rgtsym-1) \\
& + \Big( \lambda^{-3} 
      + \frac{\lambda^{3}}{\sqrt{2}} \Big) \, \overline{f(x'_\rgtsym)}
  + \frac{1}{\sqrt{2}} \, \overline{f(x'_\rgtsym-1)} 
\qquad & \text{for $x'_\rgtsym = \rgt-\half$,}
\end{align*}
where~$\lambda = e^{\ii \pi/4}$.
The mapping $f \mapsto \propag f$ defines a linear operator
\begin{align*}
\propag \colon \dfunctionsp \to \dfunctionsp .
\end{align*}
\end{prop}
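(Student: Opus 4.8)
The plan is to build $F$ by propagating the values of $f$ away from the cross-section one half-step at a time, at each step solving a small $\bR$-linear system coming from s-holomorphicity and the Riemann boundary conditions, and then to compose these systems to read off the explicit formula for $\propag$; this is essentially the argument of \cite[Lemmas~4 and~6]{HKZ-discrete_holomorphicity_and_operator_formalism}.

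First I would show that $f$ on the height-zero horizontal edges determines the values of $F$ on the height-$\frac12$ vertical edges $x+\frac{\ii}{2}$, $x\in\crosssec$. For an interior vertex $x$ (i.e.\ $\lft<x<\rgt$), the edge $x+\frac{\ii}{2}$ is the common edge of the two faces centered at $x\pm\half+\frac{\ii}{2}$; the s-holomorphicity relation at the corner with vertex $x$ and face $x+\half+\frac{\ii}{2}$ expresses the projection of $F(x+\frac{\ii}{2})$ onto the line $e^{\ii 5\pi/8}\,\bR$ in terms of $f(x+\half)$, and the relation at the corner with vertex $x$ and face $x-\half+\frac{\ii}{2}$ expresses its projection onto $e^{\ii 3\pi/8}\,\bR$ in terms of $f(x-\half)$. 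These two lines differ in argument by $\frac{\pi}{4}$, hence are transverse, so $F(x+\frac{\ii}{2})$ is uniquely determined. For $x=\lft$ only the face to its right is available, but the Riemann boundary value $F(\lft+\ii y')\in e^{-\ii\pi/4}\,\bR$ supplies the missing real constraint; since $e^{-\ii\pi/4}\,\bR$ is not orthogonal to $e^{\ii 5\pi/8}\,\bR$, these two constraints determine $F(\lft+\frac{\ii}{2})$, and similarly for $x=\rgt$. This fixes all $g(x):=F(x+\frac{\ii}{2})$.

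Next, $g$ determines $F$ on the height-one horizontal edges: for $x'\in\crosssecdual$, the edge $x'+\ii$ is the north edge of the face $x'+\frac{\ii}{2}$, and the s-holomorphicity relations at its two top corners express the projections of $F(x'+\ii)$ onto the transverse lines $e^{-\ii\pi/8}\,\bR$ and $e^{\ii\pi/8}\,\bR$ in terms of $g(x'-\half)$ and $g(x'+\half)$, determining $F(x'+\ii)$. Iterating these two steps upward, and the symmetric procedure downward, produces a function on all of $\dE(\dstrip)$. By construction it satisfies the Riemann boundary values, and it is s-holomorphic because a straightforward book-keeping check shows that each of the four corner relations of every face, and each boundary relation, is used exactly once in the construction; the same marching argument applied to the difference of two solutions (which is s-holomorphic with Riemann boundary values and vanishes on the cross-section, hence on every row) gives uniqueness. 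Linearity of $f\mapsto\propag f$ is then immediate, since every step solved a linear system.

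Finally, to obtain the stated formula one composes the two linear steps: solve the $2\times2$ real systems of Step~1 to express $g(x'\pm\half)$ in terms of $f(x'-1),f(x'),f(x'+1)$ and their conjugates, substitute into the $2\times2$ system of Step~2 for $(\propag f)(x')=F(x'+\ii)$, and simplify using $\lambda=e^{\ii\pi/4}$; at $x'=\lft+\half$ and $x'=\rgt-\half$ the substitution for $g(\lft)$, resp.\ $g(\rgt)$, uses the Riemann boundary condition in place of one face relation, which produces the two modified boundary formulas. I expect this explicit composition, together with the separate treatment of the boundary cases, to be the only real labor and the part most prone to sign and normalization slips; conceptually there is no obstacle, as it is a finite amount of elementary linear algebra with fixed projection lines at arguments $\pm\frac{\pi}{8},\frac{3\pi}{8},\frac{5\pi}{8}$.
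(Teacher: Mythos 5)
Your proposal is correct and follows essentially the same route as the paper, which gives no independent argument but cites \cite[Lemmas~4 and~6]{HKZ-discrete_holomorphicity_and_operator_formalism} and calls the proof a straightforward calculation from the definitions \eqref{eq: s-holomorphicity} and \eqref{eq: dRBV in strip}; your half-row-by-half-row propagation, using the corner projection lines $e^{\pm\ii\pi/8}\bR$, $e^{\ii 3\pi/8}\bR$, $e^{\ii 5\pi/8}\bR$ together with the Riemann boundary condition at the two outermost vertical edges, is exactly that calculation, and the book-keeping that every corner relation and boundary relation is imposed exactly once does give both existence and uniqueness. The only step you leave implicit is the routine $2\times 2$ elimination for the coefficients, which indeed reproduces the stated formula (for instance the interior coefficients come out as $2$, $-\sqrt{2}$, $\lambda^{\pm 3}/\sqrt{2}$ and $1/\sqrt{2}$ as claimed).
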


By the above it is clear that
$F \colon \dE(\dstrip) \to \bC$ is a vertical translation
eigenfunction if and only if its restriction $f = F \big|_{\crosssecdual}$
to the discrete cross-section is an eigenfunction of the 
operator~$\propag$. More precisely for
any~$\deigvalsym \neq 0$, the property
\begin{align*}
F(z + \ii h) = \deigvalsym^h \; F(z)
\qquad \text{ for all } z \in \dE(\dstrip) \text{ and } h \in \bZ
\end{align*}
is equivalent to
\begin{align*}
\propag f = \deigvalsym \, f .
\end{align*}

In~\cite{HKZ-discrete_holomorphicity_and_operator_formalism} 
many qualitative properties of the spectrum of~$\propag$
were proven directly: 
it is symmetric, invertible, conjugate to its own inverse,
all eigenvalues have multiplicity one, and $1$ is not an eigenvalue.
Moreover, the complexification of~$\propag$ was shown to be
conjugate to the induced rotation
of the Ising transfer matrix with locally monochromatic boundary
conditions, whose spectrum is 
well-known~\cite{AM-transfer_matrix_for_a_pure_phase, 
Palmer-planar_Ising_correlations}.
We will need the eigenvectors and eigenvalues explicitly, and
we thus rederive such properties via 
a direct calculation below.

The following qualitative property related to reflections
across the cross-section is, however, 
instructive and useful to note first.
\begin{rmk}\label{rmk: reflection and inversion of eigenvalues}
For any vertical translation eigenfunction
that is exponentially growing in the
upwards direction, 
there is a corresponding vertical
translation eigenfunction
that is exponentially growing in the
downwards direction. Namely if
$F \colon \dE(\dstrip) \to \bC$ is s-holomorphic and has
Riemann boundary values, then also
$\widetilde{F} \colon \dE(\dstrip) \to \bC$ defined by
\begin{align*}
\widetilde{F} (x + \ii y) = -\ii \, \overline{F(x - \ii y)}
\end{align*}
is s-holomorphic and has Riemann boundary values.
If~$F$ satisfies $F(z + \ii h) = \deigvalsym^h \, F(z)$
with $\deigvalsym \in \bR \setminus \set{0}$,
then~$\widetilde{F}$ satisfies 
$\widetilde{F}(z + \ii h) = (1/\deigvalsym)^{h} \, \widetilde{F}(z)$.

In the function space~$\dfunctionsp$, the corresponding
operation
\begin{align*}
f \mapsto \tilde{f} = \refl f, \qquad
(\refl f)(x') := - \ii \, \overline{f(x')}
\end{align*}
is a unitary involution~$\refl \colon \dfunctionsp \to \dfunctionsp$
by which~$\propag$ is conjugate to
its inverse~$\propag^{-1}$.
\end{rmk}

\subsubsection*{A dispersion relation}

The vertical translation eigenfunctions in the continuum strip are
essentially the quarter-integer Fourier modes.
The vertical translation eigenfunctions in the lattice strip
turn out to be mixtures of two discrete
Fourier modes with opposite frequencies.
Which frequencies can appear is ultimately
determined by the boundary conditions.
Before addressing that, let us observe that
a relation between the vertical translation eigenvalue 
and the frequency 
of the Fourier mode is obtained
from the first of the formulas of
Proposition~\ref{prop: discrete analytic continuation operator},
which governs the discrete analytic continuation away from boundaries.
\begin{lem}\label{lem: dispersion relation}
Let $\omega \in \bR \setminus 2 \pi \bZ$,
and let~$\deigvalsym$ be a solution to the
equation
\begin{align}\label{eq: dispersion relation}
\deigvalsym^2 + \big( 2 \cos(\omega) - 4 \big) \, \deigvalsym + 1 = 0.
\end{align}
Define
\begin{align*}
f(x') = C^+ \, e^{+\ii \omega x'} + C^- \, e^{-\ii \omega x'} ,
\qquad \text{for } x' \in \crosssecdual ,
\end{align*}
where the constants $C^+, C^- \in \bC$ are related by
\begin{align*}
C^- = \frac
        {2 + \sqrt{2} \, \cos(\frac{3 \pi}{4} + \omega) - \deigvalsym}
        {\sqrt{2} \big( 1 - \cos(\omega) \big)}
    \; \overline{C^+} .
\end{align*}
Then for 
any~$x' \in \crosssecdual \setminus \{ \lft+\half , \rgt-\half \}$,
we have
$\big( \propag f \big) (x') = \deigvalsym \,  f(x') $.

\end{lem}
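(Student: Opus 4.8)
The plan is a direct verification based on the first of the three displayed formulas in Proposition~\ref{prop: discrete analytic continuation operator}, the one governing $\propag$ at interior points $x'\neq\lft+\half,\rgt-\half$ --- precisely the range in which the assertion is made. Before starting, I would record two preliminary observations. First, since $\omega\notin 2\pi\bZ$ the quadratic~\eqref{eq: dispersion relation} has discriminant $(2\cos\omega-4)^2-4=4(\cos\omega-1)(\cos\omega-3)>0$, so any solution $\deigvalsym$ is real (and nonzero, as the product of the two roots is $1$); reality of $\deigvalsym$ is what legitimizes the conjugation of the relation between $C^+$ and $C^-$ below. Second, $1-\cos\omega\neq 0$, so we may divide by it freely.

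First I would substitute $f(x')=C^+e^{+\ii\omega x'}+C^-e^{-\ii\omega x'}$, hence $\overline{f(x')}=\overline{C^+}\,e^{-\ii\omega x'}+\overline{C^-}\,e^{+\ii\omega x'}$, into the interior formula, collect the coefficients of the two exponentials $e^{\pm\ii\omega x'}$ separately, and simplify using $\lambda^{\pm3}e^{\pm\ii\omega}+\lambda^{\mp3}e^{\mp\ii\omega}=2\cos(\frac{3\pi}{4}\pm\omega)$ together with $e^{\ii\omega}+e^{-\ii\omega}=2\cos\omega$. This gives
\begin{align*}
(\propag f)(x')-\deigvalsym\,f(x')
= \; & \Big[ \big(2+\sqrt{2}\cos(\tfrac{3\pi}{4}+\omega)-\deigvalsym\big)\,C^+
    - \sqrt{2}\,(1-\cos\omega)\,\overline{C^-} \Big]\,e^{+\ii\omega x'} \\
& + \Big[ \big(2+\sqrt{2}\cos(\tfrac{3\pi}{4}-\omega)-\deigvalsym\big)\,C^-
    - \sqrt{2}\,(1-\cos\omega)\,\overline{C^+} \Big]\,e^{-\ii\omega x'} ,
\end{align*}
so the claim reduces to showing that both bracketed coefficients vanish.

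The coefficient of $e^{+\ii\omega x'}$ vanishes essentially by construction: conjugating the stated relation (and using $\deigvalsym\in\bR$) gives $\sqrt{2}\,(1-\cos\omega)\,\overline{C^-}=\big(2+\sqrt{2}\cos(\frac{3\pi}{4}+\omega)-\deigvalsym\big)C^+$, which is exactly what is needed. For the coefficient of $e^{-\ii\omega x'}$, I would substitute $C^-=\frac{2+\sqrt{2}\cos(\frac{3\pi}{4}+\omega)-\deigvalsym}{\sqrt{2}(1-\cos\omega)}\,\overline{C^+}$ and factor out $\frac{\overline{C^+}}{\sqrt{2}(1-\cos\omega)}$; the remaining factor is
\begin{align*}
\big(2+\sqrt{2}\cos(\tfrac{3\pi}{4}-\omega)-\deigvalsym\big)\big(2+\sqrt{2}\cos(\tfrac{3\pi}{4}+\omega)-\deigvalsym\big) - 2(1-\cos\omega)^2 .
\end{align*}
Using $\cos(\frac{3\pi}{4}-\omega)+\cos(\frac{3\pi}{4}+\omega)=-\sqrt{2}\cos\omega$ and $\cos(\frac{3\pi}{4}-\omega)\cos(\frac{3\pi}{4}+\omega)=\frac12\cos2\omega=\cos^2\omega-\frac12$, this expands (writing $u=2-\deigvalsym$) to $u^2-2u\cos\omega+4\cos\omega-3$, and resubstituting $u=2-\deigvalsym$ turns it into $\deigvalsym^2+(2\cos\omega-4)\deigvalsym+1$, which is zero by~\eqref{eq: dispersion relation}. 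Hence the second bracket vanishes as well, and $(\propag f)(x')=\deigvalsym\,f(x')$ for every interior $x'$.

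I do not expect any genuine obstacle: the whole argument is a bounded computation. The only things demanding care are the phase bookkeeping with $\lambda^{\pm3}=e^{\pm 3\ii\pi/4}$ and the trigonometric identities, together with the structural point behind the asymmetry $\cos(\frac{3\pi}{4}+\omega)$ versus $\cos(\frac{3\pi}{4}-\omega)$ in the two exponential components: the $e^{+\ii\omega x'}$ component is forced to match by the prescribed relation between $C^+$ and $C^-$, whereas it is the $e^{-\ii\omega x'}$ component that pins $\deigvalsym$ down to the dispersion relation~\eqref{eq: dispersion relation}.
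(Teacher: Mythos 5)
Your proposal is correct and follows essentially the same route as the paper: substitute the ansatz into the interior formula of Proposition~\ref{prop: discrete analytic continuation operator}, observe that the $e^{+\ii\omega x'}$ coefficient vanishes by the given relation between $C^+$ and $C^-$, and reduce the vanishing of the $e^{-\ii\omega x'}$ coefficient to the product identity $\big(2+\sqrt{2}\cos(\tfrac{3\pi}{4}+\omega)-\deigvalsym\big)\big(2+\sqrt{2}\cos(\tfrac{3\pi}{4}-\omega)-\deigvalsym\big)=2(1-\cos\omega)^2$, which holds by the same trigonometric identities and the dispersion relation. The only (welcome) difference is that you make explicit the reality of $\deigvalsym$ justifying the conjugation step, which the paper leaves implicit.
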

\begin{proof}
Inserting the defining formula of~$f(x')$ into the 
the explicit expression for~$\big( \propag f \big) (x')$
from 
Proposition~\ref{prop: discrete analytic continuation operator},
a straightforward calculation yields
\begin{align*}
\big( \propag f \big) (x') - \deigvalsym \, f(x') 
= \; & \phantom{+} e^{+\ii \omega x'} \left( 
    C^+ \Big( 2 + \sqrt{2} \cos \big( \frac{3 \pi}{4} + \omega \big) 
            - \deigvalsym \Big) 
    + \overline{C^-} \Big( \sqrt{2} \cos(\omega) - \sqrt{2} \Big) \right) \\
& + e^{-\ii \omega x'} \left( 
    C^- \Big( 2 + \sqrt{2} \cos \big( \frac{3 \pi}{4} - \omega \big) 
            - \deigvalsym \Big) 
    + \overline{C^+} \Big( \sqrt{2} \cos(\omega) - \sqrt{2} \Big) \right) .
\end{align*}
Using the relationship between the constants~$C^+ , C^-$,
the coefficient of~$e^{+\ii \omega x'}$ above vanishes immediately.
It remains to check the vanishing of
the coefficient of~$e^{-\ii \omega x'}$.

For this purpose, observe that
using the trigonometric identities
$\cos \big( \frac{3 \pi}{4} + \omega \big) 
+ \cos \big( \frac{3 \pi}{4} - \omega \big) = -\sqrt{2} \, \cos(\omega)$
and
$\cos \big( \frac{3 \pi}{4} + \omega \big) 
\cos \big( \frac{3 \pi}{4} - \omega \big) = \half \cos(2 \omega)$
we can write
\begin{align*}
& \Big( 2 + \sqrt{2} \, \cos(\frac{3 \pi}{4} + \omega) - \deigvalsym \Big)
    \Big( 2 + \sqrt{2} \, \cos(\frac{3 \pi}{4} - \omega) - \deigvalsym \Big) \\
= \; & \deigvalsym^2 - \big( 4 - 2 \cos(\omega) \big) \deigvalsym
    + 4 + \cos(2\omega) -4 \cos(\omega) .
\end{align*}
When~$\deigvalsym$ is a solution to~\eqref{eq: dispersion relation},
this expression further simplifies to
\begin{align*}
3 + \cos(2\omega) -4 \cos(\omega) = 2 \big( 1 - \cos(\omega) \big)^2 .
\end{align*}
We thus see that an alternative equivalent form
of the relationship between 
the constants~$C^+ , C^-$ is
\begin{align*}
C^- = \frac{\sqrt{2} \big( 1 - \cos(\omega) \big)}
        {2 + \sqrt{2} \, \cos(\frac{3 \pi}{4} - \omega) - \deigvalsym}%
    \; \overline{C^+} .
\end{align*}
From this relationship we immediately see the
desired vanishing of the
coefficient of~$e^{-\ii \omega x'}$, so the proof is complete.
\end{proof}

\begin{rmk}
For a given~$\omega$, Equation~\eqref{eq: dispersion relation} 
has two roots, which are positive real numbers and 
inverses of each other. 
This reflects the observation from 
Remark~\ref{rmk: reflection and inversion of eigenvalues}
by which vertical translation eigenfunctions can be reflected
to produce eigenfunctions with the inverse eigenvalue.
Indeed
for functions of the form
$f(x') = C^+ \, e^{+\ii \omega x'} + C^- \, e^{-\ii \omega x'}$
as above, $(\refl f)(x') = -\ii \overline{f(x')}$
is also of the same form (with different coefficients),
and it has the inverse eigenvalue for~$\propag$.
\end{rmk}

\subsubsection*{Boundary conditions and equation on frequencies}

By the above calculation, the discrete Fourier modes of 
any frequency~$\omega \in \bR$ and its opposite can be combined
to satisfy the equation~$(\propag f)(x') = \deigvalsym \, f(x')$
for $x'$ not adjacent to the boundaries, provided that~$\omega$ 
and~$\deigvalsym$ are related by~\eqref{eq: dispersion relation}.
However, such functions can
satisfy the equation~$(\propag f)(x') = \deigvalsym \, f(x')$
near the boundaries only if the frequency is chosen judiciously.
It is sufficient for us to prove that
under a certain hypothesis on the frequency~$\omega$,
an eigenfunction exists; simple counting afterwards 
will show that all eigenfunctions are thus found.
\begin{lem}\label{lem: existence of eigenfunctions}
Suppose that $\omega \in \bR $ 
is a solution to
\begin{align}
\label{eq: allowed frequency equation} 
\frac{\cos\big( (\width+\half) \, \omega \big)}
{\cos\big( (\width-\half) \, \omega \big)}
= 3-2\sqrt{2} ,
\end{align}
and let $\deigvalsym$ be a solution to~\eqref{eq: dispersion relation}.
Then there exists non-zero $f \in \dfunctionsp$ such that
$\propag f = \deigvalsym \, f$.
\end{lem}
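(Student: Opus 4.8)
We want to show that if $\omega$ solves \eqref{eq: allowed frequency equation} and $\deigvalsym$ solves the dispersion relation \eqref{eq: dispersion relation}, then $\propag$ has an eigenvector with eigenvalue $\deigvalsym$. By Lemma~\ref{lem: dispersion relation}, any function of the form $f(x') = C^+ e^{+\ii\omega x'} + C^- e^{-\ii\omega x'}$ with the prescribed ratio $C^-/\overline{C^+}$ already satisfies $(\propag f)(x') = \deigvalsym f(x')$ at every interior point $x' \ne \lft+\half, \rgt-\half$. So the entire content of the lemma is: the two remaining equations, namely $(\propag f)(x'_\lftsym) = \deigvalsym f(x'_\lftsym)$ at the left boundary and $(\propag f)(x'_\rgtsym) = \deigvalsym f(x'_\rgtsym)$ at the right boundary, can be satisfied simultaneously by a nonzero choice of $C^+$ (with $C^-$ then determined).

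**Main steps.**

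First I would substitute $f(x') = C^+ e^{+\ii\omega x'} + C^- e^{-\ii\omega x'}$ into the second ("left boundary") formula of Proposition~\ref{prop: discrete analytic continuation operator}, at $x'_\lftsym = \lft + \half$, and compute $(\propag f)(x'_\lftsym) - \deigvalsym f(x'_\lftsym)$. This is a $\bC$-linear-plus-conjugate-linear expression in $C^+, C^-$; using $\overline{f(x'_\lftsym)}$, $\overline{f(x'_\lftsym+1)}$ it becomes a combination of $e^{\pm\ii\omega \lft}$, $\overline{C^+}, \overline{C^-}$ and $C^+, C^-$. Collecting terms and using both the dispersion relation \eqref{eq: dispersion relation} and the ratio relation from Lemma~\ref{lem: dispersion relation} (in whichever of its two equivalent forms is convenient), the left-boundary condition should reduce to a single scalar equation relating $C^+$ and $\overline{C^+}$ — equivalently, it pins down the phase of $C^+$ (given $\omega, \deigvalsym$), and this is automatically solvable. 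The analogous computation at $x'_\rgtsym = \rgt - \half$ gives a second scalar equation pinning down the phase of $C^+$. The point is then that these two phase conditions are \emph{consistent}: their compatibility is exactly the content of the frequency equation \eqref{eq: allowed frequency equation}. I would carry out the left-boundary reduction explicitly, observe that after simplification it reads something like $e^{2\ii\omega\lft}\cdot(\text{explicit factor involving }\cos((\width\mp\half)\omega)\text{-type terms}) = \overline{(\cdots)}$, do the mirror-image reduction on the right at $\rgt = \lft + \width$, and then check that eliminating $C^+$ between the two yields precisely $\cos((\width+\half)\omega)/\cos((\width-\half)\omega) = 3 - 2\sqrt 2$. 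Since this is the hypothesis, a nonzero $C^+$ exists, hence a nonzero $f$.

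An alternative, perhaps cleaner organization: exploit the involution $f \mapsto \tilde f$, $\tilde f(x') = -\ii\overline{f(x')}$, from Remark~\ref{rmk: reflection and inversion of eigenvalues}. One can try to build $f$ so that it is an eigenvector of some reflection symmetry (the symmetric strip $\rgt = -\lft$ has a left-right flip), which would collapse the two boundary equations into one. But since the lemma is stated for general $\lft < 0 < \rgt$, I would not rely on that and would instead just do the two boundary computations directly; the symmetry remark is useful mainly as a sanity check that the right-boundary equation is the correct mirror image of the left one (the roots of \eqref{eq: dispersion relation} being inverses of each other matches the reflection sending $\deigvalsym \to 1/\deigvalsym$).

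**The main obstacle.**

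The genuine difficulty is purely computational bookkeeping: the boundary formulas in Proposition~\ref{prop: discrete analytic continuation operator} involve the constants $1 + \tfrac{1}{\sqrt 2}$, $\tfrac{\lambda^{\pm 3}}{\sqrt 2}$, and $\lambda^{\pm 3} + \tfrac{\lambda^{\mp 3}}{\sqrt 2}$, and after plugging in exponentials one must juggle $\lambda = e^{\ii\pi/4}$, the factor $3 - 2\sqrt 2 = (\sqrt 2 - 1)^2 = (1-\sqrt2)^2$, and trigonometric identities for $\cos(\tfrac{3\pi}{4}\pm\omega)$. The key algebraic miracle one is relying on is that the same quantity $2(1-\cos\omega)^2$ that appeared in the proof of Lemma~\ref{lem: dispersion relation} resurfaces, so that the dispersion relation does all the heavy lifting and what survives at each boundary is a clean phase condition. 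I expect that verifying the left- and right-boundary reductions both collapse, via \eqref{eq: dispersion relation}, to expressions whose compatibility is exactly \eqref{eq: allowed frequency equation} is the one place where care is needed; everything else (existence of $C^+$ once a single scalar/phase equation is met, nonvanishing of $f$) is immediate. I would present the left-boundary computation in full and then say "the right-boundary computation is entirely analogous (or follows by the reflection of Remark~\ref{rmk: reflection and inversion of eigenvalues})," and conclude by exhibiting the resulting $C^+$.
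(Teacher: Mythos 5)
Your proposal takes essentially the same route as the paper: the interior points are disposed of by Lemma~\ref{lem: dispersion relation}, the two boundary equations at $\lft+\half$ and $\rgt-\half$ each determine $C/\overline{C}$, and the compatibility of these two conditions is shown, by an explicit reduction using the dispersion relation~\eqref{eq: dispersion relation}, to be exactly the allowed-frequency equation~\eqref{eq: allowed frequency equation}. One small caution for the write-up: a single boundary condition of the form $C/\overline{C}=w$ is not ``automatically solvable'' unless $|w|=1$; the paper handles this by observing that the two expressions for $C/\overline{C}$ have reciprocal moduli, so their equality (equivalent to~\eqref{eq: allowed frequency equation}) forces the common value to have modulus one and hence a nonzero $C$ exists.
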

\begin{proof}
Let us denote by
\begin{align*}
R = R(\omega, \deigvalsym) = \frac
        {2 + \sqrt{2} \, \cos(\frac{3 \pi}{4} + \omega) - \deigvalsym}
        {\sqrt{2} \big( 1 - \cos(\omega) \big)} .
\end{align*}
the ratio in Lemma~\ref{lem: dispersion relation}
which relates the coefficient of one Fourier
mode to the complex conjugate of the coefficient of the opposite one.
We will show that~$f \in \dfunctionsp$ of the form
\[ f(x') = C \, e^{+\ii \omega x'} + R \overline{C} \, e^{-\ii \omega x'} \]
works, with suitably chosen $C \neq 0$.

Given the result of Lemma~\ref{lem: dispersion relation}, the only
remaining properties to verify are
$(\propag f)(x'_\lftsym) = \deigvalsym \, f(x'_\lftsym)$ and
$(\propag f)(x'_\rgtsym) = \deigvalsym \, f(x'_\rgtsym)$,
where $x'_\lftsym = \lft + \half$ and 
$x'_\rgtsym = \rgt - \half$. In view of 
Proposition~\ref{prop: discrete analytic continuation operator},
these amount to the equations
\begin{align*}
0 = \; & 
      C \, e^{\ii \omega x'_{\lftsym}} 
      \Big( A^+ + R B^+ \Big) 
    + \overline{C} \, e^{-\ii \omega x'_{\lftsym}} 
      \Big( B^- + R A^- \Big) \\
0 = \; & 
      C \, e^{\ii \omega x'_{\rgtsym}} 
      \Big( \overline{A^+} + R \overline{B^+} \Big) 
    + \overline{C} \, e^{-\ii \omega x'_{\rgtsym}} 
      \Big( \overline{B^-} + R \overline{A^-} \Big) ,
\end{align*}
where
\begin{align*}
A^\pm = A^\pm(\omega , \deigvalsym) 
:= \; & 1 + \frac{1}{\sqrt{2}} 
    + \frac{\lambda^3}{\sqrt{2}} e^{\pm \ii \omega} - \deigvalsym , \\
B^\pm = B^\pm(\omega) 
:= \; & \lambda^3 + \frac{\lambda^{-3}}{\sqrt{2}} 
    + \frac{1}{\sqrt{2}} e^{\pm \ii \omega} .
\end{align*}
Either one of the two equations fixes the argument of~$C$ modulo~$\pi$,
in that one can solve for~$C / \overline{C}$ from them. The former 
equation requires
\begin{align*}
C / \overline{C} \,
= \; & - e^{- \ii 2 \omega x'_{\lftsym}} \; \frac{B^- + R A^-}{A^+ + R B^+} ,
\end{align*}
and after first taking complex conjugates, the second requires
\begin{align*}
C / \overline{C} \,
= \; & - e^{- \ii 2 \omega x'_{\rgtsym}} \; \frac{A^+ + R B^+}{B^- + R A^-} .
\end{align*}
Evidently the modulus of these expressions are the inverses of each
other, so when we have the equality of the two,
the existence of nonzero~$C \in \bC$ satisfying the
eigenfunction requirements follows.
The equality of the two expressions simply reads
\begin{align*}
\frac{\big( A^+ + R B^+ \big)^2}{\big( B^- + R A^- \big)^2}
= e^{\ii 2 (\width - 1) \omega} .
\end{align*}
Our goal is therefore to show that this equality follows
from our 
assumption~\eqref{eq: allowed frequency equation}
(in fact the two are equivalent). 
The numerator and denominator on the left hand side are by construction
expressed as polynomials in~$\deigvalsym$ of degrees~$2$ and~$4$, but 
since by assumption $\deigvalsym$ satisfies the quadratic 
equation~\eqref{eq: dispersion relation}, 
we can reduce both
to first order polynomials in~$\deigvalsym$.
This is straightforward but slightly tedious.\footnote{
A symbolic computation 
in the quotient 
$\mathbb{F}[\deigvalsym] / 
\big\langle \deigvalsym^2 +(q+q^{-1}-4) \deigvalsym +1 \big\rangle$
of the polynomial ring~$\mathbb{F}[\deigvalsym]$
over the field~$\mathbb{F} = \bC(q)$ of rational functions 
of~$q=e^{\ii \omega}$ is a quick 
way to check the formulas.}
We first simplify the numerator 
and denominator before taking the squares
\begin{align*}
A^+(\omega,\deigvalsym) + R(\omega,\deigvalsym) \, B^+(\omega)
= \; & \frac{ (1-\sqrt{2}) \lambda \, \deigvalsym 
    + e^{-\ii \omega} \, \deigvalsym 
    +\lambda^3 \, e^{-\ii \omega} 
    + \ii (1-\sqrt{2}) }{2 (1-\cos(\omega))} , \\
B^-(\omega) + R(\omega,\deigvalsym) \,A^-(\omega,\deigvalsym) 
= \; & \frac{ (\sqrt{2} - 1) \, \deigvalsym - \lambda e^{\ii \omega} \, 
\deigvalsym 
   + \ii \, e^{\ii \omega} 
   + (\sqrt{2} - 1) \lambda^{3} }{2 (1-\cos(\omega))} .
\end{align*}
and then take the squares and simplify further to
\begin{align*}
\big(A^+ + R B^+ \big)^2 
= \; & \frac{(e^{\ii \omega}-1) \, e^{-\ii 3 \omega}}{4 (1-\cos(\omega))^2} 
    \, \Big( \big( 1 + \ii \, e^{\ii 2 \omega} 
        - 2\sqrt{2} \lambda e^{\ii \omega} \big) \deigvalsym 
    + \sqrt{2} \lambda \, e^{\ii \omega} \Big)
    \, \big( 1 - (3-2 \sqrt{2}) e^{\ii \omega} \big) \\
\big( B^- + R \, A^- \big)^2
= \; & \frac{(e^{\ii \omega}-1) \, e^{-\ii \omega}}{4 (1-\cos(\omega))^2} 
    \, \Big( \big( 1 + \ii \, e^{\ii 2 \omega} 
        - 2\sqrt{2} \lambda e^{\ii \omega} \big) \deigvalsym 
    + \sqrt{2} \lambda \, e^{\ii \omega} \Big)
    \, \big( (3-2 \sqrt{2}) - e^{\ii \omega} \big) .
\end{align*}
Cancellations in the ratio of these two yield
\begin{align*}
\frac{\big( A^+ + R B^+ \big)^2}{\big( B^- + R A^- \big)^2}
= \frac{\left(3-2 \sqrt{2} \right) e^{\ii \omega} - 1 }
   {e^{\ii \omega }-(3-2 \sqrt{2})} \; e^{-2 \ii \omega} .
\end{align*}
The desired equality of the expressions thus ultimately amounts to
\begin{align*}
\frac{\left(3-2 \sqrt{2} \right) e^{\ii \omega} - 1 }
   {e^{\ii \omega }-(3-2 \sqrt{2})}
= e^{\ii 2 \width \omega} ,
\end{align*}
which is easily seen to be equivalent 
to~\eqref{eq: allowed frequency equation}.
This finishes the proof.
\end{proof}

\subsubsection*{Allowed frequencies}

In the continuum, vertical translation eigenfunctions
were associated to all quarter-integer Fourier modes. By contrast, 
in the discrete setup there are only finitely many possible 
frequencies, and these are only approximately quarter 
integers (in the appropriate units).
We consider the strip width~$\width \in \bN$ in lattice units
fixed, and to display the parallel with the continuum,
we use positive half-integers~$k$ to index the 
allowed positive frequencies. 
Finite-dimensionality now restricts the index set to
\begin{align}
\label{eq: discrete positive half integers} 
\dposhalfint 
:= \big[ 0 , \width \big] \cap \Big( \bZ + \half \Big)
= \set{\frac{1}{2} , \frac{3}{2} , \ldots,
    \width - \frac{1}{2}} .
\end{align}
The following lemma describes the positive frequencies
which satisfy~\eqref{eq: allowed frequency equation}.
\begin{lem}
\label{lem: allowed frequencies}
For any~$k \in \dposhalfint$,
the equation~\eqref{eq: allowed frequency equation},
\begin{align*}
\frac{\cos\big( (\width+\half) \, \omega \big)}
{\cos\big( (\width-\half) \, \omega \big)}
= 3-2\sqrt{2} ,
\end{align*}
has a unique solution~$\omega = \omega_k^{(\width)}$ on the
interval~$\big( (k-\half)\pi/\width , \, k\pi/\width \big)$.
\end{lem}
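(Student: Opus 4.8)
The claim is that for each fixed half-integer $k \in \dposhalfint$, the equation $\cos((\width+\half)\omega) = (3-2\sqrt 2)\cos((\width-\half)\omega)$ has exactly one solution in the open interval $\big((k-\half)\pi/\width, \, k\pi/\width\big)$. The natural approach is a standard intermediate-value-plus-monotonicity argument applied to a well-chosen auxiliary function. I would introduce
\[
g(\omega) := \cos\big( (\width+\tfrac12)\omega \big) - (3-2\sqrt 2)\,\cos\big( (\width-\tfrac12)\omega \big),
\]
so that the allowed frequencies are exactly the zeros of $g$. The first step is to evaluate $g$ at the two endpoints of the interval $J_k := \big((k-\half)\pi/\width,\,k\pi/\width\big)$ and check that it changes sign there; the second step is to show $g$ has no more than one zero in $J_k$, which I expect to obtain from a sign analysis of $g'$ (or, more robustly, by rewriting the equation so that the relevant quantity is genuinely monotone on $J_k$).

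\textbf{The endpoint evaluations.} At $\omega = k\pi/\width$ we have $(\width\mp\half)\omega = k\pi \mp \tfrac{k\pi}{2\width}$, so $\cos((\width+\half)\omega) = \cos(k\pi)\cos(\tfrac{k\pi}{2\width}) + \sin(k\pi)\sin(\tfrac{k\pi}{2\width})$. Since $k$ is a half-integer, $\cos(k\pi)=0$ and $\sin(k\pi) = \pm 1$, so at this endpoint the two cosines become $\mp\sin(\tfrac{k\pi}{2\width})$ and $\pm\sin(\tfrac{k\pi}{2\width})$ with a definite relationship, making $g(k\pi/\width)$ a nonzero multiple of $\sin(\tfrac{k\pi}{2\width})$ with a computable sign (nonzero because $0 < \tfrac{k\pi}{2\width} < \tfrac\pi2$ for $k \le \width - \half$). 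At the other endpoint $\omega = (k-\half)\pi/\width$ I would argue similarly: here $(\width\mp\half)\omega = (k-\half)\pi \mp (k-\half)\pi/(2\width)$, and $\cos((k-\half)\pi) = 0$, $\sin((k-\half)\pi) = \pm 1$, so again $g$ reduces to a nonzero multiple of a sine of an argument strictly between $0$ and $\pi/2$. The key point is that the two endpoint values carry \emph{opposite} signs: at $\omega = k\pi/\width$ the first term vanishes-ish structure flips compared to $\omega = (k-\half)\pi/\width$ because the parity of the half-integer multiple of $\pi$ shifts by $\half$. One must track the factor $3-2\sqrt2 \in (0,1)$ carefully, but since it is positive and strictly less than $1$, it cannot overturn the sign coming from the dominant term. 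Intermediate value theorem then gives at least one zero in $J_k$.

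\textbf{Uniqueness, the main obstacle.} The harder half is showing there is at most one zero in $J_k$. The cleanest route is to divide: on $J_k$ one checks that $\cos((\width-\half)\omega)$ does not vanish (its zeros are at $\omega = (2j+1)\pi/(2\width-1)$, which one must verify fall outside $J_k$, or at worst coincide with an endpoint), so the equation is equivalent to $h(\omega) := \cos((\width+\half)\omega)/\cos((\width-\half)\omega) = 3-2\sqrt2$. I would then compute $h'(\omega)$ and show it has constant sign on $J_k$; a direct differentiation gives $h'(\omega) = \big[-(\width+\half)\sin((\width+\half)\omega)\cos((\width-\half)\omega) + (\width-\half)\cos((\width+\half)\omega)\sin((\width-\half)\omega)\big]/\cos^2((\width-\half)\omega)$, and using product-to-sum identities the numerator becomes $-\tfrac12\big[(\width+\half)+(\width-\half)\big]\sin(2\omega) - \tfrac12\big[(\width+\half)-(\width-\half)\big]\sin(2\width\omega) = -\width\sin(2\omega) - \tfrac12\sin(2\width\omega)$. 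Wait — I would double check this computation, since the relevant quantity is really $\sin((\width+\half)\omega)\cos((\width-\half)\omega) \pm \cos((\width+\half)\omega)\sin((\width-\half)\omega)$, which collapse to $\sin(2\width\omega)$ and $\sin(\omega)$ respectively. In any case the outcome is that the numerator of $h'$ is, up to sign, $\width\sin(\omega) + (\text{something})\sin(2\width\omega)$ or a similar combination; the genuine obstacle is that $\sin(2\width\omega)$ oscillates rapidly on $J_k$ (it completes a full half-period), so one cannot conclude monotonicity of $h$ naively. The resolution I would pursue: instead of $h$ itself, track $\omega \mapsto (\width - \half)\omega + \arctan$ of an appropriate ratio, or observe that the original equation can be recast via the tangent half-angle trick as $\tan(\width\omega)\tan(\tfrac\omega2) = $ const or $\cot(\width\omega) = $ a monotone function of $\omega$ on $J_k$ — the point being that $\cot(\width\omega)$ is strictly decreasing on each interval where $\width\omega$ ranges over a length-$\pi$ interval, and $J_k$ is chosen precisely so that $\width\omega$ runs over $\big((k-\half)\pi, k\pi\big)$, half of such an interval, on which $\cot(\width\omega)$ is indeed strictly monotone and surjects onto a half-line. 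Matching this against the bounded slowly-varying factor $\cos(\tfrac\omega2)/\sin(\tfrac\omega2) \cdot(\text{bounded})$ coming from rewriting $3-2\sqrt2$ times the cosine ratio, one gets exactly one crossing. I expect the bookkeeping of converting $\cos((\width\pm\half)\omega)$ into $\cos(\width\omega)\cos(\tfrac\omega2) \mp \sin(\width\omega)\sin(\tfrac\omega2)$ and isolating $\tan(\width\omega)$ to be the decisive manipulation, after which strict monotonicity of $\tan(\width\omega)$ on $J_k$ (where $\width\omega \in ((k-\half)\pi,k\pi)$, avoiding the pole at $(k-\half)\pi$) finishes uniqueness.
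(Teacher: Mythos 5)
Your final plan is exactly the paper's proof: expanding $\cos\big((\width\pm\half)\omega\big)=\cos(\width\omega)\cos(\omega/2)\mp\sin(\width\omega)\sin(\omega/2)$ turns the equation into $\tan(\omega/2)\tan(\width\omega)=\tfrac{1}{\sqrt{2}}$, and on $\big((k-\half)\pi/\width,\,k\pi/\width\big)$ this product increases strictly from $0$ to $+\infty$, which yields existence and uniqueness in one stroke. Your separate endpoint/IVT argument is therefore redundant (and contains a small slip: since $k-\half$ is an integer, $\sin((k-\half)\pi)=0$ and $\cos((k-\half)\pi)=\pm1$, not the other way around, though after correcting this the endpoint signs are indeed opposite, so that route is also salvageable).
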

\begin{proof}
Using the trigonometric formula for~$\cos(\alpha+\beta)$
in both the numerator and the denominator,
we can rewrite the left hand side 
of~\eqref{eq: allowed frequency equation} as
\begin{align*}
\frac{\cos\big( (\width+\half) \, \omega \big)}
{\cos\big( (\width-\half) \, \omega \big)}
= \; & \frac{\cos(\omega/2)\,\cos(\width\omega)
    - \sin(\omega/2)\,\sin(\width\omega)}{
    \cos(\omega/2)\,\cos(\width\omega)
    + \sin(\omega/2)\,\sin(\width\omega)} \\
= \; & \frac{1 - \tan(\omega/2)\,\tan(\width\omega)}{
    1 + \tan(\omega/2)\,\tan(\width\omega)} .
\end{align*}
On the 
interval~$\omega \in \big[ (k-\half)\pi/\width , \, k\pi/\width \big)$,
the expression
$\tan(\omega/2) \tan(\width\omega)$ increases from~$0$ to~$+\infty$,
so there is a
unique~$\omega \in \big( (k-\half)\pi/\width , \, k\pi/\width \big)$
such that
$\tan(\omega/2) \tan(\width\omega) = \frac{1}{\sqrt{2}}$.
This is the desired unique solution.
\end{proof}

\subsubsection*{Explicit eigenfunctions and eigenvalues}

We can now describe the eigenvalues and eigenfunctions
of vertical translations explicitly.
We use positive and negative indices~$k$
for eigenfunctions that are growing in the downwards 
and upwards directions.
\begin{prop}
\label{prop: discrete vertical translation eigenfunctions}
For $k \in \dposhalfint$,
denote by
\begin{align*}
\omega_k^{(\width)} 
    \in \Big( (k-\half)\pi/\width , \; k\pi/\width \Big) 
\end{align*}
the unique solution 
to~\eqref{eq: allowed frequency equation}
on this interval.
Denote by
\begin{align*}
\eigval{k} := 
    2 - \cos (\omega_k^{(\width)} ) 
    + \sqrt{ \big(3-\cos ( \omega_k^{(\width)} ) \big)
        \big(1-\cos (\omega_k^{(\width)}) \big)} 
\end{align*}
the corresponding solution to~\eqref{eq: dispersion relation}
with~$\eigval{k} > 1$, and by
$\eigval{-k} := 1 / \eigval{k} < 1$ the other solution.
Then there exists non-zero functions
\begin{align*}
\eigF{k} , \eigF{-k} \colon \dE(\dstrip) \to \bC
\end{align*}
which are s-holomorphic and have Riemann boundary values
and satisfy
\begin{align*}
\eigF{\pm k} (z + \ii h) = (\eigval{\pm k})^h \; \eigF{\pm k}(z)
\qquad \text{ for all } z \in \dE(\dstrip) \text{ and } h \in \bZ ,
\end{align*}
and these are uniquely determined by the normalization conditions
that the argument on the left boundary is
$\eigF{\pm k}(\lft + \ii y') \in e^{-\ii \pi / 4} \, \bR_+$,
for $y' \in \bZ + \half$, and that their
restrictions
\begin{align*}
\eigf{\pm k} = \eigF{\pm k} \big|_{\crosssecdual} \; \in \dfunctionsp
\end{align*}
to the cross-section have unit norm~$\| \eigf{\pm k} \| = 1$. 

The following relations
hold for the normalized eigenfunctions with opposite indices:
\begin{align*}
\eigf{-k} (x') 
= \; & -\ii \; \overline{\eigf{k} (x')} , &
\eigF{-k} (x + \ii y) 
= \; & -\ii \; \overline{\eigF{k} (x - \ii y)} .
\end{align*}

The functions~$\eigf{k}$, $k \in \pm \dposhalfint$,
form an orthonormal basis of~$\dfunctionsp$.
\end{prop}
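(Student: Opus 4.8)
The plan is to assemble the proposition from the three lemmas just proven, treating the construction of the eigenfunctions, the normalization/uniqueness, the reflection relations, and the orthonormal basis property as four separate tasks. First I would fix $k \in \dposhalfint$ and invoke Lemma~\ref{lem: allowed frequencies} to obtain the frequency $\omega_k^{(\width)}$, then check by a direct computation that the quantity $\eigval{k} = 2-\cos(\omega_k^{(\width)}) + \sqrt{(3-\cos(\omega_k^{(\width)}))(1-\cos(\omega_k^{(\width)}))}$ indeed solves the dispersion relation~\eqref{eq: dispersion relation}: writing $c=\cos(\omega_k^{(\width)})$, one has $\eigval{k} + \eigval{k}^{-1} = 4-2c$ since the two roots of $t^2 + (2c-4)t + 1 = 0$ multiply to $1$ and sum to $4-2c$, and $\eigval{k}>1$ because $(3-c)(1-c)>0$ for $\omega_k^{(\width)} \notin 2\pi\bZ$. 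Then Lemma~\ref{lem: existence of eigenfunctions} gives a non-zero $f \in \dfunctionsp$ with $\propag f = \eigval{k}\,f$, and Proposition~\ref{prop: discrete analytic continuation operator} extends it to the s-holomorphic function $F$ on $\dE(\dstrip)$ with Riemann boundary values; setting $\eigF{k}(z+\ii h) := (\eigval{k})^h F(z)$ for $z$ in the base cross-section gives the desired vertical translation eigenfunction, and similarly for $\eigval{-k} := 1/\eigval{k}$, either by running Lemma~\ref{lem: existence of eigenfunctions} with the other root of~\eqref{eq: dispersion relation} or — more economically — via the reflection of Remark~\ref{rmk: reflection and inversion of eigenvalues}.

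Next I would pin down the normalization and uniqueness. The eigenfunction $f$ from Lemma~\ref{lem: existence of eigenfunctions} is of the form $f(x') = C\,e^{\ii\omega x'} + R\overline{C}\,e^{-\ii\omega x'}$, and the proof there shows the eigenvalue equations at the two boundary points fix $C/\overline{C}$ up to a sign, i.e.\ fix the complex line $\bR C$; so $f$ is determined up to a non-zero real scalar. Hence requiring $\eigf{k}(\lft+\ii y') = \eigF{k}(\lft + \ii y') \in e^{-\ii\pi/4}\bR_+$ — recall that Riemann boundary values already force $\eigF{k}$ on the left boundary into $e^{-\ii\pi/4}\bR$ — fixes the sign, and then $\|\eigf{k}\| = 1$ fixes the magnitude. (I should note that $\eigF{k}$ cannot vanish on the entire left boundary: otherwise s-holomorphicity propagates the zero across the strip, contradicting $f \ne 0$; this is the discrete analogue of the unique continuation used in Lemma~\ref{lem:continuous-uniqueness}, and it is also implicit in the counting argument below.) This gives existence and uniqueness of $\eigf{\pm k}$, $\eigF{\pm k}$ exactly as stated.

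The reflection relations $\eigf{-k}(x') = -\ii\,\overline{\eigf{k}(x')}$ and $\eigF{-k}(x+\ii y) = -\ii\,\overline{\eigF{k}(x-\ii y)}$ follow from Remark~\ref{rmk: reflection and inversion of eigenvalues}: the map $f \mapsto \tilde f$, $\tilde f(x') = -\ii\,\overline{f(x')}$, sends eigenfunctions of $\propag$ with eigenvalue $\eigval{k}$ to eigenfunctions with eigenvalue $1/\eigval{k} = \eigval{-k}$, is norm-preserving, and (one checks) preserves the normalization $e^{-\ii\pi/4}\bR_+$ on the left boundary — since $-\ii\,\overline{e^{-\ii\pi/4}r} = -\ii\,e^{\ii\pi/4}r = e^{-\ii\pi/4}r$ for $r>0$. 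So $\widetilde{\eigf{k}}$ satisfies the defining properties of $\eigf{-k}$, and by uniqueness they coincide. Finally, for the orthonormal basis statement: $\propag$ is a linear operator on the $2\width$-dimensional real space $\dfunctionsp$, and by construction we have produced $2\width$ distinct real eigenvalues $\eigval{\pm k}$, $k \in \dposhalfint$ (distinctness across $k$ because the $\omega_k^{(\width)}$ lie in disjoint intervals and $t \mapsto t+t^{-1}$ is injective on $(1,\infty)$, and $\eigval{k} \ne \eigval{-k}$ since $\eigval{k}>1>\eigval{-k}$), so the corresponding unit eigenfunctions $\eigf{\pm k}$ form a basis of $\dfunctionsp$. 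Orthonormality is then not automatic from distinct eigenvalues alone because $\propag$ need not be self-adjoint for $\innprod{\cdot}{\cdot}$; this is the one genuinely delicate point. The cleanest route is to verify directly, by a trigonometric computation using the explicit form $\eigf{k}(x') = C_k(e^{\ii\omega_k x'} + R_k\overline{C_k}/C_k \cdot e^{-\ii\omega_k x'})$ and the resolved ratios from the proof of Lemma~\ref{lem: existence of eigenfunctions}, that $\innprod{\eigf{k}}{\eigf{k'}} = \delta_{k,k'}$ for $k,k' \in \pm\dposhalfint$ — the cross terms being finite geometric sums that collapse using~\eqref{eq: allowed frequency equation}. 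Alternatively, one can cite that $\propag$ is conjugate to the induced rotation of the Ising transfer matrix, which is orthogonal with respect to the natural inner product (as used in~\cite{HKZ-discrete_holomorphicity_and_operator_formalism,Palmer-planar_Ising_correlations}), so its real eigenvectors with distinct eigenvalues are automatically orthogonal; I would present the self-contained trigonometric verification and remark on the conceptual reason.

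I expect the orthonormality to be the main obstacle: everything else is either a direct reduction to the preceding lemmas or a short symmetry argument, whereas showing $\langle \eigf{k},\eigf{k'}\rangle = 0$ for $k \ne k'$ (with $k,k'$ possibly of opposite sign) requires either the transfer-matrix/orthogonality input or a careful bookkeeping of the $R$-ratios and the boundary constraint~\eqref{eq: allowed frequency equation} to see the geometric sums vanish. A convenient intermediate reduction: since $\eigf{-k} = \widetilde{\eigf{k}}$ and $\widetilde{\ \cdot\ }$ is an isometry, it suffices to handle $\langle \eigf{k}, \eigf{k'}\rangle$ and $\langle \eigf{k}, \widetilde{\eigf{k'}}\rangle$ for $k,k' \in \dposhalfint$, the latter reducing via the identity $\langle f, \tilde g\rangle$-type manipulations to sums of the same flavor.
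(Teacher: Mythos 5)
Your construction step, the normalization-and-sign argument, and the reflection relations all follow the paper's own proof: Lemma~\ref{lem: existence of eigenfunctions} for existence, unit norm plus the left-boundary phase to fix the eigenfunction, Remark~\ref{rmk: reflection and inversion of eigenvalues} for $\eigf{-k}=-\ii\,\overline{\eigf{k}}$, and the count of $2\,\#\dposhalfint=2\width=\dmn_\bR\dfunctionsp$ eigenvalues to conclude. (Your side remark that uniqueness comes from fixing $C/\overline{C}$ inside the Fourier ansatz is not needed and not quite self-contained; it is the multiplicity-one consequence of the counting argument, which you also invoke, that actually yields the uniqueness claim.)

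The genuine divergence, and the soft spot of your proposal, is the orthonormality step. The paper's argument is one line: $\propag$ is a \emph{symmetric} operator on the real inner product space $\dfunctionsp$ --- this is among the qualitative facts recalled right after Proposition~\ref{prop: discrete analytic continuation operator} (from \cite{HKZ-discrete_holomorphicity_and_operator_formalism}), and it can be checked directly from the explicit formula: the two $\bC$-linear shift terms carry mutually conjugate coefficients $\lambda^{3}/\sqrt{2}$ and $\lambda^{-3}/\sqrt{2}$, while every antilinear term $f\mapsto a\,\overline{f(\cdot\pm 1)}$ is automatically symmetric for $\innprod{f}{g}=\re\sum f\,\overline{g}$ because $\re\big(a\,\overline{f}\,\overline{g}\big)=\re\big(\overline{a}\,f\,g\big)$, with the boundary rows checked the same way. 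Symmetry plus the $2\width$ distinct eigenvalues $\eigval{\pm k}$ gives pairwise orthogonality at once, and the dimension count finishes. Your stated doubt that ``$\propag$ need not be self-adjoint'' is therefore misplaced, and neither of your substitutes is adequate as written: the direct trigonometric verification of $\innprod{\eigf{k}}{\eigf{k'}}=\delta_{k,k'}$ is only asserted (the cross terms do not vanish by a bare geometric-sum identity; one would need the boundary relations and the $R$-ratios in an honest computation you have not carried out), and the appeal to conjugacy with the induced rotation of the transfer matrix does not give orthogonality, since conjugation preserves spectra but not inner products unless the conjugating map is an isometry. Replace that final step by the observation that $\propag$ is symmetric and your proof coincides with the paper's.
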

\begin{proof}
Lemma~\ref{lem: existence of eigenfunctions}
gives the existence of non-zero eigenfunctions~$\eigf{\pm k}$
of~$\propag$ with the desired eigenvalues~$\eigval{\pm k}$,
and it is clear that unit norm~$\|\eigf{\pm k}\| = 1$ 
fixes these up to a sign in the real vector space~$\dfunctionsp$,
and the argument on the left boundary fixes the remaining sign.

The relation between $\eigf{k}$ and $\eigf{-k}$
as well as between $\eigF{k}$ and $\eigF{-k}$ are
straightforward from 
Remark~\ref{rmk: reflection and inversion of eigenvalues},
since the reflected function
$\widetilde{F}(x+\ii y) = - \ii \, F(x-\ii y)$
has the same argument as~$F$ on the left boundary
(which we used for normalization purposes).

Among $\eigf{k}$, $k \in \pm \dposhalfint$, we have
$2 \, \# \dposhalfint$ normalized eigenfunctions 
of the symmetric operator~$\propag$ with distinct eigenvalues.
In view of $2 \, \# \dposhalfint 
= 2 \width = \dmn \, \dfunctionsp$, these form an orthonormal basis.
\end{proof}

In particular, all the earlier qualitative statements about 
the spectrum of~$\propag$ can of course be verified
from the above explicit diagonalization of it.

\subsubsection*{Decomposition of the function space}
Analogously to the continuous case, we 
split~$\dfunctionsp$ into orthogonally complementary subspaces
\begin{align*}
\dfunctionsp
= \; & \dfspTpole \oplus \dfspTzero ,
\end{align*}
where
\begin{align}\label{eq: discrete top poles and zeros def}
\dfspTpole 
	:= \; & {\spn_\bR \set{ \eigf{k} \; \big| \; k \in \dposhalfint}} &
\dfspTzero
	:= \; & {\spn_\bR \set{ \eigf{-k} \; \big| \; k \in \dposhalfint}} ,
\end{align}
with associated orthogonal projection 
operators
\begin{align*}
\dprTpole \colon \; & \dfunctionsp \to \dfspTpole , &
\dprTzero \colon \; & \dfunctionsp \to \dfspTzero .
\end{align*}
The subspace~$\dfspTpole$ consists of
functions whose s-holomorphic extensions with Riemann boundary values
in the lattice strip
grow exponentially fast in the upwards 
direction, and~$\dfspTzero$ of
functions whose extensions grows exponentially fast in 
the downwards direction.

\subsection{Functions in the lattice slit-strip}
\label{sub: functions in lattice strip}
\label{sub: distinguished s-holomorphic functions in the slit-strip}

We now consider functions in the lattice slit-strip~$\dslitstrip$
of Figure~\ref{sec: discrete complex analysis}.
\ref{sfig: lattice slit-strip}.
We use three subgraphs
\[ \dslitstripT , \; \dslitstripL , \; \dslitstripR
\; \subset \; \dslitstrip \]
of the lattice slit-strip.
The top part~$\dslitstripT$ is taken to consist of all vertices
and edges of~$\dslitstrip$ with non-negative imaginary part.
The left  
leg part~$\dslitstripL$ 
is taken to consist of vertices and edges with non-positive
imaginary part and non-positive real part,
except for those of the doubled edges along the slit
which are considered to form the left boundary of the right substrip.
The right
leg part~$\dslitstripR$ 
is defined similarly.
Note that these three subgraphs of~$\dslitstrip$
have otherwise disjoint edge sets except that each horizontal
edge in the cross-section~$\crosssecdual$ belongs to both
the top part and either the left or the right leg.
We correspondingly partition the 
cross-section~$\crosssecRdual = \dintervaldual{\lft}{\rgt}$
into the left and right halves,
$\crosssecLdual = \dintervaldual{\lft}{\mdpt}$ and
$\crosssecRdual = \dintervaldual{\mdpt}{\rgt}$,
and decompose the discrete function space
to functions with support on the left and right halves,
\begin{align*}
\dfunctionsp = \dfunctionspL \oplus \dfunctionspR ,
\end{align*}
where we define
$\dfunctionspL = \bC^{\crosssecLdual}$ and
$\dfunctionspR = \bC^{\crosssecRdual}$, and interpret both
as subspaces in~$\bC^{\crosssecdual} = \dfunctionsp$.

The strip~$\dstrip$ and the slit-strip~$\dslitstrip$ 
graphs coincide exactly 
in the top part~$\dslitstripT$, and in particular 
s-holomorphic functions
$F \colon \dE(\dslitstripT) \to \bC$
with Riemann boundary values 
in the top part are as in the strip:
the discrete analytic continuation 
upwards from the cross-section~$\crosssecdual$
is achieved by the same
operator~$\propag \colon \dfunctionsp \to \dfunctionsp$.

Downwards from the cross-section, on the other hand, the 
lattice slit-strip~$\dslitstrip$ has separate 
left and right halves~$\dslitstripL$ and $\dslitstripR$,
which coincide with lower halves of lattice strips
of smaller widths~$\widthL=-\lft$ and $\widthR = \rgt$.
Note that due to the double edges on the slit, the left and right
halves have their own sets of edges on which functions
are defined, and the Riemann boundary 
values~\eqref{eq: dRBV on the slit part} are exactly 
what one would require in the smaller width substrips.
Therefore the discrete analytic continuation downwards
from the cross-section~$\crosssecdual$ in the lattice
slit-strip is then simply the direct
sum~$(\propag^{(\widthL)})^{-1} \oplus (\propag^{(\widthR)})^{-1}$
of inverses of operators defined as 
in Section~\ref{sub: functions in lattice strip}
but in substrips of widths~$\widthL , \widthR$.

\subsubsection*{Decompositions of the function space}

The decomposition $\dfunctionsp = \dfunctionspL \oplus \dfunctionspR$
is clearly an orthogonal direct sum, and in each summand
we get an orthonormal basis in the same way as for
the lattice strip.
Instead of~\eqref{eq: discrete positive half integers},
the indexing sets for the (positive) modes are now
\begin{align*}
\dposhalfintL
:= & 
\set{\frac{1}{2} , \frac{3}{2} , \ldots,
    \widthL - \frac{1}{2}} , 
&
\dposhalfintR
:= & 
\set{\frac{1}{2} , \frac{3}{2} , \ldots,
    \widthR - \frac{1}{2}} .
\end{align*}
In the same way as in
Proposition~\ref{prop: discrete vertical translation eigenfunctions},
for each $k \in \pm \dposhalfintL$
we define the normalized eigenvector
$\eigfL{k} \in \dfunctionspL$ of
$\propag^{(\widthL)}$ with eigenvalue~$\eigvalW{\widthL}{k}$
and the extension
\begin{align*}
\eigFL{k} \colon \dE(\dslitstripL) \to \bC, 
\end{align*}
and for each $k \in \pm \dposhalfintR$
the normalized eigenvector
$\eigfR{k} \in \dfunctionspR$ of
$\propag^{(\widthR)}$ with eigenvalue~$\eigvalW{\widthR}{k}$
and the extension
\begin{align*}
\eigFR{k} \colon \dE(\dslitstripR) \to \bC , 
\end{align*}
Together,
$(\eigfL{k})_{k \in \pm \dposhalfintL}$
and~$(\eigfR{k})_{k \in \pm \dposhalfintR}$ form
an othonormal basis of~$\dfunctionsp$.

Given these bases, we may decompose
\begin{align*}
\dfunctionspL
= \; & \dfspLpole \oplus \dfspLzero , 
&
\dfunctionspR
= \; & \dfspRpole \oplus \dfspRzero , 
\end{align*}
where
\begin{align}
\dfspLpole 
	:= \; & {\spn_\bR \set{ \eigfL{k} \; \big| \; k < 0}} &
\dfspRpole 
	:= \; & {\spn_\bR \set{\eigfR{k} \; \big| \; k < 0}} 	
\\
\nonumber
\dfspLzero
	:= \; &{\spn_\bR \set{ \eigfL{k} \; \big| \; k > 0}}
&
\dfspRzero
	:= \; & {\spn_\bR \set{ \eigfR{k} \; \big| \; k > 0}},
\end{align}
with respective orthogonal projection 
operators~$\dprLpole, \dprRpole, \dprLzero, \dprRzero$.

We have thus introduced the decompositions of~$\dfunctionsp$
\begin{align*}
\dfunctionsp
= \; & \dfspTpole \oplus \dfspTzero ,
\end{align*}
and
\begin{align*}
\dfunctionsp
= \; & \dfspLpole \oplus \dfspLzero \oplus \dfspRpole \oplus \dfspRzero .
\end{align*}

\subsubsection*{Singular parts}
As in the continuous case,
for a function~$f \in \dfunctionsp$, we call
\begin{align}
\nonumber
\dprTpole(f) \in \; & \dfspTpole 
& \text{ its \term{singular part at the top},} & \\
\label{eq: definition of singular parts}
\dprLpole(f) \in \; & \cfspLpole 
& \text{ its \term{singular part in the left leg},} & \\
\nonumber
\dprRpole(f) \in \; & \dfspRpole 
& \text{ its \term{singular part in the right leg}.} &
\end{align}
If $\dprTpole(f) = 0$ (resp. $\dprLpole(f)=0$ or
$\dprRpole(f)=0$), we say that the function~$f$
admits a \term{regular extension} to the top 
(resp. regular extension to the left leg or regular
extension to the right leg).

The following result shows that a function is uniquely
characterized by its singular parts.
\begin{lem}\label{lem:discrete-uniqueness}
If a function~$f \in \dfunctionsp$ admits regular extensions
to the top, to the left leg, and to the right leg, then~$f \equiv 0$.
\end{lem}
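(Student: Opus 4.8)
The plan is to mimic the proof of Lemma~\ref{lem:continuous-uniqueness} using the discrete ``imaginary part of the integral of the square'' technique for s-holomorphic functions, which is the standard device from~\cite{Smirnov-towards_conformal_invariance,CS-discrete_complex_analysis_on_isoradial_graphs,CS-universality_in_Ising}. Given $f \in \dfunctionsp$ admitting regular extensions in all three extremities, let $F_\topsym \colon \dE(\dslitstripT) \to \bC$, $F_\lftsym \colon \dE(\dslitstripL) \to \bC$, $F_\rgtsym \colon \dE(\dslitstripR) \to \bC$ be the s-holomorphic extensions with Riemann boundary values obtained by iterating $\propag$ upwards and $(\propag^{(\widthL)})^{-1} \oplus (\propag^{(\widthR)})^{-1}$ downwards. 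By the assumption $\dprTpole(f) = 0$, the expansion of $f$ in the eigenbasis $(\eigf{k})$ involves only indices $-k$ with $k \in \dposhalfint$, i.e.\ only eigenvalues $\eigval{-k} = 1/\eigval{k} < 1$; hence $F_\topsym(x' + \ii n)$ decays exponentially (geometrically) as $n \to +\infty$. Similarly $\dprLpole(f) = 0$ and $\dprRpole(f) = 0$ force $F_\lftsym$ and $F_\rgtsym$ to decay geometrically as the imaginary part tends to $-\infty$.

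The core step is the discrete contour-integration identity. For an s-holomorphic function $F$ one assigns to each edge $z$ the quantity $F(z)^2 \,(\text{appropriate } \ud z)$ and forms the discrete primitive of $\im$ of this one-form (equivalently, the discrete analogue of $\im \int F^2$); the key facts are that (i) this discrete primitive is well-defined (the increments around each face sum to zero, which is precisely where s-holomorphicity — not just discrete Cauchy–Riemann — is used), and (ii) on a boundary edge with Riemann boundary value, the contribution to the primitive along the boundary is monotone, because there $F(z)^2$ is a nonnegative (resp.\ nonpositive) real multiple of the tangent direction squared. I would apply this in the top part $\dslitstripT$: summing the one-form around the boundary of the region between the cross-section $\crosssecdual$ at height $0$ and a far horizontal line at height $N$, the left and right vertical boundaries contribute terms of definite sign (by the Riemann boundary values~\eqref{eq: dRBV in strip}), the top horizontal line contributes a term that vanishes as $N \to \infty$ by the geometric decay, and so one obtains $\re \sum_{x' \in \crosssecdual} F_\topsym(x')^2 \cdot (\pm 1) \geq 0$, i.e.\ a sign on $\re \sum_{x'} f(x')^2$ over the whole cross-section. (I would need to track the precise discrete weights and the phase $\lambda = e^{\ii\pi/4}$ so that the real part comes out with the sign matching the continuum computation; this is the routine but slightly fiddly bookkeeping.) The same argument in $\dslitstripL$ (with its Riemann boundary values, including~\eqref{eq: dRBV on the slit part} on the slit side) gives the opposite sign on $\re \sum_{x' \in \crosssecLdual} f(x')^2$, and likewise in $\dslitstripR$ on $\re \sum_{x' \in \crosssecRdual} f(x')^2$. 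Adding the two leg inequalities and comparing with the top inequality forces $\re \sum_{x' \in \crosssecdual} f(x')^2 = 0$, and moreover forces each boundary term to vanish; in particular the boundary contributions of $F_\topsym$ along the two vertical sides vanish.

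From the vanishing of these boundary terms I would conclude that $F_\topsym$ itself vanishes on each of the two vertical boundary lines of the top part: along, say, the right boundary $z = \rgt + \ii y'$ the Riemann boundary value forces $F_\topsym(z) \in e^{\ii\pi/4}\bR$, so $F_\topsym(z)^2 \in \ii\,\bR_{\geq 0}$, and the vanishing of the summed $\im(F_\topsym^2)$ along a semi-infinite boundary ray of terms of one sign forces each term to be zero. An s-holomorphic function vanishing on a boundary segment (equivalently on a line of edges) vanishes identically — this is the discrete unique-continuation fact, which follows by propagating the s-holomorphicity relations~\eqref{eq: s-holomorphicity} across adjacent edges, or one can invoke it directly from the strip setup: $F_\topsym$ restricted to each horizontal cross-section is $\propag^n f$, and $\propag$ is invertible with $1$ not an eigenvalue, so vanishing on one vertical line would be incompatible with nonvanishing $f$ unless $f = 0$. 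Hence $f \equiv 0$ on the left and right halves coming from the top analysis, and so $f \equiv 0$. The main obstacle I anticipate is not conceptual but the careful handling of the discrete boundary weights and the doubled slit edges so that the three sign inequalities fit together exactly as in the continuum proof; the s-holomorphicity (as opposed to mere discrete holomorphicity) is exactly what is needed to make the discrete primitive of $\im(F^2)$ well-defined, and that is the one place where the stronger hypothesis is essential.
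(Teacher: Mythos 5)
There is a genuine gap at the core step of your argument. You want to run the continuum computation verbatim: sum the one-form $F(z)^2\,\ud z$ around the boundary of a truncated top (or leg) region and read off a sign-definite statement about $\sum_{x'} f(x')^2$ on the cross-section. But s-holomorphicity does \emph{not} give a discrete Cauchy theorem for $F^2$: the square of a discrete holomorphic function is not discrete holomorphic, and the only part of $\int F^2\,\ud z$ that is well-defined is its imaginary part, $H=\iis{F}$, and even that only in the two-lattice sense (a function on vertices \emph{and} faces, with $H(v)-H(p)=\sqrt{2}\,\delta\,|F(c)|^2\ge 0$, $H$ superharmonic on vertices and subharmonic on faces, \`a la Chelkak--Smirnov). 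Your claim (i), that the increments of the full form close up around each face, is false; only $\im(F^2\,\ud z)$ closes up. Worse, the sign-definite boundary terms you plan to exploit are not there: on a boundary edge with Riemann boundary values $F(z)^2$ is purely imaginary, so the increment of $H$ \emph{along} the boundary, $\im\big(F(z)^2\,(u'-u)\big)$ with $u'-u=\pm\ii\delta$, is $\pm\delta\,\re(F(z)^2)=0$. That is precisely why $H$ is constant on each boundary component; the positivity that replaces the continuum boundary integrals $\int|F|^2$ sits in the transverse vertex--face increments, not in boundary contour terms, and extracting anything from it requires the sub/superharmonicity and a maximum principle rather than a contour identity. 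Your closing step has the same confusion: the ``summed $\im(F^2)$ along a boundary ray'' is not a telescoping quantity for $H$ (those would be the vanishing $\re(F^2)$ increments), and the discrete unique-continuation claim ``s-holomorphic and zero on a boundary line implies zero'' is not justified by naive propagation of~\eqref{eq: s-holomorphicity}, since each corner relation only constrains one real projection of the neighbouring value.

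For comparison, the paper's proof goes through $H=\iis{F}$ globally on the slit-strip: the regular-extension hypothesis gives geometric decay of $F$ in all three extremities (eigenvalues $\eigval{-k}<1$, as you noted), hence $H$ tends to a common constant $M$ on all boundary components; the maximum principle for the superharmonic restriction $H|_{\dV}$ gives $H\le M$ on vertices, the minimum principle for the subharmonic restriction $H|_{\dV^*}$ gives $H\ge M$ on faces, and since $H(v)\ge H(p)$ for adjacent vertex--face pairs, $H\equiv M$; then every increment $\sqrt{2}\,\delta\,|F(c)|^2$ vanishes, so $F\equiv 0$ and $f\equiv 0$. Your observation that the decay assumptions kill the far-field contributions is the right first ingredient, but the contour-integration identity you build on it is not available in the discrete setting and needs to be replaced by this maximum-principle argument.
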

We postpone the proof of this lemma to
Section~\ref{ssec: the iios trick}, where we have at our disposal
the necessary discrete complex analysis tools needed to carry out
the proof analogous to the continuum.

\subsubsection*{Functions with prescribed singular parts}
In the discrete setting,
the construction of the functions with prescribed singular parts
can now be achieved simply by finite-dimensional linear algebra.
\begin{lem}\label{lem:discrete-existence}
For any $g_\topsym \in \dfspTpole$,
$g_\lftsym \in \dfspLpole$, $g_\rgtsym \in \dfspRpole$,
there exists a unique function $f \in \dfunctionsp$ such that
\begin{align*}
\dprTpole(f) = \; & g_\topsym , &
\dprLpole(f) = \; & g_\lftsym , &
\dprRpole(f) = \; & g_\rgtsym .
\end{align*}
\end{lem}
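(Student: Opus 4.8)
The plan is to realize the assignment of singular parts as an $\bR$-linear map between two real vector spaces of the same finite dimension, and to deduce bijectivity from triviality of the kernel, which is precisely Lemma~\ref{lem:discrete-uniqueness}. Concretely, I would introduce the map
\[
\Phi \colon \dfunctionsp \longrightarrow \dfspTpole \oplus \dfspLpole \oplus \dfspRpole ,
\qquad
\Phi(f) = \big( \dprTpole(f) , \, \dprLpole(f) , \, \dprRpole(f) \big) ,
\]
which is well-defined and $\bR$-linear because the three orthogonal projections $\dprTpole$, $\dprLpole$, $\dprRpole$ on $\dfunctionsp$ are. The statement of the lemma is exactly that $\Phi$ is bijective; since source and target are finite-dimensional real vector spaces, it suffices to check that they have equal dimension and that $\Phi$ is injective.

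For the dimension count I would use Proposition~\ref{prop: discrete vertical translation eigenfunctions}: the functions $\eigf{k}$, $k \in \pm\dposhalfint$, form an orthonormal $\bR$-basis of $\dfunctionsp$, so $\dim_\bR \dfunctionsp = 2\,\#\dposhalfint = 2\width$. The subspace $\dfspTpole$ is the span of the $\#\dposhalfint = \width$ functions $\eigf{k}$ with $k>0$; likewise $\dfspLpole$ and $\dfspRpole$ are spanned by the $\#\dposhalfintL = \widthL$ functions $\eigfL{k}$ with $k<0$ and the $\#\dposhalfintR = \widthR$ functions $\eigfR{k}$ with $k<0$, respectively (these being genuine bases by the construction in Section~\ref{sub: functions in lattice strip} applied to the substrips). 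Hence
\[
\dim_\bR \big( \dfspTpole \oplus \dfspLpole \oplus \dfspRpole \big)
= \width + \widthL + \widthR = 2\width ,
\]
using $\widthL = -\lft$, $\widthR = \rgt$ and $\width = \rgt - \lft$.

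Injectivity is then immediate from Lemma~\ref{lem:discrete-uniqueness}: if $\Phi(f) = 0$, i.e.\ $\dprTpole(f) = \dprLpole(f) = \dprRpole(f) = 0$, then $f$ admits regular extensions to the top and to both legs, so $f \equiv 0$. By rank--nullity, $\Phi$ is therefore an isomorphism, which yields simultaneously the existence and the uniqueness of $f \in \dfunctionsp$ with $\dprTpole(f) = g_\topsym$, $\dprLpole(f) = g_\lftsym$, $\dprRpole(f) = g_\rgtsym$. The only substantive input is Lemma~\ref{lem:discrete-uniqueness} (whose proof is deferred to the discrete complex analysis machinery of Section~\ref{sec: convergence of functions}); modulo that, this statement is purely a matter of linear algebra, and the one point to be checked with a little care is the bookkeeping that the eigenbasis decompositions of $\dfunctionspL$ and $\dfunctionspR$ really make $\dprLpole$, $\dprRpole$ projections onto subspaces of the claimed $\bR$-dimensions $\widthL$, $\widthR$.
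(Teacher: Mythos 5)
Your proposal is correct and follows essentially the same route as the paper: the paper's proof also considers the $\bR$-linear map $f \mapsto \big(\dprTpole(f), \dprLpole(f), \dprRpole(f)\big)$, counts dimensions ($2\width = \width + \widthL + \widthR$), and deduces bijectivity from the injectivity supplied by Lemma~\ref{lem:discrete-uniqueness}. Your extra bookkeeping on the eigenbases of the substrips is a harmless elaboration of the same dimension count.
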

\begin{proof}
Consider the linear map
\begin{align*}
f \mapsto 
    \big(\dprTpole(f) , \, \dprLpole(f) , \, \dprRpole(f) \big)
\end{align*}
on the function space~$\dfunctionsp$. It maps the
space $\dfunctionsp$ of dimension~$\dmn_\bR(\dfunctionsp) = 2 \width$
to the external direct sum
$\dfspTpole \oplus \dfspLpole \oplus \dfspRpole$, which is a
space of dimension
\begin{align*}
\dmn_\bR(\dfspTpole) + \dmn_\bR(\dfspTpole) + \dmn_\bR(\dfspTpole) 
= \; & \width + \widthL + \widthR = 2 \width .
\end{align*}
Its injectivity follows from Lemma~\ref{lem:discrete-uniqueness},
so bijectivity follows by the equality of the dimensions.
\end{proof}

By the above, in analogy with~\eqref{eq:cpole-asymptotic}
we define
\begin{align*}
\poleT{k} \in \; & \dfunctionsp \; \text{ for } k \in \dposhalfint, &
\poleL{k} \in \; & \dfunctionsp \; \text{ for } k \in \dposhalfintL, &
\poleR{k} \in \; & \dfunctionsp \; \text{ for } k \in \dposhalfintR, 
\end{align*}
as the functions whose singular parts
are
\begin{align}
\nonumber
\dprTpole(\poleT{k}) = \; & \eigf{+k} , &
\dprLpole(\poleT{k}) = \; & 0 , &
\dprRpole(\poleT{k}) = \; & 0 , \\ 
\label{eq:dpole-asymptotic}
\dprTpole(\poleL{k}) = \; & 0 , &
\dprLpole(\poleL{k}) = \; & \eigfL{-k} , &
\dprRpole(\poleL{k}) = \; & 0 , \\ 
\nonumber
\dprTpole(\poleR{k}) = \; & 0 , &
\dprLpole(\poleR{k}) = \; & 0 , &
\dprRpole(\poleR{k}) = \; & \eigfR{-k} .
\end{align}

These are functions which are singular under s-holomorphic 
propagation in one direction, while admitting regular 
extensions in the remaining two directions. Denote the
corresponding s-holomorphic functions with Riemann boundary
values in the lattice slit-strip by
\begin{align*}
\PoleT{k} \colon \dE(\dslitstrip) \to \bC , \qquad
\PoleL{k} \colon \dE(\dslitstrip) \to \bC , \qquad
\PoleR{k} \colon \dE(\dslitstrip) \to \bC 
\end{align*}
We call these the \term{discrete pole functions}.
Note that these are defined globally in the lattice slit-strip,
unlike for example~$\eigFL{k}$, $\eigFR{k}$, and $\eigF{k}$
(each of these is globally defined in a suitable lattice strip
which only coincides with the lattice slit-strip in
one of the three subgraphs).

These functions have asympotics analogous 
to~\eqref{eq:cpole-asymptotic}:
\begin{align}\nonumber
\PoleT{k}(x+\ii y) - \eigF{k}(x+\ii y) 
= \; & o(1) \qquad \text{ as $y \to +\infty$ and $x+\ii y \in \dslitstripT$} , 
\\ \label{eq:dpole-asymptotic}
\PoleL{k}(x+\ii y) - \eigFL{-k}(x+\ii y)
= \; & o(1) \qquad \text{ as $y \to -\infty$ and $x+\ii y \in \dslitstripL$} , 
    \\ \nonumber
\PoleR{k}(x+\ii y) - \eigFR{-k}(x+\ii y)
= \; & o(1) \qquad \text{ as $y \to -\infty$ and $x+\ii y \in \dslitstripR$} .
\end{align}
Together with the regular extension to the other two extremities in 
each case, the asymptotics~\eqref{eq:dpole-asymptotic} characterize
the discrete pole functions.

\section{Discrete complex analysis and scaling limit results}%
\label{sec: convergence of functions}

In Sections~\ref{sec: function spaces} 
and~\ref{sec: discrete complex analysis} we introduced
spaces of functions in continuum and discrete settings, respectively,
and distinguished functions adapted to the strip and the slit-strip
geometries in each case. In this section, we prove 
convergence of the discrete functions to the continuum ones,
as the lattice width increasese, $\width \to \infty$.
We must
require~${\lft/\width \to \mhalf}$
and~${\rgt/\width \to \phalf}$ as $\width \to \infty$,
and in order for the
functions~$\eigf{k}$ defined on the discrete
cross-section~$\crosssecdual = \dintervaldual{\lft}{\rgt}$
to approximate the functions~$\ccffun_k$ 
defined on $\ccrosssec = [\mhalf, \phalf]$, their arguments
must be rescaled by a factor~$\width^{-1}$. Because of the norms
induced by \eqref{eq: inner product of discrete functions}
and~\eqref{eq: inner product on L2} for discrete and continuous
functions, also the values of the discrete functions must be rescaled
by~$\width^{1/2}$
(the norm-squared of the constant function~$1$ in the discrete is~$\width$).
Similarly for functions on the discrete 
strip 
and slit-strip (both with vertex sets $\dinterval{\lft}{\rgt} \times \bZ$),
we rescale arguments by~$\width^{-1}$ and values by~$\width^{1/2}$.
In order to discuss convergence (typically uniformly over
compact subsets), we will interpret the
discrete functions being interpolated to the continuum
in any reasonable manner
\footnote{The details of the interpolation
are irrelevant except for the fact that the equicontinuity established
for the values on the lattice functions has to be inherited
by their interpolations to the continuum.
One possibility is to extend by local 
averages to a triangulation that refines the lattice on which the 
values are defined, and then to linearly interpolate on the 
triangles. Another possibility is to linearly interpolate along 
line segments on which adjacent values are defined, and then
to harmonically interpolate to the areas surrounded by the line 
segments.}
without explicit mention.

In Section~\ref{ssec: convergence of strip functions}
we first prove the convergence
in the scaling limit of the discrete vertical translation
eigenfunctions in the strip. The formulas we have in this
case are sufficiently explicit for the proof to be done
without analytical tools.
In Sections~\ref{ssec: the iios trick}~--
~\ref{ssec: precompactness and convergence} we introduce the
regularity theory for s-holomorphic functions as it is needed
for the remaining main results.
The key tool is the 
``imaginary part of the integral of the square'' of an
s-holomorphic function introduced by
Smirnov~\cite{Smirnov-towards_conformal_invariance}:
a function defined on both vertices and faces which
behaves almost like a harmonic function 
and has constant boundary values on any part of the boundary
on which the s-holomorphic function had Riemann boundary 
values. This will be introduced in Section~\ref{ssec: the iios trick}.
Notably, the almost harmonicity implies suitable versions of
maximum principles, Beurling-type estimates, and
equicontinuity results.
In Section~\ref{ssec: maximum principle and uniqueness},
the maximum principle will be used to prove that an s-holomorphic
function on the discrete slit-strip 
admitting regular extensions to all three directions is zero, 
and therefore any s-holomorphic function is uniquely characterized
by its singular parts.
In Section~\ref{ssec: precompactness and convergence},
the Beurling-type estimates and equicontinuity results will be
used to prove the convergence of the discrete pole functions
in the slit-strip to the continuum ones.

\subsection{Convergence of vertical translation eigenfunctions}
\label{ssec: convergence of strip functions}

We start from the distinguished functions in the strip geometry,
i.e., the vertical translation eigenfunctions of 
Sections~\ref{ssec: vertical translation eigenfunctions}
and~\ref{sub: functions in lattice strip}.
The convergence of these can be
proven directly from the explicit formulas we have obtained.

\subsubsection*{Auxiliary asymptotics}
Let us record auxiliary observations about the explicit 
formulas for the functions~$\eigf{\pm k} \in \dfunctionsp$
and the involved frequencies~$\omega^{(\width)}_k$
and eigenvalues~$\eigval{\pm k}$.
In the scaling limit setup, we consider the 
index~$k \in \poshalfint$ fixed, and consider the
limit~${\width \to \infty}$ of infinite width (in lattice units).

So let~$k \in \poshalfint$ be fixed.
For $\width \in \bN$, $\width > k$, let 
$\omega^{(\width)}_k \in \big( (k-\half)\pi/\width , \; k\pi/\width \big)$ 
be the unique
solution to~\eqref{eq: allowed frequency equation} as in 
Lemma~\ref{lem: allowed frequencies}, and let
$\eigval{k} := 
    2 - \cos (\omega_k^{(\width)} ) 
    + \sqrt{ \big(3-\cos ( \omega_k^{(\width)} ) \big)
        \big(1-\cos (\omega_k^{(\width)}) \big)} $
be the corresponding solution to~\eqref{eq: dispersion relation}
with~$\eigval{k} > 1$.
\begin{lem}
As $\width \to \infty$, we have
\begin{align}
\label{eq: frequency asymptotics} 
\omega^{(\width)}_k 
= \; & \frac{\pi}{\width} k + \OO(\width^{-2}) , \\
\label{eq: eigenvalue asymptotics} 
\eigval{k} = \; & 1 + \frac{\pi}{\width} k + \OO(\width^{-2}) .
\end{align}
\end{lem}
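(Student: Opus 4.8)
The plan is to extract everything from two elementary identities already available: the reformulation of~\eqref{eq: allowed frequency equation} obtained in the proof of Lemma~\ref{lem: allowed frequencies}, namely $\tan(\omega/2)\,\tan(\width\omega) = \frac{1}{\sqrt{2}}$, and the closed form of~$\eigval{k}$ as the root $>1$ of the dispersion relation~\eqref{eq: dispersion relation}. Throughout, $k \in \poshalfint$ is fixed; I would set $n := k - \half \in \Znn$, so that Lemma~\ref{lem: allowed frequencies} places $\width \, \omega^{(\width)}_k$ in the interval $\big( n\pi , (n+\half)\pi \big)$. In particular $\omega^{(\width)}_k = \OO(\width^{-1})$ is known \emph{a priori} just from this bracketing.

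For~\eqref{eq: frequency asymptotics} I would proceed as follows. Writing $\omega := \omega^{(\width)}_k = \OO(\width^{-1})$, the Taylor expansion of $\tan$ at the origin gives $\tan(\omega/2) = \frac{\omega}{2} + \OO(\width^{-3})$, and substituting into the frequency identity yields $\tan(\width\omega) = \frac{\sqrt{2}}{\omega}\big(1 + \OO(\width^{-2})\big) = \OO(\width)$. Since $\width\omega \in \big( n\pi , (n+\half)\pi \big)$ and $k\pi = (n+\half)\pi$, I would set $\delta_\width := k\pi - \width\omega \in \big( 0 , \frac{\pi}{2} \big)$, so that $\tan(\width\omega) = \cot(\delta_\width)$ and hence $\delta_\width = \arctan\!\big( \frac{\omega}{\sqrt{2}}(1 + \OO(\width^{-2})) \big) = \OO(\width^{-1})$. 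Then $\omega = \frac{1}{\width}\big( k\pi - \delta_\width \big) = \frac{\pi k}{\width} + \OO(\width^{-2})$, which is~\eqref{eq: frequency asymptotics}. (Iterating once more recovers the next-order term $-\frac{\pi k}{\sqrt{2}\,\width^2}$, but this is not needed.)

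For~\eqref{eq: eigenvalue asymptotics} I would simply expand the closed form $\eigval{k} = 2 - \cos(\omega^{(\width)}_k) + \sqrt{\big( 3 - \cos(\omega^{(\width)}_k) \big)\big( 1 - \cos(\omega^{(\width)}_k) \big)}$ in $\omega = \OO(\width^{-1})$: from $1 - \cos\omega = \frac{\omega^2}{2} + \OO(\omega^4)$ and $3 - \cos\omega = 2 + \OO(\omega^2)$ the product under the square root is $\omega^2 + \OO(\omega^4)$, so its positive square root is $\omega + \OO(\omega^3)$, and together with $2 - \cos\omega = 1 + \OO(\omega^2)$ this gives $\eigval{k} = 1 + \omega + \OO(\omega^2)$. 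Inserting~\eqref{eq: frequency asymptotics} yields $\eigval{k} = 1 + \frac{\pi k}{\width} + \OO(\width^{-2})$, as claimed.

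The one place that calls for any care is the bookkeeping near the pole of $\tan$ in the frequency equation — rewriting $\tan(\width\omega)$ as $\cot(\delta_\width)$ and inverting via $\arctan$ — and the verification that the accumulated errors are genuinely $\OO(\width^{-2})$ rather than merely $\oo(\width^{-1})$. Since $k$ is held fixed there is no uniformity issue, no bootstrapping circularity (the relation $\omega(\width + \tfrac{1}{\sqrt 2}\,(1+\oo(1))) = k\pi$ determines $\omega$ outright), and in particular none of the discrete complex analysis machinery of Section~\ref{sec: convergence of functions} is required for this lemma.
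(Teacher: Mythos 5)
Your proposal is correct and follows essentially the same route as the paper: both arguments use the reformulation $\tan(\omega/2)\,\tan(\width\omega)=\tfrac{1}{\sqrt{2}}$ together with the bracketing $\omega^{(\width)}_k\in\big((k-\half)\pi/\width,\,k\pi/\width\big)$ to force $\width\omega^{(\width)}_k$ within $\OO(\width^{-1})$ of the pole of $\tan$ at $k\pi$ (your explicit $\cot(\delta_\width)$ substitution is just a sharper phrasing of the paper's ``first order pole'' step), and both obtain~\eqref{eq: eigenvalue asymptotics} by expanding the closed form $\eigval{k}=\deigvalsym(\omega^{(\width)}_k)$ as $1+\omega+\OO(\omega^2)$ and inserting~\eqref{eq: frequency asymptotics}.
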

\begin{proof}
For the first formula, it is simple to use 
the method of proof of
Lemma~\ref{lem: allowed frequencies}.
Since $0 < \omega^{(\width)}_k < k \pi \width^{-1}$,
we have $0 < \tan \omega^{(\width)}_k / 2 < c \width^{-1}$
for some~$c>0$. Therefore the equation
$\tan \big( \omega^{(\width)}_k/2 \big) \,
\tan \big( \width \omega^{(\width)}_k \big) = \frac{1}{\sqrt{2}}$
implies $\tan \big( \width \omega^{(\width)}_k \big) > c' \, \width$
for some $c'>0$, and the first order pole of~$\tan$ at~$k\pi$ 
then requires
$k \pi \width^{-1} - c'' \, \width^{-2} 
    < \omega^{(\width)}_k < k \pi \width^{-1}$
for some~$c''>0$,
which gives~\eqref{eq: frequency asymptotics}.

For the second formula, let $\deigvalsym(\omega)
= 2 - \cos (\omega) 
    + \sqrt{ \big(3-\cos ( \omega ) \big)
        \big(1-\cos (\omega) \big)}$
for~$\omega \geq 0$. This has a 
power series representation on~$\omega \in (0,\pi)$ with 
initial terms~$\deigvalsym(\omega) = 1 + \omega + 
\OO(\omega^2)$. The second 
formula~\eqref{eq: eigenvalue asymptotics} thus follows from 
the first~\eqref{eq: frequency asymptotics}
in view of ${\eigval{k} = \deigvalsym(\omega^{(\width)}_k)}$.
\end{proof}

\begin{lem}
Let~$k \in \poshalfint$, and
let~$C_{\pm k}^{(\width);+},C_{\pm k}^{(\width);-} \in \bC$
denote the coefficients in
\begin{align*}
\eigf{\pm k} (x') 
= \; & C_{\pm k}^{(\width);+} \, \exp \big( +\ii \omega^{(\width)}_k x' \big)
    + C_{\pm k}^{(\width);-} \, \exp \big( -\ii \omega^{(\width)}_k x' \big) .
\end{align*}
Then as $\width \to \infty$, we have
\begin{align*}
\big| C_{+ k}^{(\width);+} \big| = \; & \OO(\width^{-3/2}) , & 
\big| C_{+ k}^{(\width);-} \big|
    = \; & \width^{-1/2} + \OO(\width^{-3/2}) , \\
\big| C_{- k}^{(\width);+} \big| 
    = \; & \width^{-1/2} + \OO(\width^{-3/2}) , & 
\big| C_{- k}^{(\width);-} \big| = \; & \OO(\width^{-3/2}) .
\end{align*}
\end{lem}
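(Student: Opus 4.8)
The plan is to combine the already-established asymptotics for the frequency~$\omega^{(\width)}_k$ and the eigenvalue~$\eigval{k}$ with the explicit coefficient ratio coming from Lemma~\ref{lem: existence of eigenfunctions} and the normalization~$\|\eigf{\pm k}\|=1$. Recall that in Lemma~\ref{lem: existence of eigenfunctions} the eigenfunction was built as $f(x') = C\, e^{+\ii\omega x'} + R\overline{C}\, e^{-\ii\omega x'}$, where the ratio $R=R(\omega,\deigvalsym)$ is the explicit expression displayed there. So the first step is to compute the asymptotics of $R=R(\omega^{(\width)}_k, \eigval{k})$ as $\width\to\infty$, i.e.\ as $\omega \to 0$ and $\deigvalsym \to 1$: plugging $\omega = O(\width^{-1})$ and $\deigvalsym = 1 + O(\width^{-1})$ into $R = \frac{2 + \sqrt2\cos(\tfrac{3\pi}{4}+\omega) - \deigvalsym}{\sqrt2(1-\cos\omega)}$, the denominator is $\sqrt2 (1-\cos\omega) = \tfrac{\omega^2}{\sqrt2} + O(\omega^4) = O(\width^{-2})$, while the numerator is $2 + \sqrt2\cos\tfrac{3\pi}{4} - 1 + O(\omega) - O(\width^{-1}) = (2 - 1 - 1) + O(\width^{-1}) = O(\width^{-1})$. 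So $R = O(\width^{-1})/O(\width^{-2}) = O(\width)$ — it blows up. One must be slightly careful about the leading order: using the equivalent form $C^- = \frac{\sqrt2(1-\cos\omega)}{2+\sqrt2\cos(\tfrac{3\pi}{4}-\omega)-\deigvalsym}\,\overline{C^+}$ from the same lemma shows $1/R$ has the same $O(\width^{-1})$-over-$O(\width^{-1})$ shape, and in fact $1/R \to 0$ like $\width^{-1}$, consistently with $R = O(\width)$, but the cleaner route is just to track that $R \sim c\,\width$ for some nonzero constant — actually we only need the orders of magnitude.

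The second step is to normalize. For $\eigf{+k}$ we have $C^+_{+k} = C$ and $C^-_{+k} = R\overline{C}$ with $|R|\asymp\width$, so $|C^-_{+k}| = |R|\,|C| \asymp \width\,|C|$ dominates $|C^+_{+k}| = |C|$. The norm-squared is $\|\eigf{+k}\|^2 = \sum_{x'\in\crosssecdual}\bigl|C e^{\ii\omega x'} + R\overline{C}e^{-\ii\omega x'}\bigr|^2$; expanding, the cross terms $\sum_{x'} 2\re(C^2 R\, e^{2\ii\omega x'})$ are $O(\width\cdot\width\cdot|C|^2)$ in the worst case but the oscillating sum $\sum_{x'} e^{2\ii\omega x'}$ over $\width$ terms with $\omega\asymp\width^{-1}$ is $O(\width)$ (geometric series, not $O(\width^2)$, since $2\omega\width \asymp k\pi \not\equiv 0$), so the cross term is $O(\width^2 |C|^2)$ — comparable to the dominant diagonal term $\sum_{x'}|R|^2|C|^2 \asymp \width\cdot\width^2|C|^2 = \width^3|C|^2$; hmm, let me recount: diagonal term from the $C^-$ piece is $\width\cdot|R\overline{C}|^2 \asymp \width^3|C|^2$, which dominates both the $C^+$ diagonal term ($\width|C|^2$) and the cross term ($O(\width^2|C|^2)$). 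Setting $\|\eigf{+k}\|^2 = 1$ forces $\width^3|C|^2 \asymp 1$, i.e.\ $|C| \asymp \width^{-3/2}$, hence $|C^+_{+k}| = |C| = O(\width^{-3/2})$ and $|C^-_{+k}| = |R||C| \asymp \width\cdot\width^{-3/2} = \width^{-1/2}$. To get the sharper statement $|C^-_{+k}| = \width^{-1/2}+O(\width^{-3/2})$ one needs the leading constant in $R$ and a more careful expansion of $\|\eigf{+k}\|^2 = |R|^2|C|^2(\width + O(\width^{-1}) + \text{cross}) = 1$: with the precise leading behavior $R = \sqrt2\,\width/( \text{const})+\dots$ one checks the constants work out to~$1$; this is where one does the bookkeeping, but it is routine once the leading asymptotics of $R$ and the oscillating sums are in hand.

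The third step is to transfer to $\eigf{-k}$ for free using the relation $\eigf{-k}(x') = -\ii\,\overline{\eigf{k}(x')}$ from Proposition~\ref{prop: discrete vertical translation eigenfunctions}: conjugating $\eigf{+k}(x') = C^+_{+k}e^{\ii\omega x'} + C^-_{+k}e^{-\ii\omega x'}$ and multiplying by $-\ii$ gives $\eigf{-k}(x') = -\ii\overline{C^+_{+k}}\,e^{-\ii\omega x'} - \ii\overline{C^-_{+k}}\,e^{+\ii\omega x'}$, so $C^+_{-k} = -\ii\overline{C^-_{+k}}$ and $C^-_{-k} = -\ii\overline{C^+_{+k}}$; taking moduli swaps the two estimates, which is exactly the asserted $|C^+_{-k}| = \width^{-1/2}+O(\width^{-3/2})$, $|C^-_{-k}| = O(\width^{-3/2})$. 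The main obstacle is purely computational precision: pinning down the leading constant in the $\width\to\infty$ expansion of $R(\omega^{(\width)}_k,\eigval{k})$ sharply enough (to relative error $O(\width^{-1})$) and controlling the oscillatory cross-term $\sum_{x'\in\crosssecdual}e^{2\ii\omega^{(\width)}_k x'}$ well enough that it only contributes at the $O(\width^{-3/2})$ level after normalization — both are elementary but require care with the $3-2\sqrt2$ constant from~\eqref{eq: allowed frequency equation} and with the endpoint terms in the finite geometric sum. No deep idea is needed beyond the already-proven Lemmas on~$\omega^{(\width)}_k$ and~$\eigval{k}$.
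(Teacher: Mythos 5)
Your plan is essentially the paper's proof: relate the coefficients by the ratio $R$ from the dispersion-relation lemma, show $|R(\omega^{(\width)}_k)|\asymp\width$, use the unit-norm normalization to fix $\big|C^{(\width);-}_{+k}\big|$, and transfer to $-k$ via $\eigf{-k}(x')=-\ii\,\overline{\eigf{k}(x')}$ (which swaps the two coefficients up to modulus), so the verdict is: correct, same route. Two small points of precision. First, what is actually needed is the \emph{lower} bound $|R(\omega^{(\width)}_k)|\geq c\,\width$ (otherwise the $e^{+\ii\omega x'}$ term need not be subdominant); your displayed numerator expansion $(2-1-1)+\OO(\width^{-1})$ only yields the upper bound $|R|=\OO(\width)$, and the justification in your $1/R$ aside is off (that numerator is in fact $\OO(\omega^3)$, not $\OO(\omega)$), whereas the paper settles this with the two-term expansion $R(\omega)=-\tfrac{2\sqrt{2}}{\omega}+\OO(\omega)$, a one-line Taylor computation. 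Second, the sharp statement $\big|C^{(\width);-}_{+k}\big|=\width^{-1/2}+\OO(\width^{-3/2})$ needs neither the leading constant of $R$ nor any oscillatory-sum control: once $\big|C^{(\width);+}_{+k}\big|\leq\frac{1}{c\width}\big|C^{(\width);-}_{+k}\big|$, one has pointwise $\eigf{k}(x')=C^{(\width);-}_{+k}\big(e^{-\ii\omega^{(\width)}_k x'}+\OO(\width^{-1})\big)$, hence $1=\|\eigf{k}\|^2=\width\,\big|C^{(\width);-}_{+k}\big|^2\big(1+\OO(\width^{-1})\big)$, which gives the claim directly (this is what the paper does; in your version the leading constant of $R$ would cancel anyway, since the dominant part of $\|\eigf{k}\|^2$ is $\width|R|^2|C|^2=\width\big|C^{(\width);-}_{+k}\big|^2$, and the trivial bound on the cross term already contributes only at relative order $\width^{-1}$).
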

\begin{proof}
Consider the case of positive index~$k \in \poshalfint$.
Recall from Lemma~\ref{lem: dispersion relation} that we have
$C_{+k}^{(\width);-} 
= R(\omega^{(\width)}_k) \, \overline{C_{+k}^{(\width);+}}$,
where $R(\omega) = \frac
        {2 + \sqrt{2} \, \cos(\frac{3 \pi}{4} + \omega) - \deigvalsym(\omega)}
        {\sqrt{2} \big( 1 - \cos(\omega) \big)}$.
A calculation shows $R(\omega) = - \frac{2\sqrt{2}}{\omega} + \OO(\omega)$,
and since $\omega^{(\width)}_k = \OO(\width^{-1})$,
we see that $|R(\omega^{(\width)}_k)| > c \, \width$ for some~$c>0$,
i.e.,
\begin{align*}
\big| C_{+k}^{(\width);+} \big|
\leq \frac{1}{c \, \width} \, \big| C_{+k}^{(\width);-} \big| .
\end{align*}
Therefore for the values of the eigenfunction~$\eigf{k}$, we have
\begin{align*}
\eigf{k}(x') 
= \; & C_{+k}^{(\width);-} \Big( \exp \big( -\ii \omega^{(\width)}_k x' \big)
    + \OO(\width^{-1}) \Big) .
\end{align*}
The unit norm normalization condition~$\|\eigf{k}\| = 1$
gives
\begin{align*}
1 = \|\eigf{k}\|^2 = \sum_{x' \in \crosssecdual} |\eigf{k}(x')|^2
= \width \, \big| C_{+ k}^{(\width);-} \big|^2 \, 
    \big( 1 + \OO(\width^{-1}) \big) .
\end{align*}
We conclude that 
$\big| C_{+ k}^{(\width);-} \big| = \frac{1}{\sqrt{\width}} + 
\OO(\width^{-3/2})$
and 
$\big| C_{+ k}^{(\width);+} \big| = \OO(\width^{-3/2})$.
The case of negative indices can be done similarly, but it
also follows from the above using
Remark~\ref{rmk: reflection and inversion of eigenvalues}.
\end{proof}

\subsubsection*{Limit result for the strip functions}
We can now state and straightforwardly verify the scaling limit
result for vertical translation eigenfunctions.

\begin{thm}\label{thm: convergence of strip functions}
Choose sequences $(\lft_{n})_{n \in \bN}$,
$(\rgt_{n})_{n \in \bN}$ of integers
$\lft_{n}, \rgt_{n} \in \bZ$ such that
\begin{itemize}
\item $\lft_{n} < 0 < \rgt_{n}$ for all $n$;
\item $\width_{n} := \rgt_{n} - \lft_{n} \to +\infty$ as $n \to \infty$;
\item $\lft_{n} / \width_{n} \to -\half$
and $\rgt_{n} / \width_{n} \to +\half$ as $n \to \infty$.
\end{itemize}
Let~$\eigf{k}^{(\width_{n})}$ and~$\eigF{k}^{(\width_{n})}$ denote
the functions of 
Proposition~\ref{prop: discrete vertical translation eigenfunctions}
in the lattice strips with~$\lft = \lft_n$ and $\rgt = \rgt_n$.
Then for any~$k \in \pm \poshalfint$, as~$n \to \infty$ we have 
\begin{align*}
\sqrt{\width_n} \;  \eigf{k}^{(\width_{n})} \big( x \width_n \big) 
\to \; & \ccffun_{k}(x) & &
    \text{ uniformly on~$\ccrosssec \ni x$,} \\
\sqrt{\width_n} \; \eigF{k}^{(\width_{n})} \big( z \width_n \big) 
\to \; & \ccfFun_{k}(z) & &
    \text{ uniformly on compact subsets of~$\cstrip \ni z$}.
\end{align*}
\end{thm}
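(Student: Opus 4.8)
The plan is to extract the convergence directly from the explicit description of $\eigf{k}^{(\width)}$ and $\eigF{k}^{(\width)}$ in Proposition~\ref{prop: discrete vertical translation eigenfunctions}, feeding in the three asymptotic expansions recorded just above: \eqref{eq: frequency asymptotics} for $\omega_k^{(\width)}$, \eqref{eq: eigenvalue asymptotics} for $\eigvalW{\width}{k}$, and the moduli of the Fourier coefficients $C_k^{(\width);\pm}$. I would first treat positive indices $k \in \poshalfint$ in detail; the case $k<0$ is then entirely analogous (or follows from it via the reflection identities $\eigf{-k}(x') = -\ii\,\overline{\eigf{k}(x')}$, $\eigF{-k}(x+\ii y) = -\ii\,\overline{\eigF{k}(x-\ii y)}$ and the matching identities $\ccffun_{-k} = -\ii\,\overline{\ccffun_k}$, $\ccfFun_{-k}(x+\ii y) = -\ii\,\overline{\ccfFun_k(x-\ii y)}$ on the continuum side). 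For $k>0$, inserting $x' = x\width_n$ into the two-frequency formula for $\eigf{k}^{(\width_n)}$, using $\omega_k^{(\width_n)} x\width_n = \pi k x + \OO(\width_n^{-1})$ uniformly on $\ccrosssec$ together with $\sqrt{\width_n}\,|C_k^{(\width_n);+}| = \OO(\width_n^{-1})$ and $\sqrt{\width_n}\,|C_k^{(\width_n);-}| = 1 + \OO(\width_n^{-1})$, one gets
\begin{align*}
\sqrt{\width_n}\; \eigf{k}^{(\width_n)}(x\width_n) = e^{\ii\theta_n}\big( e^{-\ii\pi k x} + \varepsilon_n(x) \big), \qquad \theta_n := \arg\big(C_k^{(\width_n);-}\big),
\end{align*}
with $\sup_{\ccrosssec}|\varepsilon_n| \to 0$; all that is missing for the cross-section statement is that $e^{\ii\theta_n} \to C_k$.

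For the extension statement I would then use that $\eigF{k}^{(\width_n)}(x'+\ii y) = (\eigvalW{\width_n}{k})^{y}\,\eigf{k}^{(\width_n)}(x')$ on horizontal edges of height $y \in \bZ$, together with $(\eigvalW{\width_n}{k})^{t\width_n} \to e^{\pi k t}$ uniformly for $t$ in compact intervals by \eqref{eq: eigenvalue asymptotics}; s-holomorphicity then propagates this to the (half-integer height) vertical edges up to a bounded, convergent factor, and interpolation contributes a uniformly vanishing error. This gives $\sqrt{\width_n}\,\eigF{k}^{(\width_n)}(z\width_n) = e^{\ii\theta_n}\big(e^{-\ii\pi k z} + o(1)\big)$ uniformly on compact subsets of $\cstrip$, where $e^{-\ii\pi k z}$ abbreviates $e^{-\ii\pi k x + \pi k y}$. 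So again everything reduces to showing $e^{\ii\theta_n} \to C_k = e^{\ii\pi(-k/2-1/4)}$.

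Pinning this phase is the one step that needs an idea rather than bookkeeping, and is the (mild) main obstacle. I would argue by subsequences: from any subsequence extract a further one along which $e^{\ii\theta_n} \to e^{\ii\psi}$; along it the displays above show $\sqrt{\width_n}\,\eigF{k}^{(\width_n)}(z\width_n) \to G(z) := e^{\ii\psi} e^{-\ii\pi k z}$ uniformly on compacts, a holomorphic function. Since each $\eigF{k}^{(\width_n)}$ takes values in $e^{-\ii\pi/4}\,\bR_{>0}$ on its left boundary, whose rescaled real part $\lft_n/\width_n$ tends to $-\tfrac12$, the limit satisfies $G(-\tfrac12 + \ii y) \in e^{-\ii\pi/4}\,\bR_{\ge 0}$; as $|G(-\tfrac12+\ii y)| = e^{\pi k y} > 0$ this forces $e^{\ii\psi} e^{\ii\pi k/2} \in e^{-\ii\pi/4}\,\bR_{>0}$, i.e.\ $e^{\ii\psi} = C_k$. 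Independence of the subsequence then gives $e^{\ii\theta_n}\to C_k$, and substituting back yields the limits $\ccffun_k$ and $\ccfFun_k$. (Alternatively one could compute $\lim\theta_n$ directly from the boundary relation $C/\overline{C} = -e^{-2\ii\omega x'_\lftsym}\,\tfrac{B^-+RA^-}{A^++RB^+}$ of Lemma~\ref{lem: existence of eigenfunctions}, using $\omega x'_\lftsym \to -\pi k/2$ and $R\to\infty$; but that only fixes $\theta_n$ modulo $\pi$ and one still needs the boundary positivity for the sign, so the normal-family route is cleaner.) Note that, in contrast to the slit-strip results later in this section, no s-holomorphic regularity theory is needed here — the functions are fully explicit.
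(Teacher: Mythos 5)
Your proposal is correct and follows essentially the same route as the paper: plug the asymptotics of $\omega_k^{(\width_n)}$, $\eigval{k}$ and the coefficient moduli into the explicit two-mode formula, factor out a unimodular phase (your $e^{\ii\theta_n}$ is the paper's $\xi_n C_k$), and pin that phase by the common normalization $\arg = -\pi/4$ on the left boundary, with the negative-$k$ case handled by the reflection identities. Your subsequence/normal-family phrasing of the phase-pinning step is just a slightly more spelled-out version of the paper's "uniform convergence on compacts is only possible if $\xi_n \to 1$" argument.
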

\begin{proof}
Consider $k \in \poshalfint$.
We will use the normalization constant~$C_k$
of the quarter-integer Fourier mode~$\ccffun_k$ given 
by~\eqref{eq: normalization of half integer Fourier modes},
and the normalization constants~$C_{+ k}^{(\width_n);\pm}$
as in the previous lemma but in lattice strip with $\lft = \lft_n$
and $\rgt = \rgt_n$.
Let us denote 
\[\xi_n := 
\frac{ C_{+ k}^{(\width_n);-} }{\big| C_{+ k}^{(\width_n);-} \big| \, C_k} . \]
Then $|\xi_n| = 1$, so $\xi_n$ is a phase factor, and we will first 
factor it out. In view of 
${\width_n \, \omega_k^{(\width_n)} \to k \pi}$
and~${\sqrt{\width_n} \, \big| C_{+ k}^{(\width_n);-} \big| \to 1}$
(two previous lemmas)
and the asymptotics for~$\eigf{k}$ from the previous lemma, we get
\begin{align*}
\frac{\sqrt{\width_n}}{\xi_n} \;
    \eigf{k}^{(\width_n)} \big( x \width_n \big) 
= \; & \frac{\sqrt{\width_n} \, C_{+k}^{(\width_n);-}}{\xi_n} \;
    \Big( \exp \big( -\ii \omega^{(\width_n)}_k x \width_n \big)
    + \OO(\width_n^{-1}) \Big) \\
\longrightarrow \; & C_k \;
    \exp \big( -\ii k \pi x \big) 
\; = \; \ccffun_k(x) 
\end{align*}
uniformly over~$x \in \ccrosssec$.
Since
\[ \big( \eigvalW{\width_n}{k} \big)^{y \, \width_n}
= \big( 1 + \pi k / \width_n + \OO(\width_n^{-2}) \big)^{y \, \width_n}
\; \longrightarrow \; e^{\pi k y} ,  \]
and~$\eigF{k}^{(\width_n)}(x + \ii y) 
= \big(\eigvalW{\width_n}{k}\big)^y \, \eigf{k} (x)$,
we also have
\begin{align*}
\frac{\sqrt{\width_n}}{\xi_n} \;
    \eigF{k}^{(\width_n)} \big( (x + \ii y) \width_n \big) 
\longrightarrow \; & e^{\pi k y} \, \ccffun_{k} (x)
\; = \; \ccfFun_k(x) 
\end{align*}
uniformly on compact subsets of~$\overline{\cstrip} = \ccrosssec \times \bR$.
To finish the proof of the convergence assertions, it only remains to 
show that the phase factor is asymptotically correct, $\xi_n \to 1$.
This is indeed a consequence of the chosen 
normalizations. We have defined~$C_{+ k}^{(\width_n);-}$ and $C_k$
so that ${\arg \big( \eigF{k} \big) = - \pi / 4}$
and ${\arg \big( \ccfFun_{k} \big) = - \pi / 4}$ on the left 
boundaries of the lattice strip and continuum strip, respectively,
so the uniform convergence on compacts that we established above
is only possible if also~$\xi_n \to 1$.

The case of negative indices can be done similarly, but it
also follows from the above using
Remark~\ref{rmk: reflection and inversion of eigenvalues}.
\end{proof}

\subsection{The imaginary part of the integral of the square}
\label{ssec: the iios trick}

In the remaining part,
Sections~\ref{ssec: the iios trick}~
--~\ref{ssec: precompactness and convergence}, 
we recall the regularity theory for s-holomorphic functions, 
and apply it to prove the main results.
Analogous to the continuous case, the lattice discretization of 
Cauchy-Riemann equations are equivalent to the existence of the
line integral of~$F$, i.e., the
closedness of the (discretization of the)
$1$-form~$F(z)\,\ud z$. S-holomorphicity is a 
strictly stronger notion which also implies closedness of 
(the discretization of)
another form,~$\im \left[ F(z)^2 \, \ud z\right]$
\cite{CS-universality_in_Ising}, so that
the ``imaginary part of the integral of the square''
becomes well-defined.
We remark that the literature contains a few different
conventions about s-holomorphicity.
\footnote{With some alternative
conventions the additional closed form
is~$\re \left[ F(z)^2 \, \ud z\right]$ instead.}
Our conventions coincide with those
of~\cite{CS-discrete_complex_analysis_on_isoradial_graphs, 
CS-universality_in_Ising} presented in the general context
of isoradial graphs, but they differ by a multiplicative factor 
and the orientation of the square grid from most of the 
literature specific to the square lattice such 
as~\cite{CHI-conformal_invariance_of_spin_correlations}.

\begin{figure}[tb]
    \includegraphics[width=.35\textwidth]{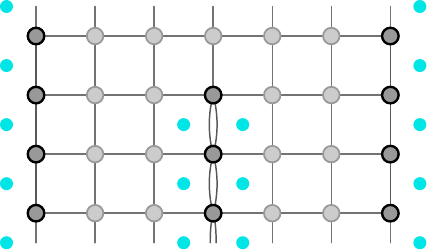}
\caption{The boundary faces with reduced weight are marked. The marked faces 
next to the slit are virtual faces accessed \emph{across} the slit from an 
interior face.}
\label{fig: corner values and boundary}
\end{figure}

\subsubsection*{Refinements to the lattice domains}
Since we use discrete complex analysis only in the 
lattice strip~$\dstrip$ and lattice slit-strip~$\dslitstrip$,
we will present the tools in the simplest form that applies to 
these cases.
Moreover, since our main objective is to show convergence 
results in the scaling limit framework, where the lattice variables 
are rescaled by a factor~$\width^{-1}$, we 
will present the key notions in the context of the
rescaled square lattice~$\delta\bZ^2$ for~$\delta>0$:
the choice~$\delta=1$ corresponds to the original lattice setup, 
and the choice~$\delta = \width^{-1}$ will be used for 
scaling limit results.

The following graph notations will be needed.
We denote the 
set of vertices 
by~$\dV = \delta \, \dstrip = \delta \, \dslitstrip$,
and the set of edges by~$\dE$; 
so $\dE = \delta \dE(\dstrip)$ or $\dE = \delta \dE(\dslitstrip)$.
We moreover use the 
notation~$\dV^*$ for the set of faces.
The ``imaginary part of the integral of the square''
will be defined on~$\dV \cup \dV^*$.
For the treatment of boundary values, we also introduce
\term{boundary faces}, which are imagined faces across 
the boundary edges as in
Figure~\ref{fig: corner values and boundary}.
The set of boundary faces is denoted by~$\bdry \dV^*$, and it is
by definition in bijective correspondence with the set 
of boundary edges.

Let us also define a \term{corner} of our graph to be
a pair~$c=(v,p)$ consisting of a vertex~$v \in \dV$ and 
a face~$p \in \dV^*$ which are adjacent to each other.

\subsubsection*{S-holomophicity and the imaginary part of 
the integral of the square}
The definition~\eqref{eq: s-holomorphicity} of 
s-holomorphicity of a function~$F \colon \dE \to \bC$
can be reinterpreted as follows.
For any corner~$c=(v,p)$,
the values~$F(z_1)$ and~$F(z_2)$ on 
both edges~$z_1 , z_2 \in \dE$ adjacent to~$v$ and~$p$ 
have the same projection to the
line $\sqrt{\ii/(v-p)} \, \bR$ in the complex plane, i.e.,
we may associate a well-defined value
to the function~$F$ at the corner~$c = (v,p)$ by
\begin{align*}
F(c) 
:= \; & \projtoline{\sqrt{\ii/(v-p)} \, \bR} \big( F(z_j) \big)
\, = \, \half \Big( F(z_j) 
        + \frac{\ii \, |v-p|}{v-p} \, \overline{F(z_j)} \Big) , 
\qquad \text{ for $j=1,2$.}
\end{align*}

For any s-holomophic function
\begin{align*}
F \colon \dE \to \bC
\end{align*}
there exists a function
\begin{align*}
H \colon \dV \cup \dV^* \to \bR ,
\end{align*}
defined uniquely up to an additive real constant by the condition that
for any vertex~$v$ and adjacent face~$p$, with $c=(v,p)$ the 
associated corner, we have
\begin{align}
\label{eq: definition of iis}
H(v) - H(p) := \; & \im \Big( 2 \, F(c)^2 \; (v-p) \Big)
    = \sqrt{2} \, \delta \, \left| F(c) \right|^2 \geq 0 ;
\end{align}
see~\cite[Proposition 3.6(i)]{CS-universality_in_Ising}.
The factor two in front of the squared value is included because of
the definition of the values on corners: if $u,u'$ are two 
adjacent vertices of faces and $z = \frac{u+u'}{2} \in \dE$ the 
edge between them, 
then a calculation~\cite[Proposition 3.6(ii)]{CS-universality_in_Ising}
from~\eqref{eq: definition of iis} shows
\begin{align}
\label{eq: iis increment}
H(u')-H(u) 
= \; & \im \Big( F(z)^2 \; (u'-u) \Big) .
\end{align}
We denote
\begin{align*}
H = \iis{F} ,
\end{align*}
and call~$H$ the ``imaginary part of the integral of the square'' of~$F$.
Note that from the definition~\eqref{eq: definition of iis} it
is clear that for a vertex~$v \in \dV$ and an adjacent
face~$p \in \dV^*$, we always have~$H(v) \geq H(p)$.

When~$F$ has Riemann boundary value~\eqref{eq:rbv_df}
on a boundary edge~$z$ between adjacent boundary vertices~$v,v'$, we 
see from~\eqref{eq: iis increment} that
$H(v)=H(v')$. Therefore Riemann boundary values for an s-holomorphic
function~$F$ imply that~$H = \iis{F}$ is constant on the boundary 
vertices of each boundary part. 
We then extend the definition to the boundary faces by the same constant,
and obtain a function
\begin{align*}
H \colon \dV \cup \dV^* \cup \bdry \dV^* \to \bR .
\end{align*}

\subsection{Sub- and superharmonicity and the maximum principle}
\label{ssec: maximum principle and uniqueness}

Unlike in the continuous case, 
the ``imaginary part of the integral of the square'' $H = \iis{F}$
of an s-holomorphic  function~$F$ is \emph{not}
(discrete) harmonic. The remarkable
observation is that it nevertheless
mimics the behavior of harmonic functions extremely well: 
its restriction~$H|_{\dV}$ to vertices is superharmonic,
and its restriction~$H|_{\dV^*}$ to faces is subharmonic,
and because of the boundary values, the values on vertices and faces
are suitably close.
The first incarnation of this almost harmonicity of~$H$ is the
following version of the maximum principle.

\begin{lem}[{\cite[Proposition~3.6(iii) and 
Lemma~3.14]{CS-universality_in_Ising}}]
\label{lem: maximum principle}
Let $F \colon \dE \to \bC$ be s-holomorphic with Riemann 
boundary values, and 
let~$H = \iis{F}$,
${H \colon \dV \cup \dV^* \cup \bdry \dV^* \to \bR}$ be defined
as above. Then we have:
\begin{itemize}
\item[(i)] At any interior vertex~$v$, the value of~$H$ is 
at most the simple arithmetic average of the values at the four 
neighboring vertices,
\begin{align*}
H(v) \leq \frac{1}{4} \Big( H(v+\delta) 
    + H(v+\ii\delta) + H(v-\delta) + H(v-\ii \delta) \Big) .
\end{align*}
In particular $H$ can not have a strict local maximum at 
an interior vertex.
\item[(ii)] At any interior face~$p$, the value of~$H$ is 
at least the weighted average of the values at the four 
neighboring faces,
\begin{align*}
H(p) \geq \frac{\sum_{s \in \set{\delta, \ii \delta, -\delta, -\ii \delta}} 
        \wgt (p+s) \, H(p+s)}
    {\sum_{s \in \set{\delta, \ii \delta, -\delta, -\ii \delta}} 
        \wgt (p+s)} ,
\end{align*}
where $\wgt (p') = 1$ for~$p' \in \dV^*$, and $\wgt (p') = 2(\sqrt{2}-1)$
for~$p' \in \bdry \dV^*$.
In particular $H$ can not have a strict local minimum at 
an interior face.
\end{itemize}
\end{lem}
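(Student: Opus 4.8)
The plan is to recall the local computation underlying this statement, which in our conventions reproduces \cite[Proposition~3.6(iii) and Lemma~3.14]{CS-universality_in_Ising}. Both parts are proved by fixing an interior vertex (resp.\ face), expressing the increments of $H = \iis{F}$ between it and its four graph-neighbors purely in terms of the corner values of $F$, and checking the sign of the resulting discrete Laplacian. Recall that around an interior vertex $v \in \dV$ there are four incident edges $z_\Nth, z_\Est, z_\Sth, z_\Wst \in \dE$ and four incident faces $p_\NE, p_\SE, p_\SW, p_\NW \in \dV^*$, hence four corners $c_\NE = (v,p_\NE)$, etc.; by s-holomorphicity each edge value, e.g.\ $F(z_\Nth)$, is the unique complex number whose orthogonal projections onto the two lines $\sqrt{\ii/(v - p_\NW)}\,\bR$ and $\sqrt{\ii/(v - p_\NE)}\,\bR$ — which on the square grid meet at angle $\pi/4$ — equal $F(c_\NW)$ and $F(c_\NE)$, respectively. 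Thus every edge value is recovered from the two adjacent corner values, and all increments below become real-quadratic expressions in the quantities $|F(c)|^2 = H(v)-H(p) \geq 0$.

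For part~(i), I would start from the increment formula \eqref{eq: iis increment}, writing $H(v+s) - H(v) = \im\big(F(z_s)^2\, s\big)$ for each $s \in \set{\delta, \ii\delta, -\delta, -\ii\delta}$, with $z_s = v + s/2$. Substituting the expression of $F(z_s)$ through its two adjacent corner values and summing the four increments, a direct computation (in which the cross terms between corner values attached to different faces cancel because of the angle relations among the directions $\sqrt{\ii/(v-p)}$) reduces $\sum_s\big(H(v+s) - H(v)\big)$ to a nonpositive expression in the four quantities $H(v) - H(p)$, $p \in \set{p_\NE,p_\SE,p_\SW,p_\NW}$; this is precisely superharmonicity of $H|_{\dV}$. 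An equivalent and slightly shorter route is to write each $H(v+s)-H(v)$ as the difference of two nonnegative quantities $\sqrt{2}\,\delta\,|F(v+s,p)|^2$ across the two faces $p$ flanking $z_s$ using \eqref{eq: definition of iis}, and to observe that summing over $s$ telescopes around the four faces. The strict-maximum conclusion then follows: if $H(v)\geq H(v+s)$ for all four $s$, the inequality forces equality throughout.

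For part~(ii), the dual argument is applied at an interior face $p \in \dV^*$, using $H(p) - H(v) = -\sqrt{2}\,\delta\,|F(v,p)|^2 \leq 0$ and the face-to-face version of \eqref{eq: iis increment}; away from the boundary this yields subharmonicity of $H|_{\dV^*}$ with the uniform weights $\wgt \equiv 1$. The anisotropic weight $\wgt(p') = 2(\sqrt{2}-1)$ for a boundary face $p' \in \bdry\dV^*$ is exactly where Riemann boundary values enter, and this is the main obstacle: the boundary edge $z$ separating $p$ from $p'$ satisfies $F(z) \in \ii\,\tccw(z)^{-1/2}\,\bR$, so the corner value of $F$ on the interior side of $z$ is pinned to a one-dimensional line, and propagating this constraint through the s-holomorphicity relations at the two endpoints of $z$ reduces the contribution of that side to the subharmonicity inequality precisely by the factor $2(\sqrt{2}-1)$ appearing in the statement. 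Care is needed here because on the slit these boundary faces are virtual faces accessed \emph{across} the slit from an interior face (Figure~\ref{fig: corner values and boundary}), so the bookkeeping must be carried out on the slit-strip graph $\dslitstrip$ with its doubled edges; once the weighted average inequality is established, the no-strict-minimum statement follows as before.
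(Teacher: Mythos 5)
You should first note that the paper itself contains no proof of this lemma: it is imported verbatim from Chelkak--Smirnov (\cite[Proposition~3.6(iii) and Lemma~3.14]{CS-universality_in_Ising}), so what you are really doing is reconstructing their local computation, and your overall architecture (compute the discrete Laplacian of $H=\iis{F}$ at a vertex or face in terms of corner values, treat faces adjacent to $\bdry\dV^*$ separately using the Riemann boundary values) is indeed the right one. However, as written your sketch asserts the inequalities in the wrong direction. Part~(i) states $H(v)\le\frac{1}{4}\sum_s H(v+s)$, i.e.\ $\sum_s\big(H(v+s)-H(v)\big)\ge 0$, which is \emph{sub}harmonicity of $H|_{\dV}$; you claim the sum reduces to a \emph{nonpositive} expression and call it superharmonicity, and symmetrically you call (ii) subharmonicity when it is superharmonicity of $H|_{\dV^*}$. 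This is not just a naming slip: with the inequality in the direction you claim, the ``no strict local maximum'' conclusion you draw at the end of that paragraph does not follow at all (a superharmonic-type inequality is perfectly compatible with a strict interior maximum), and your claimed sign contradicts the very statement being proved.

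The more substantive gap is that the sign-definiteness is never actually established. Your ``shorter route'' fails: using \eqref{eq: definition of iis} to write $H(v+s)-H(v)=\sqrt{2}\,\delta\big(|F(c_{\mathrm{out}})|^2-|F(c_{\mathrm{in}})|^2\big)$ across a face flanking the edge $z_s$ and summing over the four neighbors produces a difference between squared corner values at the eight outer corners and the four inner corners, and nothing telescopes to a quantity of definite sign. Likewise, the asserted cancellation of cross terms ``because of the angle relations'' is precisely the computation that must be carried out: in Chelkak--Smirnov's argument one writes each edge value through its two corner projections (onto lines meeting at angle $\pi/4$) and finds that the Laplacian of $H$ at a vertex equals an explicit \emph{nonnegative} quadratic form (essentially a perfect square in an alternating combination of the four corner values), with the opposite sign at faces; in particular it is not expressible through the four increments $H(v)-H(p)$ alone, as your reduction suggests. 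Finally, for (ii) at a face neighboring a boundary face you merely assert that the Riemann boundary condition ``reduces the contribution precisely by the factor $2(\sqrt{2}-1)$''; this boundary-weighted superharmonicity is exactly the content of the cited Lemma~3.14 and needs its own computation (it is also where the constant extension of $H$ to $\bdry\dV^*$ enters). So the proposal has the correct skeleton, but as it stands it establishes neither the interior inequalities nor their boundary-weighted version.
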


With this version of the maximum principle
we can give the proof of
Lemma~\ref{lem:discrete-uniqueness}:
If a function~$f \in \dfunctionsp$ admits regular extensions
to the top, to the left leg, and to the right leg, then~$f \equiv 0$.
\begin{proof}[Proof of Lemma~\ref{lem:discrete-uniqueness}]
Suppose $f \in \dfunctionsp$ admits regular extensions to all
three extremities of the slit-strip.

Consider the s-holomorphic extension~$F \colon \dE(\dslitstrip) \to \bC$
of~$f$ to the lattice slit-strip, with Riemann boundary values.
The regular extensions assumption 
implies that $F$ decays exponentially in all three extremities:
the norm of its restrictions to horizontal cross-sections
decreases by at least a factor
$\eigval{-1/2}<1$ (resp. $\max\{\eigvalW{\widthL}{-1/2},
\eigvalW{\widthR}{-1/2} \}<1$) on each vertical step in the 
upwards direction (resp. downwards direction).
It follows that on horizontal lines in the top part,
the differences of all values of~$H = \iis{F}$ to the boundary values 
tend to zero:
\begin{align*}
\begin{cases} 
\max_{x \in \crosssecdual} \big|H(x + \ii y) - H(\lft + \ii y) \big| \to 0 \\
\max_{x \in \crosssecdual} \big|H(x + \ii y) - H(\rgt + \ii y) \big| \to 0
\end{cases}
\qquad \text{ as $y \to +\infty$.}
\end{align*}
Recalling that the boundary values~$H(\lft + \ii y), H(\rgt + \ii y)$
are constant (independent of~$y$), it follows first of all that
the values on the two boundaries
are equal, ${H(\lft + \ii y) = H(\rgt + \ii y) =: M}$, and furthermore 
that~$H(x+\ii y)$ approaches these boundary values~$M$ as~$y \to +\infty$.
Similarly, the values in the horizontal cross-sections of the left and 
right legs are tending to the same constant~$M$, and in particular
the boundary values on the slit part are also equal to~$M$.

It then follows from the maximum principle on vertices,
Lemma~\ref{lem: maximum principle}(i), 
that~$H |_{\dV} \leq M$. Similarly by the minimum
principle on faces,
Lemma~\ref{lem: maximum principle}(ii), we get~$H|_{\dV^*} \geq M$.
But as the values on vertices are at least the values on adjacent faces,
we get~$H \equiv M$. We thus
conclude that~$F \equiv 0$ and in particular~$f \equiv 0$.
\end{proof}

\subsection{Convergence of the distinguished functions in the slit-strip}
\label{ssec: precompactness and convergence}

Besides the maximum principle, we need more quantitative tools
of the regularity theory of s-holomorphic functions to prove the
scaling limit result for the distinguished functions in 
the slit-strip.

\subsubsection*{Beurling-type estimates}
We will need the following weak
Beurling-type estimates
(meaning that the exponent~$\beta$ does not have to be optimal)
for discrete harmonic measures
on vertices and faces
from~\cite{CS-discrete_complex_analysis_on_isoradial_graphs}.
\begin{prop}[{%
\cite[Proposition~2.11]{CS-discrete_complex_analysis_on_isoradial_graphs}}]
\label{prop: Beurling} 
There exists absolute constants ${\beta, \const > 0}$
such that the following holds.
Let $\bdry \dV$ denote the set of boundary vertices, 
and $\bdry \dV_0 \subset \bdry \dV$ a subset. Then the discrete 
harmonic function~$\omega_\circ \colon \dV \to [0,1]$ on vertices,
with boundary values~$0$ on~$\bdry \dV_0$
and~$1$ on~$\bdry \dV \setminus \bdry \dV_0$, satisfies
\begin{align*}
\omega_\circ (v) 
\leq \const \cdot 
\Big( \frac{\dist(v, \; \bdry \dV)}%
        {\dist(v, \; \bdry \dV \setminus \bdry \dV_0)} \Big)^\beta .
\end{align*}
Similarly, let $\bdry \dV^*$ denote the set of boundary faces, 
and $\bdry \dV^*_0 \subset \bdry \dV^*$ a subset. Then the discrete 
harmonic function (w.r.t. modified boundary face
weights as in Lemma~\ref{lem: maximum principle})
$\omega_\bullet \colon \dV^* \to [0,1]$ on faces
with boundary values~$0$ on~$\bdry \dV^*_0$
and~$1$ on~$\bdry \dV^* \setminus \bdry \dV^*_0$, satisfies
\begin{align*}
\omega_\bullet (p) 
\leq \const \cdot 
\Big( \frac{\dist(p, \; \bdry \dV^*)}%
        {\dist(p, \; \bdry \dV^* \setminus \bdry \dV^*_0)} \Big)^\beta .
\end{align*}
\end{prop}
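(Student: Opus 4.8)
The plan is to establish both inequalities by the classical dyadic multiscale argument that underlies every (weak) Beurling-type estimate; I will describe the vertex case, the face case being identical after replacing simple random walk by the reversible walk whose conductances are the weights $\wgt$ of Lemma~\ref{lem: maximum principle}. Throughout, interpret $\omega_\circ(v)$ probabilistically: it equals the probability that simple random walk on $\dV$ started at $v$ and absorbed on $\bdry\dV$ is absorbed on $\bdry\dV\setminus\bdry\dV_0$ rather than on $\bdry\dV_0$. Write $r := \dist(v,\bdry\dV)$ and $R := \dist(v,\bdry\dV\setminus\bdry\dV_0) \ge r$. If $R < 8\,r$ the asserted bound holds trivially with $\const \ge 8^{\beta}$, so I may assume $R \ge 8\,r$.

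First I would localise near the closest boundary point. Fix a boundary vertex $x_0 \in \bdry\dV$ with $|v - x_0| = r$; since $r < R$ we have $x_0 \notin \bdry\dV\setminus\bdry\dV_0$, i.e. $x_0 \in \bdry\dV_0$, and every boundary vertex in the ball $B(x_0, R/2)$ lies within distance $R$ of $v$ and hence also lies in $\bdry\dV_0$. Therefore, on the event defining $\omega_\circ(v)$, the walk must leave $B(x_0, R/2)$ before being absorbed (otherwise it is absorbed on $\bdry\dV_0$, contributing $0$). Introduce the dyadic annuli $A_k := B(x_0, 2^{k+1}r) \setminus \overline{B(x_0, 2^{k}r)}$ for $0 \le k < N$ with $N := \lfloor \log_2(R/r)\rfloor - 2 \ge 1$; then $\overline{A_k} \subset B(x_0, R/2)$ for every $k$, so escaping $B(x_0, R/2)$ forces the walk to cross all of $A_0, \dots, A_{N-1}$ without hitting $\bdry\dV$.

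The key input, and the one genuinely nontrivial step, is a uniform one-annulus hitting estimate: there is an absolute constant $p_0 \in (0,1)$, independent of the mesh and of $k$, such that random walk started at any vertex of $\dV$ on (or one lattice step outside) the sphere $\{|z - x_0| = 2^{k}r\}$ hits $\bdry\dV$ before exiting $B(x_0, 2^{k+1}r)$ with probability at least $p_0$. In the general isoradial setting this is the technical heart of the Beurling estimate; here it is considerably easier because the boundary of the (slit-)strip is locally a straight lattice line — within $B(x_0, 2^{k+1}r)$ the complement of $\dV$ in the square lattice contains a connected obstacle (a segment of the left or right boundary line, or a piece of the slit, with the slit tip handled by a short separate comparison) joining a neighbour of $x_0$ to $\bdry B(x_0, 2^{k+1}r)$, so $p_0$ can be obtained from an explicit reflection/half-plane harmonic-measure lower bound, uniformly in the mesh since rescaling the lattice only rescales distances and does not change the law of the walk. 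For faces one runs the same comparison for the $\wgt$-reversible walk on $\dV^*$, whose one-step transition probabilities are bounded below and above by absolute constants (the weights being $1$ or $2(\sqrt2 - 1)$), yielding an absolute $p_0' > 0$.

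Finally I would iterate. Let $\tau_\rho$ be the first exit time of the walk from $B(x_0,\rho)$; on the event defining $\omega_\circ(v)$ the walk is not yet absorbed at time $\tau_{2^N r}$ (since $B(x_0, 2^N r) \subset B(x_0,R/2)$ meets $\bdry\dV$ only in $\bdry\dV_0$). Applying the one-annulus estimate at each of the times $\tau_{2^{0}r}, \tau_{2^{1}r}, \ldots, \tau_{2^{N-1}r}$ together with the strong Markov property shows that the probability of surviving up to $\tau_{2^N r}$ is at most $(1-p_0)^N$. Since $N \ge \log_2(R/r) - 3$, this gives
\[
\omega_\circ(v) \;\le\; (1-p_0)^{N} \;\le\; (1-p_0)^{-3}\,(r/R)^{\beta}, \qquad \beta := \log_2\!\Big(\tfrac{1}{1-p_0}\Big) > 0 ,
\]
which is the claim with $\const := (1-p_0)^{-3}$ (enlarged if necessary to absorb the trivial range $R < 8r$). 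The face statement follows verbatim with $p_0'$ in place of $p_0$; taking the smaller of the two exponents and the larger of the two constants produces a single pair $(\beta,\const)$ valid for both estimates. The main obstacle, as noted, is the uniform one-annulus estimate of the third paragraph — everything else is standard nesting-of-annuli bookkeeping.
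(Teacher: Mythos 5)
The paper does not actually prove this proposition: it is imported verbatim as Proposition~2.11 of Chelkak--Smirnov (\emph{Discrete complex analysis on isoradial graphs}), so there is no in-paper argument to compare yours against line by line. Your dyadic multiscale argument (probabilistic interpretation of $\omega_\circ$, localization at the nearest boundary point, a uniform one-annulus hitting estimate, strong Markov iteration giving $(1-p_0)^N$ and hence the exponent $\beta=\log_2\frac{1}{1-p_0}$) is precisely the standard proof of weak Beurling-type estimates, and is essentially the argument behind the cited result; what the citation buys the paper is generality (arbitrary isoradial subdomains, both the vertex and the modified-weight face walk handled once and for all), while your route buys self-containedness by exploiting that in the strip and slit-strip the boundary near any boundary point contains a straight lattice half-line, which makes the one-annulus estimate elementary. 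The bookkeeping (the trivial range $R<8r$, the inclusion of all boundary points of $B(x_0,R/2)$ in $\bdry\dV_0$, the choice $N=\lfloor\log_2(R/r)\rfloor-2$) is correct.

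Two small points of precision you should fix. First, for the slit the obstacle is \emph{not} ``the complement of $\dV$ in the square lattice'': the lattice slit-strip has the same vertex set as the lattice strip, and the slit is realized by doubled edges, i.e.\ by a half-line of \emph{absorbing boundary vertices} (and, for the face walk, by virtual boundary faces reached across the slit). So the one-annulus estimate must be phrased as: the walk crossing the annulus hits the set of absorbing boundary vertices (respectively, enters a boundary face, which it does with probability bounded below each time it visits a face adjacent to the boundary line, since the weights $1$ and $2(\sqrt2-1)$ are uniformly elliptic) with probability at least $p_0$. With that rewording your argument goes through unchanged, including at the slit tip. Second, since the strip is unbounded you implicitly use that the walk is absorbed on $\bdry\dV$ almost surely (the horizontal coordinate performs a lazy walk on a finite path) so that the harmonic function with the stated boundary values is indeed the absorption probability; this is worth one sentence but is not a gap.
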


\subsubsection*{Precompactness estimates}
The following result 
from~\cite{CS-universality_in_Ising}
yields uniform boundedness 
and equicontinuity (i.e., precompactness in
the Arzel\`a-Ascoli sense)
for both~$F$ and~$H$, given control on~$|H|$.
\begin{thm}[{\cite[Theorem~3.12]{CS-universality_in_Ising}}]
\label{thm: precompactness}
There exists absolute constants (independent of lattice mesh~$\delta$
and lattice domains) such that the following estimates hold.
Let $F \colon \dE \to \bC$ be s-holomorphic with Riemann 
boundary values, and let~$H = \iis{F}$,
$H \colon \dV \cup \dV^* \to \bR$ be defined
as above. Suppose that the ball~$B_r(x_0)$, with
$r \geq \const \cdot \delta$,
is contained in the lattice domain and does not intersect its
boundary. Then for $z,z' \in \dE$ adjacent edges
contained in the smaller ball~$B_{r/2}(x_0)$, we have
\begin{align}\label{eq:precompactness}
\left| F(z)\right| 
\leq \; & \const \cdot 
        \sqrt{\frac{\max_{B_r(x_0)} \left| H \right| }{r}}, \\ 
\nonumber
\frac{\left| F(z')-F(z) \right|}{\delta}
\leq \; & \const \cdot 
        \sqrt{\frac{\max_{B_r(x_0)} \left| H\right| }{r^3}} .
\end{align}
\end{thm}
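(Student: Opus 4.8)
Since the statement is Theorem~3.12 of~\cite{CS-universality_in_Ising}, the plan is to recall the structure of its proof, adapted to the lattice strip and slit-strip. The first move is to reduce both inequalities in~\eqref{eq:precompactness} to increment bounds for $H=\iis{F}$. For the pointwise bound, recall from~\eqref{eq: definition of iis} that for a corner $c=(v,p)$ one has $|F(c)|^2=\tfrac{1}{\sqrt 2\,\delta}\bigl(H(v)-H(p)\bigr)$; since two non‑parallel line‑projections pin down a complex number, $|F(z)|$ is comparable (with a universal constant) to the maximum of $|F(c)|$ over corners $c$ adjacent to the edge $z$. For the derivative bound, \eqref{eq: iis increment} identifies $H(u')-H(u)=\im\bigl(F(z)^2(u'-u)\bigr)$ for the edge $z$ between adjacent vertices or faces $u,u'$, so the values of $F^2$ on edges are $\delta^{-1}$ times first differences of $H$, hence differences of $F^2$ between neighbouring edges are $\delta^{-1}$ times \emph{second} differences of $H$; dividing by $|F(z')+F(z)|$ and using the pointwise bound on $B_{r/2}(x_0)$ recovers the claimed control of $|F(z')-F(z)|$. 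Thus it suffices to show that at vertex--face pairs in $B_{r/2}(x_0)$ the first increments of $H$ are $\OO(\delta M/r)$ and the relevant second increments are $\OO(\delta^2 M/r^2)$, where $M:=\max_{B_r(x_0)}|H|$.

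The engine is Lemma~\ref{lem: maximum principle}: $H|_{\dV}$ is superharmonic and $H|_{\dV^*}$ is subharmonic on $B_r(x_0)$, and by construction $H(p)\le H(v)$ whenever $v\in\dV$ and $p\in\dV^*$ are adjacent. I would compare $H$ with genuine discrete harmonic functions on the smaller ball $B_{3r/4}(x_0)$: let $\mathfrak h_\circ$ be discrete harmonic on vertices with $\mathfrak h_\circ=H|_{\dV}$ on the boundary, and $\mathfrak h_\bullet$ discrete harmonic for the modified face weights of Lemma~\ref{lem: maximum principle} with $\mathfrak h_\bullet=H|_{\dV^*}$ on the boundary faces. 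Sub/superharmonicity gives $\mathfrak h_\circ\le H|_{\dV}$ and $H|_{\dV^*}\le\mathfrak h_\bullet$, so for adjacent $v,p$ in $B_{r/2}(x_0)$ one has $0\le H(v)-H(p)\le\bigl(H(v)-\mathfrak h_\circ(v)\bigr)+\bigl(\mathfrak h_\circ(v)-\mathfrak h_\bullet(p)\bigr)+\bigl(\mathfrak h_\bullet(p)-H(p)\bigr)$. The middle difference of (essentially) harmonic functions is controlled by the classical interior gradient estimate for discrete harmonic functions (as in~\cite{CS-discrete_complex_analysis_on_isoradial_graphs}), using $|\mathfrak h_\circ|,|\mathfrak h_\bullet|\le\const\,M$; the two outer nonnegative terms are estimated by a bootstrap on the Laplacian mass $\sum_{\text{corners }c}\bigl(H(v_c)-H(p_c)\bigr)$ of the super/sub‑harmonic deficits $H|_{\dV}-\mathfrak h_\circ$ and $\mathfrak h_\bullet-H|_{\dV^*}$, which is controlled in terms of $M$ and which feeds the gradient estimate back into itself. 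Running the same scheme with $F^2$ in place of $F$ — i.e.\ for the second increments of $H$ governed by~\eqref{eq: iis increment} — yields the derivative estimate. If the ball $B_r(x_0)$ is allowed to meet the boundary one additionally invokes the Beurling‑type bounds of Proposition~\ref{prop: Beurling} to control $H$ near $\bdry\dV$ and $\bdry\dV^*$; for the interior statement above this refinement is not needed.

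The step I expect to be the main obstacle is exactly making the pinching in the previous paragraph quantitative: the usable analytic information lives separately on the vertex sublattice (super‑harmonic) and the face sublattice (sub‑harmonic), coupled only through the crude pointwise inequality $H(p)\le H(v)$, and one must extract from this that $H$ is uniformly close, on $B_{r/2}(x_0)$, to a single harmonic profile, with enough powers of $\delta$ to match~\eqref{eq:precompactness}. This is precisely the content of the bootstrap in~\cite{CS-universality_in_Ising}; granting it, the remaining ingredients are standard interior estimates for genuinely discrete harmonic functions, and the reductions in the first paragraph assemble them into the two bounds of~\eqref{eq:precompactness}.
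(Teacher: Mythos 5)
First, a point of comparison: the paper does not prove this statement at all --- it is imported verbatim as Theorem~3.12 of~\cite{CS-universality_in_Ising}, so there is no in-paper argument to measure you against; what you have written is an attempted reconstruction of the Chelkak--Smirnov proof. Your reduction of the \emph{first} inequality is fine: two non-parallel corner projections do recover $|F(z)|$ up to a universal constant, and $|F(c)|^2=\tfrac{1}{\sqrt2\,\delta}\bigl(H(v)-H(p)\bigr)$ turns the pointwise bound into an increment bound for $H$. But the hard part --- showing that the vertex-superharmonic and face-subharmonic pieces of $H$ are quantitatively pinched together, so that $H(v)-H(p)=\OO(\delta M/r)$ --- is exactly the content of the theorem, and in your write-up it is only gestured at (``bootstrap on the Laplacian mass \dots which feeds the gradient estimate back into itself'') and then explicitly deferred back to the citation. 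As a standalone proof this is circular at the decisive step; as a reconstruction it stops where the real work begins.

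The genuine flaw is in your reduction of the \emph{second} inequality. You pass from second differences of $H$ to differences of $F^2$ and then write, in effect, $|F(z')-F(z)|=|F(z')^2-F(z)^2|/|F(z')+F(z)|$. This requires a \emph{lower} bound on $|F(z')+F(z)|$, and none is available: the pointwise estimate only bounds $|F|$ from above, and s-holomorphicity does not exclude $F(z')\approx-F(z)$ at adjacent edges (the shared corner projection may simply be small). In the regime where $F(z')+F(z)$ is small compared with $\sqrt{M/r}$ your argument yields nothing, so the claimed bound $\const\cdot\delta\sqrt{M/r^{3}}$ does not follow. The standard (and Chelkak--Smirnov's) route avoids this entirely: s-holomorphicity implies the discrete Cauchy--Riemann equations, hence $F$ itself is a discrete holomorphic function whose components are discrete harmonic on sublattices; once the first inequality gives $|F|\leq\const\sqrt{M/r}$ on, say, $B_{3r/4}(x_0)$, the interior gradient estimate for bounded discrete harmonic (or discrete holomorphic) functions applied to $F$ directly yields $|F(z')-F(z)|\leq\const\cdot\delta\,r^{-1}\max_{B_{3r/4}}|F|\leq\const\cdot\delta\sqrt{M/r^{3}}$, with no division by $F(z')+F(z)$ anywhere.
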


\subsubsection*{Limit result for the slit-strip functions}
With the above tools, we can prove the scaling limit result
for the pole functions.

\begin{thm}\label{thm: convergence of slit strip functions}
Choose sequences $(\lft_{n})_{n \in \bN}$,
$(\rgt_{n})_{n \in \bN}$ of integers
$\lft_{n}, \rgt_{n} \in \bZ$ such that
\begin{itemize}
\item $\lft_{n} < 0 < \rgt_{n}$ for all $n$;
\item $\width_{n} := \rgt_{n} - \lft_{n} \to +\infty$ as $n \to \infty$;
\item $\lft_{n} / \width_{n} \to -\half$
and $\rgt_{n} / \width_{n} \to +\half$ as $n \to \infty$.
\end{itemize}
For~$k \in \poshalfint$, 
let~$\PoleT{k}^{(\width_{n})}, \PoleL{k}^{(\width_{n})}, 
\PoleR{k}^{(\width_{n})}$ 
denote the functions of 
Section~\ref{sub: functions in lattice strip}
in the lattice strips with~$\lft = \lft_n$ and $\rgt = \rgt_n$,
and let~$\cPoleT{k}, \cPoleL{k}, \cPoleR{k} \colon \cslitstrip \to \bC$
denote the pure pole functions of
Proposition~\ref{prop: pure pole functions}.
Then, as~$n \to \infty$, we have 
\begin{align*}
\sqrt{\width_n} \; \PoleT{k}^{(\width_{n})} \big( z \width_n \big) 
\to \; & \cPoleT{k}(z) & &
    \text{ uniformly on compact subsets of~$\cslitstrip \ni z$} \\ 
\sqrt{\width_n} \; \PoleL{k}^{(\width_{n})} \big( z \width_n \big) 
\to \; & \cPoleL{k}(z) & &
    \text{ uniformly on compact subsets of~$\cslitstrip \ni z$} \\ 
\sqrt{\width_n} \; \PoleR{k}^{(\width_{n})} \big( z \width_n \big) 
\to \; & \cPoleR{k}(z) & &
    \text{ uniformly on compact subsets of~$\cslitstrip \ni z$} .
\end{align*}
\end{thm}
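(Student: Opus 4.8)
The plan is to prove convergence via a precompactness-and-identification argument, using the regularity theory of s-holomorphic functions recalled in this section. I will describe the argument for $\PoleT{k}^{(\width_n)}$; the cases of $\PoleL{k}^{(\width_n)}$ and $\PoleR{k}^{(\width_n)}$ are entirely analogous.

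\textbf{Step 1: Uniform bounds on the functions and on $H$.} Write $G_n := \sqrt{\width_n}\,\PoleT{k}^{(\width_n)}$ for the rescaled discrete pole function, viewed (after rescaling arguments by $\width_n^{-1}$) as an s-holomorphic function with Riemann boundary values on the lattice slit-strip of mesh $\delta_n = \width_n^{-1}$. The first task is to establish that the $G_n$ are uniformly bounded on compact subsets of $\cslitstrip$. By Theorem~\ref{thm: precompactness}, it suffices to bound $H_n := \iis{G_n}$ on compacts. Using the ``imaginary part of the integral of the square'' identity: on the top cross-section $\crosssecdual$ the norm of $G_n$ is controlled because $\dprTpole(G_n/\sqrt{\width_n}) = \eigf{k}^{(\width_n)}$ has unit norm and the regular parts are orthogonal, so $\|G_n|_{\crosssecdual}\| = \OO(\sqrt{\width_n}) \cdot \width_n^{-1/2} \cdot (\text{something bounded})$; more precisely the rescaling is chosen (as in the discussion before Theorem~\ref{thm: convergence of strip functions}) so that $\|G_n|_{\crosssecdual}\|$ stays bounded. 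From the exponential decay in the three extremities (decay rates $\eigval{-1/2} < 1$ etc., as used in the proof of Lemma~\ref{lem:discrete-uniqueness}), one bounds $H_n$ on the boundary and, via the sub/superharmonicity of $H_n|_{\dV^*}$ and $H_n|_{\dV}$ in Lemma~\ref{lem: maximum principle} together with the Beurling estimates of Proposition~\ref{prop: Beurling}, obtains a uniform bound $\max_{B_r(x_0)}|H_n| \leq \const$ on any ball compactly contained in $\cslitstrip$, with the constant depending only on the compact set. Feeding this into~\eqref{eq:precompactness} gives uniform boundedness and equicontinuity of the $G_n$ on compact subsets of $\cslitstrip$.

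\textbf{Step 2: Extraction of subsequential limits and identification.} By Arzel\`a--Ascoli, every subsequence of $(G_n)$ has a further subsequence converging uniformly on compacts to some function $G_\infty \colon \cslitstrip \to \bC$. Standard arguments (discrete s-holomorphicity passes to the limit as ordinary holomorphicity, and Riemann boundary values are preserved) show $G_\infty$ is holomorphic with Riemann boundary values~\eqref{eq: RBV in strip},~\eqref{eq: RBV on the slit part} on the slit-strip. It then remains to identify $G_\infty$ with $\cPoleT{k}$. By Lemma~\ref{lem:continuous-uniqueness} and the characterization~\eqref{eq:cpole-asymptotic}, it suffices to show that $G_\infty$ admits regular extensions to the left and right legs and has singular part $\ccffun_k$ at the top. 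The asymptotic behavior~\eqref{eq:dpole-asymptotic} of the discrete pole functions, combined with the convergence $\sqrt{\width_n}\,\eigF{k}^{(\width_n)} \to \ccfFun_k$ of Theorem~\ref{thm: convergence of strip functions}, should pin down the singular part at the top; here one uses that on a fixed cross-section at height $y$ far enough up, $G_n$ agrees with $\sqrt{\width_n}\,\eigF{k}^{(\width_n)}$ up to an $\oo(1)$ error uniformly, and passes $n\to\infty$ then examines the decomposition. Regularity in the legs follows because, by construction, $\PoleT{k}^{(\width_n)}$ has vanishing singular parts in the left and right legs, so the exponential-decay estimates force the limiting restrictions to the leg cross-sections to lie in $\cfspLzero$, $\cfspRzero$.

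\textbf{Step 3: Uniqueness of the limit.} Since every subsequential limit is identified with the same function $\cPoleT{k}$, the full sequence $G_n$ converges to $\cPoleT{k}$ uniformly on compacts.

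\textbf{Main obstacle.} I expect the crux to be Step 1 and Step 2 together: obtaining the \emph{a priori} bound on $H_n$ on compacts uniformly in $n$ requires carefully tracking how the three exponential decay rates $\eigvalW{\width_n}{-1/2}$, $\eigvalW{\widthL_n}{-1/2}$, $\eigvalW{\widthR_n}{-1/2}$ behave (they tend to $1$ as $\width_n \to \infty$, so the decay becomes slow and one must control the geometry over distances growing like $\width_n$ — exactly the regime the Beurling estimates are designed for), and then correctly matching the top singular part of the limit. The rescaling normalizations and the precise statement that $\|G_n|_{\crosssecdual}\|$ is bounded need to be checked against the asymptotics in the lemmas preceding Theorem~\ref{thm: convergence of strip functions}. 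The boundary-value bookkeeping on the doubled slit edges, and the fact that $H_n$ takes a common value $M_n$ on all boundary components (as in the proof of Lemma~\ref{lem:discrete-uniqueness}), is what makes the maximum principle applicable, and I would lean on that structure throughout.
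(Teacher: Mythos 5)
Your overall architecture (pass to $H=\iis{F}$, use Lemma~\ref{lem: maximum principle}, Proposition~\ref{prop: Beurling} and Theorem~\ref{thm: precompactness} for equicontinuity, extract subsequential limits by Arzel\`a--Ascoli, and identify the limit through the uniqueness Lemma~\ref{lem:continuous-uniqueness}) is the same as the paper's. But your Step~1 contains a genuine gap, and it is exactly at the point you yourself flag as the crux. You claim that $\|G_n|_{\crosssecdual}\|$ is bounded because $\dprTpole$ of the (unrescaled) pole function is the unit-norm eigenfunction $\eigf{k}^{(\width_n)}$ and ``the regular parts are orthogonal.'' Orthogonality gives $\|f\|^2=\|\dprTpole f\|^2+\|\dprTzero f\|^2$; it normalizes only the singular part and says nothing about the size of the regular part, which is precisely the quantity that must be controlled. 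Nothing in the construction of the discrete pole functions (Lemma~\ref{lem:discrete-existence} is pure linear algebra) gives an a priori bound, uniform in $n$, for the pole function on the cross-section or for $H_n$ on compacts, and ``the rescaling is chosen so that the norm stays bounded'' is an assertion of the conclusion, not an argument. The paper spends the second half of its proof on exactly this point: it sets $M^{(\width_n)}_K:=\max_{L_K^{(n)}}|H^{(n)}|$ over a cross-cut in the singular leg, and proves boundedness of $(M^{(n)}_K)_n$ by contradiction --- normalizing $H^{(n)}$ by $M^{(n)}_K$, decomposing $F^{(n)}$ in the singular leg as the explicit rescaled eigenfunction plus a remainder $F_0^{(n)}$ that decays there, observing that the eigenfunction term is negligible after the normalization, and then combining the Beurling estimate with the maximum principle to force $|H^{(n)}|\le\frac{3}{5}M^{(n)}_K$ on $L_K^{(n)}$, contradicting the definition of $M^{(n)}_K$. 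Without this (or an equivalent) argument your compactness step has no input.

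A secondary, related weakness is in your identification of the singular part at the top: you invoke the asymptotics~\eqref{eq:dpole-asymptotic} ``up to an $\oo(1)$ error uniformly'' at a fixed rescaled height, but \eqref{eq:dpole-asymptotic} is an asymptotic for fixed $\width_n$ as $y\to\pm\infty$, and the decay rates $\eigval{\pm 1/2}$ tend to $1$ as $\width_n\to\infty$, so no uniformity in $n$ is available at a fixed rescaled height. The paper instead writes $F^{(n)}=F_0^{(n)}+\sqrt{\width_n}\,\eigF{k}^{(\width_n)}(\cdot\,\width_n)$ (in the relevant extremity), controls $F_0^{(n)}$ through its own square integral $H_0^{(n)}=\iis{F_0^{(n)}}$ with the same compactness machinery, and concludes that any subsequential limit of $F_0^{(n)}$ decays in that extremity; combined with Theorem~\ref{thm: convergence of strip functions} for the explicit term this pins down the singular part of the limit. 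You should replace your uniform-asymptotics claim with this decomposition argument, and supply the boundedness-of-$M_K^{(n)}$ argument, to make the proof complete.
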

\begin{proof}
Let us consider the convergence of the left leg pole functions~---
the other cases are similar. 
Including the rescalings in the definition, let us
define functions~$F^{(n)}$ 
on~$\frac{1}{\width_n} \dslitstripW{\width_n}$ 
by 
the formula~$F^{(n)}(z) = \sqrt{\width_n} \; 
\PoleL{k}^{(\width_{n})} \big( z \width_n \big)$.
Define also~$H^{(n)} = \iis{F^{(n)}}$, with the additive constant 
chosen so that this function vanishes at the tip of the
slit, $H^{(n)}(0) = 0$. Then $H^{(n)}$ vanishes on the entire slit
(it is constant on boundary components), and 
since~$F^{(n)}$ decays exponentially in the top and the right leg
extremities, the same argument as in the proof of 
Lemma~\ref{lem:discrete-uniqueness} shows that the boundary values
of~$H^{(n)}$ also on the left and right boundaries are zero,
and that~$H^{(n)}$ tends to zero in the top and right extremities.

For $K>0$, consider a horizontal cross-cut~$L_K^{(n)}$ of the
left leg of~$\frac{1}{\width_n} \dslitstripW{\width_n}$ 
at imaginary part~$-K$ (more precisely, the horizontal line
of the lattice~$\frac{1}{\width_n} \dslitstripW{\width_n}$
with largest imaginary part below~$-K$), and consider
the truncated 
slit-strip~$S_K^{(n)} \subset \frac{1}{\width_n} \dslitstripW{\width_n}$
defined as the component of the complement of this cross-section
which contains the top and right extremities.
Define $M^{(n)}_K := \max_{L_K^{(n)}} \big| H^{(n)} \big|$,
where~$L_K^{(n)}$ is interpreted to consist of
both the vertices on the horizontal line and the
faces just below that horizontal line.
By the maximum principle, Lemma~\ref{lem: maximum principle}, we have
$M^{(n)}_K = \max_{S_K^{(n)}} \big| H^{(n)} \big|$,
so in view of $S_K^{(n)} \subset S_{K'}^{(n)}$ for $K < K'$
we have that $M^{(n)}_K$ is increasing in~$K$.

We will later prove that for any~$K>0$,
the sequence~$(M^{(n)}_K)_{n \in \bN}$ is bounded, i.e., we have
$M_K := \sup_{n \in \bN} M^{(n)}_K < \infty$. 
We now first prove the convergence of~$F^{(n)}$ to~$\cPoleL{k}$
assuming this.
Denote by~$L_K$ the horizontal cross-cut of the left leg of the continuum 
slit-strip~$\cslitstrip$ at imaginary part~$-K$, and 
by~$S_K \subset \cslitstrip$ the
component of~$\cslitstrip \setminus L_K$ which contains 
the top and right extremities.
Then by the boundedness of~$(M^{(n)}_K)_{n \in \bN}$,
the functions~$H^{(n)}$ restricted to the 
part~$S_K$
are uniformly bounded, and
therefore as a consequence 
of Theorem~\ref{thm: precompactness},
both $H^{(n)}$ and~$F^{(n)}$ are 
equicontinuous and uniformly bounded on compact
subsets of~$S_K$.
By the Arzel\`a-Ascoli theorem, along a subsequence we have uniform
convergence of~$H^{(n)}$ and~$F^{(n)}$ on compact subsets of~$S_K$,
and since this holds for any~$K>0$, by diagonal extraction
there exists a subsequence along which $H^{(n)}$ and~$F^{(n)}$ converge
uniformly on compact subsets of the whole slit-strip~$\cslitstrip$.
We must show that in any such subsequential limit~$(H,F)$ we
in fact have~$F=\cPoleL{k}$.

Note that in such a subsequential limit we 
have~$H=\im \Big( \int F(z)^2 \, \ud z \Big)$, and
as a locally uniform limit of both subharmonic and
superharmonic functions ($H^{(n)}$ on vertices and faces),
$H$ is harmonic. It follows that~$F^2 = 2 \ii \, \partial_z H$
is holomorphic, and thus~$F$ is also holomorphic.
By Lemma~\ref{lem: maximum principle}, $H^{(n)}$ is bounded above
by $M_K \, \omega_\circ^{(n)}$, where $\omega_\circ^{(n)}$ is the 
discrete harmonic measure on the vertices of~$S_K^{(n)}$ of the 
cross-cut~$L_K^{(n)}$. Similarly $H^{(n)}$ is bounded below
by $-M_K \, \omega_\bullet^{(n)}$, where $\omega_\bullet^{(n)}$ is the 
discrete harmonic measure on the faces of~$S_K^{(n)}$ just below
the cross-cut~$L_K^{(n)}$ (with the modified boundary weights).
By Beurling estimates, Proposition~\ref{prop: Beurling},
these harmonic measures decay at the top and right 
extremities uniformly in~$n$, so the subsequential limit~$H$ 
of the~$H^{(n)}$ decays at the at the top and right 
extremities. On $S_{K-\eps}$, for any~$\eps>0$,
these harmonic measures also 
decay uniformly upon approaching the boundaries of the slit-strip,
again by virtue of the Beurling estimates.
We conclude that $H$ also tends to zero on the boundary.
Then also~$F$ decays at top and right extremities, 
by Theorem~\ref{thm: precompactness},
and $F$ has Riemann boundary values
by \cite[Remark~6.3]{CS-universality_in_Ising}. In order to conclude
that~$F=\cPoleL{k}$, it remains to show that
$F - \ccfFunL_{-k}$ decays in the left leg extremity.

By definition of the discrete pole function~$\PoleL{k}$,
in the left leg we can write
\begin{align*}
F^{(n)}(z) = F_0^{(n)}(z) + \sqrt{\width_n} \, \eigFL{-k}(z \width_n) ,
\end{align*}
where~$F_0^{(n)}$ decays in the left leg extremity.
From Theorem~\ref{thm: convergence of strip functions}, we already
know that the second term on
the right hand side above converges to~$\ccfFunL_{-k}$,
uniformly on compact subsets. 
To control~$F_0^{(n)}$, consider $H_0^{(n)} = \iis{F_0^{(n)}}$.
By similar arguments as above for~$H^{(n)}$, one shows that
one can extract subsequences 
from~$(H_0^{(n)},F_0^{(n)})$ which converge uniformly on compacts
in the left leg, and that for any subsequential limit~$(H_0,F_0)$
we have that $F_0$ decays in the left leg extremity.
But such an~$F_0(z)$ is, as the limit of
$F^{(n)}(z) - \sqrt{\width_n} \, \eigFL{-k}(z \width_n)$,
equal to $F(z) - \ccfFunL_{-k}(z)$. We have thus seen that~$F$
is holomorphic with Riemann boundary values in 
the slit-strip, 
$F$ decays at the top and right leg extremities, 
and $F- \ccfFunL_{-k}$ decays at the left leg extremity.
We conclude that~$F = \cPoleL{k}$.

To finish the proof, we must still show the boundedness
of~$(M^{(n)}_K)_{n \in \bN}$. Suppose that instead
$M^{(n)}_K \to \infty$ along some subsequence, for
some~$K>0$ and therefore by monotonicity for all large enough~$K$.
Now $(M^{(n)}_K)^{-1} \, \big| H^{(n)} \big|$
is bounded by~$1$ on~$L_K^{(n)}$,
and by selecting a large enough~$K$ we get by
Beurling estimates as before that
$(M^{(n)}_K)^{-1} \, \big| H^{(n)} \big| \leq \frac{1}{5}$ 
on~$L_0^{(n)}$. Again decompose
$F^{(n)}(z) 
= F_0^{(n)}(z) + \sqrt{\width_n} \, \eigFL{-k}(z \width_n)$
in the left leg, and denote by~$H_0^{(n)}$ the imaginary 
part of the integral of the square of~$F_0^{(n)}$. 
Noting that as~$n \to \infty$ along the subsequence, we have
$(M^{(n)}_K)^{-1/2} \, \sqrt{\width_n} \,\big| \eigFL{-k}(z \width_n) \big|
\to 0$ uniformly on~$S_{K+1} \ni z$, we see that
$(M^{(n)}_K)^{-1} \, \big| H_0^{(n)} \big| \leq \frac{2}{5}$ 
on~$L_0^{(n)}$ for large enough~$n$. By the decay in the left leg
and the maximum principle, $H_0^{(n)}$ is bounded by its values 
on~$L_0^{(n)}$, so
$(M^{(n)}_K)^{-1} \, \big| H_0^{(n)} \big| \leq \frac{2}{5}$
everywhere. But similarly by the smallness of
$(M^{(n)}_K)^{-1/2} \, \sqrt{\width_n} \,\big| \eigFL{-k}(z \width_n) \big|$ 
on~$S_{K+1} \ni z$, we see that the difference~$H^{(n)}-H_0^{(n)}$ 
is small, to that in particular
$(M^{(n)}_K)^{-1} \, \big| H^{(n)} \big| \leq \frac{3}{5}$
on~$L_K^{(n)}$, for large~$n$ in the subsequence.
This is a contradiction with the definition
of~$M^{(n)}_K$, so indeed~$(M^{(n)}_K)_{n \in \bN}$ 
had to be bounded and the proof is complete.
\end{proof}

\subsubsection*{Convergence of inner products}

For applications to the convergence of the Ising model fusion
coefficients, it is not enough for us to have the uniform
convergence on compacts of the distinguished discrete functions
to the distinguished continuum ones. 
We need the convergence of the inner products of their 
restrictions to the cross-section as well.

\begin{cor}\label{cor:ip-convergence}
Choose sequences $(\lft_{n})_{n \in \bN}$,
$(\rgt_{n})_{n \in \bN}$ of integers
$\lft_{n}, \rgt_{n} \in \bZ$ such that
\begin{itemize}
\item $\lft_{n} < 0 < \rgt_{n}$ for all $n$;
\item $\width_{n} := \rgt_{n} - \lft_{n} \to +\infty$ as $n \to \infty$;
\item $\lft_{n} / \width_{n} \to -\half$
and $\rgt_{n} / \width_{n} \to +\half$ as $n \to \infty$.
\end{itemize}
For~$k \in \poshalfint$, 
let
\begin{align*}
\poleT{k}^{(\width_{n})} , \; 
\poleL{k}^{(\width_{n})} , \;
\poleR{k}^{(\width_{n})} , \;
\eigf{\pm k}^{(\width_{n})} , \; 
\eigfL{\pm k}^{(\width_{n})} , \;
\eigfR{\pm k}^{(\width_{n})} 
    \; \in \; \dfunctionspW{\width_{n}}
\end{align*}
denote the functions defined before
in the lattice strips with~$\lft = \lft_n$ and $\rgt = \rgt_n$.
Correspondingly, let
\begin{align*}
\cpoleT{k} , \; 
\cpoleL{k} , \;
\cpoleR{k} , \;
\ccffun_{\pm k} , \; 
\ccffunL_{\pm k} , \;
\ccffunR_{\pm k}
    \; \in \; \cfunctionsp
\end{align*}
be the continuum functions defined before.

Then as $n \to \infty$, we have the convergence of all inner products
in~$\dfunctionspW{\width_{n}}$ to the corresponding ones in~$\cfunctionsp$:
\begin{align*}
\innprod{\eigfX{k}^{(\width_{n})}}{\eigfXbis{k'}^{(\width_{n})}} 
\to \; & \innprod{\ccffunX_{k}}{\ccffunXbis_{k'}}
    & & \text{ for 
        $\wildsym , \wildsymbis \in \set{\topsym, \lftsym, \rgtsym}$
        and $k , k' \in \pm\poshalfint$,} \\
\innprod{\poleX{k}^{(\width_{n})}}{\eigfXbis{k'}^{(\width_{n})}} 
\to \; & \innprod{\cpoleX{k}}{\ccffunXbis_{k'}}
    & & \text{ for 
        $\wildsym , \wildsymbis \in \set{\topsym, \lftsym, \rgtsym}$
        and $k \in \poshalfint$, $ k' \in \pm\poshalfint$,} 
\\
\innprod{\poleX{k}^{(\width_{n})}}{\poleXbis{k'}^{(\width_{n})}} 
\to \; & \innprod{\cpoleX{k}}{\cpoleXbis{k'}}
    & & \text{ for 
        $\wildsym , \wildsymbis \in \set{\topsym, \lftsym, \rgtsym}$
        and $k , k' \in \poshalfint$}
\end{align*}
(where the hitherto 
undefined notations are 
interpreted so that~$\eigfT{k} = \eigf{k}$ 
and~$\ccffunT_{k}=\ccffun_{k}$).
\end{cor}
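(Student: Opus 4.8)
The plan is to reduce each inner product on the left-hand side to a Riemann sum over the cross-section and to control the sum by splitting off small neighborhoods of the three distinguished boundary points. Writing $\hat{f}^{(n)}(x) := \sqrt{\width_n}\,f^{(n)}(x\,\width_n)$ for the rescaled restriction of a cross-section function $f^{(n)} \in \dfunctionspW{\width_n}$, one has
\[
\innprod{f^{(n)}}{g^{(n)}}
= \frac{1}{\width_n} \sum_{x' \in \crosssecdual}
  \re\Big( \hat{f}^{(n)}(x'/\width_n)\,\overline{\hat{g}^{(n)}(x'/\width_n)} \Big),
\]
and the points $x'/\width_n$, $x' \in \crosssecdual = \dintervaldual{\lft_n}{\rgt_n}$, are equally spaced with gap $\width_n^{-1}$ and fill out an interval converging to $\ccrosssec = [\aaa,\bbb]$. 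Hence, whenever $\hat{f}^{(n)} \to \hat{f}$ and $\hat{g}^{(n)} \to \hat{g}$ uniformly on all of $\ccrosssec$, the right-hand side is a Riemann sum converging to $\int_{\aaa}^{\bbb} \re(\hat{f}\,\overline{\hat{g}})$, i.e.\ to the continuum inner product. Several of the claimed limits hold exactly for every $n$: by Proposition~\ref{prop: discrete vertical translation eigenfunctions} the families $(\eigf{k}^{(\width_n)})$, $(\eigfL{k}^{(\width_n)})$, $(\eigfR{k}^{(\width_n)})$ are orthonormal --- as are their continuum counterparts by Proposition~\ref{prop: half integer Fourier modes form a basis} --- and $\eigfL{k}^{(\width_n)}$, $\eigfR{k'}^{(\width_n)}$ have disjoint supports $\crosssecLdual$, $\crosssecRdual$, just like $\ccffunL_{k}$, $\ccffunR_{k'}$; so only the genuinely mixed inner products need work.

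For every inner product in which at least one factor is an eigenfunction $\eigf{k}$, $\eigfL{k}$, or $\eigfR{k}$, the explicit formulas of Section~\ref{sub: functions in lattice strip} together with Theorem~\ref{thm: convergence of strip functions} (and its evident analogue for the leg eigenfunctions, whose arguments are here rescaled by $\width_n$ rather than by the leg width, the leg-width normalization producing the factor $\sqrt{2}$ in $C^\lftsym_k, C^\rgtsym_k$) yield uniform convergence of the rescaled restriction on the \emph{closed} interval $\ccrosssec$, endpoints $\aaa,\bbb$ and the slit point $0$ included. Thus that factor is handled by the Riemann-sum argument directly; when the other factor is also an eigenfunction the same applies, and when it is a pole function we only need its uniform convergence on compact subsets of the open interval plus the tail estimate below.

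For the inner products of two pole functions (and of a pole function against an eigenfunction supported on the complementary half), Theorem~\ref{thm: convergence of slit strip functions} gives, upon restriction to the cross-section, uniform convergence of $\sqrt{\width_n}\,\poleX{k}^{(\width_n)}(\cdot\,\width_n)$ to $\cpoleX{k}$ on $\ccrosssec$ away from arbitrarily small neighborhoods of $\{\aaa,0,\bbb\}$. This handles the bulk of the Riemann sum, while the complementary piece of the limiting integral is negligible as the neighborhoods shrink, since the continuum functions lie in $\cfunctionsp$ so their products are integrable --- the worst singularity, at the slit tip, being only of order $|x|^{-1/4}$. It remains to show that the tail contributions
\[
\frac{1}{\width_n}\sum_{\dist(x'/\width_n,\,\{\aaa,0,\bbb\}) < \eps}
  \big| \widehat{\poleX{k}^{(\width_n)}}(x'/\width_n) \big|^2 \longrightarrow 0
  \qquad (\eps\to 0,\ \text{uniformly in large } n),
\]
after which a general mixed tail is controlled by Cauchy--Schwarz. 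Here the discrete regularity theory enters: with $F^{(n)}$ the rescaled discrete pole function and $H^{(n)}=\iis{F^{(n)}}$ normalized to vanish on the slit, the proof of Theorem~\ref{thm: convergence of slit strip functions} already provides $\sup_n \max_{S_K} |H^{(n)}| = M_K < \infty$ on each truncation $S_K$ as in that proof, and Proposition~\ref{prop: Beurling} shows $|H^{(n)}|$ decays like a positive power $\beta$ of the distance to the boundary away from the truncating cross-cut. Near the straight left and right boundaries, Riemann boundary values and reflection make $F^{(n)}$ uniformly bounded down to the boundary, so those tails are $\OO(\eps)$. Near the slit tip, applying the precompactness estimate of Theorem~\ref{thm: precompactness} on balls of radius comparable to the distance $d$ to the tip and inserting the Beurling bound $\max_{B_d}|H^{(n)}|\lesssim M_K\,d^{\beta}$ gives $|F^{(n)}|\lesssim \sqrt{M_K}\,d^{(\beta-1)/2}$; since $d^{\beta-1}$ is summable against $\width_n^{-1}\sum$ with $\eps^{\beta}$-small total and, crucially, the same Beurling bound applied at the mesh scale $d\sim\width_n^{-1}$ keeps the $\OO(1)$ many lattice-scale points next to the tip contributing a total of $\OO(M_K\,\width_n^{-\beta})\to 0$, the slit-tip tail is uniformly small as well. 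Combining the bulk convergence, the vanishing tails, and the integrability of the continuum products then yields all three displayed limits.

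The main obstacle is exactly this last tail bound near the slit tip: showing that the (mild) blow-up of the discrete pole functions there is square-integrable with a modulus of continuity uniform in $n$ all the way down to the lattice mesh. This requires using the Beurling-type estimates of Proposition~\ref{prop: Beurling} at every dyadic scale between $\width_n^{-1}$ and $\eps$, combined with the uniform bound $\sup_n M_K < \infty$ already extracted in the proof of Theorem~\ref{thm: convergence of slit strip functions}; everything else is routine Riemann-sum bookkeeping.
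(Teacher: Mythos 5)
Your argument is essentially the paper's own proof: write the discrete inner product as a Riemann sum of the rescaled restrictions, use uniform convergence of the eigenfunctions on all of~$\ccrosssec$ (Theorem~\ref{thm: convergence of strip functions}) and of the pole functions on compacts away from~$\set{\aaa,0,\bbb}$ (Theorem~\ref{thm: convergence of slit strip functions}), and kill the tails near these three points by combining the Beurling decay of~$H^{(n)}$ (Proposition~\ref{prop: Beurling}) with the precompactness bound~\eqref{eq:precompactness}, exactly as the paper does (your explicit $d^{(\beta-1)/2}$ bound is in fact a slightly sharper bookkeeping of the paper's $\oo(|x-c|^{-1/2})$). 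The only deviation is your appeal to ``Riemann boundary values and reflection'' for uniform boundedness of~$F^{(n)}$ at the straight boundaries, which is not established anywhere in the paper and is also unnecessary: the same Beurling-plus-precompactness estimate you use at the slit tip applies verbatim at~$\aaa$ and~$\bbb$, which is how the paper treats all three points uniformly.
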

\begin{proof}
The proofs of all cases are similar, so we will only consider
in detail a typical one,
\begin{align*}
\innprod{\poleL{k}}{\eigf{k'}} \to \innprod{\cpoleL{k}}{\ccffun_{k'}} .
\end{align*}
We again work in the rescaled 
slit-strip~$\frac{1}{\width_n}\dslitstrip$, and now 
use the functions
$x \mapsto \sqrt{\width_n} \, \poleL{k}^{(\width_n)}(x \width_n)$, 
${x \mapsto {\sqrt{\width_n} \, \eigf{k'}^{(\width_n)}(x \width_n)}}$
with piecewise constant interpolation for convenience
(by the equicontinuity estimates, this does not change the 
convergence statements).
The discrete inner product can be written as the integral
of the piecewise constant interpolation
\begin{align*}
\innprod{\poleL{k}^{(\width_n)}}
{\eigf{k'}^{(\width_n)}}
= \; & \re \left( \sum_{x' \in \dintervaldual{\lft_n}{\rgt_n}}
    \poleL{k}^{(\width_n)}(x') \; 
    \overline{\eigf{k'}^{(\width_n)}(x')} \right) \\
= \; & \re \left( \int_{\lft_n/\width_n}^{\rgt_n/\width_n}
    \sqrt{\width_n} \, \poleL{k}^{(\width_n)}(x \width_n) \;
    \overline{\sqrt{\width_n} \, \eigf{k'}^{(\width_n)}(x \width_n)}
    \; \ud x \right) .
\end{align*}

For small~$\epsilon>0$, let
$\ccrosssec^\epsilon := 
[-\half + \epsilon, -\epsilon] \cup [\epsilon, \half-\epsilon]$.
By Theorem~\ref{thm: convergence of strip functions},
we have
\begin{align*}
\sqrt{\width_n} \, \eigf{k'}^{(\width_n)}(x \width_n) 
\to \; & \ccffun_{k'}(x) & 
\text{ uniformly on~$\ccrosssec = \Big[-\half,\half\Big] \ni x$,}
\end{align*}
and by Theorem~\ref{thm: convergence of slit strip functions}, 
we have
\begin{align*}
\sqrt{\width_n} \, \poleL{k}^{(\width_n)}(x \width_n)
\to \; & \cpoleL{k}(x) &
\text{ uniformly on~$\ccrosssec^\epsilon \ni x$.}
\end{align*}
Comparing the discrete inner product with the continuum inner product
\begin{align*}
\innprod{\cpoleL{k}}{\ccffun_{k'}}
= \; & \re \left( \int_{-1/2}^{+1/2}
    \cpoleL{k}(x) \;
    \overline{\ccffun_{k'}(x)}
    \; \ud x \right)
\end{align*}
the uniform convergence shows that the contributions to the integrals
from~$\ccrosssec^\eps$ converge to the desired ones for any~$\epsilon>0$,
and it remains to show that
the contributions from within distance~$\epsilon$ to
the points~$-\half, 0 , \half$ 
are negligible in the limit~$\epsilon \to 0$.

Note that~$|\ccffun_{k'}(x)| \leq 1$ for all~$x \in \ccrosssec$,
and its discrete  
counterpart~$\sqrt{\width_n} \, \eigf{k'}^{(\width_n)}(x \width_n)$
is bounded by an absolute constant, too. It therefore remains to
control the discrete pole functions
$
\sqrt{\width_n} \; 
    \poleL{k}^{(\width_{n})} \big( x \width_n \big)$
and their continuous counterparts. 
But in the proof of 
Theorem~\ref{thm: convergence of slit strip functions}
we saw by Beurling estimates that
the imaginary part of the integral of the square
$H^{(n)} = \iis{F^{(n)}}$ of the pole function
$F^{(n)}(z) = \sqrt{\width_n} \; 
    \PoleL{k}^{(\width_{n})} \big( z \width_n \big)$
tends to zero near the boundaries. Combined with~\eqref{eq:precompactness},
we get
$\big| \sqrt{\width_n} \; 
         \poleL{k}^{(\width_{n})} \big( x \width_n \big) \big|
= \OO(|x - c|^{(\beta-1)/2})$ for $c \in \set{-\half, 0, \half}$.
With these estimates of the two types of functions to
be integrated, we see that the contribution to the
integrals from within distance~$\epsilon$ to
the points~$-\half, 0 , \half$ is in this case~$\OO(\epsilon^{(1+\beta)/2})$.
This proves the desired convergence
$\innprod{\poleL{k}}{\eigf{k'}} \to \innprod{\cpoleL{k}}{\ccffun_{k'}}$.

Note that among the many cases in the statement, the above type of 
reasoning results in the worst bounds for two pole type functions
(no a priori bounds besides the Beurling estimates are available for 
either factors). But even in that case the product in the integrand is
$\OO(|x - c|^{\beta-1})$ for $c \in \set{-\half, 0, \half}$, and the
contributions to the 
integrals from within distance~$\epsilon$ to
the points~$-\half, 0 , \half$ are~$\OO(\epsilon^\beta)$
as~$\epsilon \to 0$,
which is sufficient for the convergence of the inner products.
\end{proof}

\bigskip

\appendix

{\bf Acknowledgments:}
S.P. is supported by a KIAS Individual Grant (MG077201) at Korea Institute 
for Advanced Study. T.A. was affiliated with the Department of Mathematics and Statistics at the American University of Sharjah, and recognizes their financial support.

\end{document}